  \providecommand\BibTeX{{%
    \normalfont B\kern-0.5em{\scshape i\kern-0.25em b}\kern-0.8em\TeX}}}
\newcommand\vldbdoi{10.14778/3551793.3551824}
\newcommand\vldbpages{2692 - 2705}
\newcommand\vldbvolume{15}
\newcommand\vldbissue{11}
\newcommand\vldbyear{2022}
\newcommand\vldbauthors{\authors}
\newcommand\vldbtitle{\shorttitle} 
\newcommand\vldbavailabilityurl{https://github.com/UIC-InDeXLab/FCS}
\newcommand\vldbpagestyle{empty} 
\pgfplotsset{every tick label/.append style={font=\small}}
\algrenewcommand\algorithmicrequire{\textbf{Input:}}
\algrenewcommand\algorithmicensure{\textbf{Output:}}
\newtheorem{theorem}{Theorem} 
\newtheorem{lemma}[theorem]{Lemma}
\newtheorem{definition}{Definition}
\newcommand\abol[1]{\textcolor{orange}{(Abol) #1}}
\newcommand\ian[1]{\textcolor{purple}{(Ian) #1}}
\newcommand{\stitle}[1]{\vspace{2mm}\noindent{\bf #1}:}
\newcommand\eat[1]{}
\newcommand\new[1]{#1\xspace}
\newcommand{\problem}{{\sc FairCS}\xspace}
\newcommand{\cs}{{\sc CSM}\xspace}
\newcommand{\scalable}{{\sc LP-SCALE}\xspace}
\newcommand{\Gee}{\mathcal{G}}
\newcommand{\gee}{\mathbf{g}}
\newcommand{\ee}{\mathbf{e}}
\DeclareMathOperator{\EX}{\mathbb{E}}
\newcommand{\eps}{\varepsilon}
\newcommand{\pr}{Pr}
\newcommand{\sharpP}{\mbox{\#P}}
\newcommand{\NP}{\mathsf{NP}}
\newcommand{\LP}{\mathsf{LP}}
\newcommand{\IP}{\mathsf{IP}}
\begin{document}

\fancyhead{}
\title{Maximizing Fair Content Spread via Edge Suggestion in Social Networks}
%

\author{Ian P. Swift}
\affiliation{\institution{University of Illinois Chicago}}
\email{iswift2@uic.edu}

\author{Sana Ebrahimi}
\affiliation{\institution{University of Illinois Chicago}}
\email{sebrah7@uic.edu}

\author{Azade Nova}
\affiliation{\institution{Google Brain}}
\email{azade@google.com}

\author{Abolfazl Asudeh}
\affiliation{\institution{University of Illinois Chicago}}
\email{asudeh@uic.edu}

\renewcommand{\shortauthors}{I. Swift et al.}

\begin{abstract}
Content spread inequity is a potential unfairness issue in online social networks, disparately impacting minority groups.
In this paper, we view friendship suggestion, a common feature in social network platforms, as an opportunity to achieve an equitable spread of content.
In particular, we propose to suggest a subset of potential edges (currently not existing in the network but likely to be accepted) that maximizes content spread while achieving fairness.
Instead of re-engineering the existing systems, our proposal builds a fairness wrapper on top of the existing friendship suggestion components.

We prove the problem is $\NP$-hard and inapproximable in polynomial time unless $\mathsf{P}=\NP$.
Therefore, allowing \new{relaxation of the fairness constraint}, we propose an algorithm based on $\LP$-relaxation and randomized rounding with fixed approximation ratios on fairness and content spread. We provide multiple optimizations, further improving the performance of our algorithm in practice.
Besides, we propose a scalable algorithm 
\new{that dynamically adds subsets of nodes, chosen via iterative sampling, and solves smaller problems corresponding to these nodes}. Besides theoretical analysis, we conduct comprehensive experiments on real and synthetic data sets. 
Across different settings, our algorithms found solutions with near-zero unfairness while significantly increasing the content spread. Our scalable algorithm could process a graph with half a million nodes on a single machine, reducing the unfairness to around 0.0004 while lifting content spread by 43\%.

\end{abstract}

\newcommand{\Snote}[1]{\textcolor{brown}{(Sana: #1)}}
\maketitle

\pagestyle{\vldbpagestyle}
\begingroup\small\noindent\raggedright\textbf{PVLDB Reference Format:}\\
\vldbauthors. \vldbtitle. PVLDB, \vldbvolume(\vldbissue): \vldbpages, \vldbyear.\\
\href{https://doi.org/\vldbdoi}{doi:\vldbdoi}
\endgroup
\begingroup
\renewcommand\thefootnote{}\footnote{\noindent
This work is licensed under the Creative Commons BY-NC-ND 4.0 International License. Visit \url{https://creativecommons.org/licenses/by-nc-nd/4.0/} to view a copy of this license. For any use beyond those covered by this license, obtain permission by emailing \href{mailto:info@vldb.org}{info@vldb.org}. Copyright is held by the owner/author(s). Publication rights licensed to the VLDB Endowment. \\
\raggedright Proceedings of the VLDB Endowment, Vol. \vldbvolume, No. \vldbissue\ %
ISSN 2150-8097. \\
\href{https://doi.org/\vldbdoi}{doi:\vldbdoi} \\
}\addtocounter{footnote}{-1}\endgroup

\ifdefempty{\vldbavailabilityurl}{}{
\vspace{.3cm}
\begingroup\small\noindent\raggedright\textbf{PVLDB Artifact Availability:}\\
The source code, data, and/or other artifacts have been made available at \url{\vldbavailabilityurl}.
\endgroup
}

\section{Introduction}\label{sec:intro}
Online Social networks have become an inseparable aspect of modern human life and a prominent medium for human interactions at scale.
These networks play an invaluable role in removing the physical communication barriers and enabling the spread of information across the world in real-time. However, like other recent technologies, the benefits of the social networks have not been for free, as they introduce new social challenges and complications that did not exist before. From privacy issues to misinformation spread and even to social media addiction, we keep hearing about the concerns and challenges specific to the nature of social networks.

Machine bias has recently been a focal concern in the social aspects of computer science and (big) data technologies~\cite{fairmlbook,asudeh2020fairly}, triggering extensive effort, including data pre-processing techniques~\cite{kamiran2012data,feldman2015certifying,calmon2017optimized,salimi2019interventional},  algorithm or model modification~\cite{kamishima2012fairness,zemel2013learning,zafar2015fairness}, and output post-processing techniques~\cite{kamiran2010discrimination,hardt2016equality,woodworth2017learning} to address these concerns.
Social networks are not an exception when it comes to unfairness issues~\cite{hargreaves2020filters,olteanu2019social,hargreaves2019fairness}.

Inequity in information spread across social networks is one of the potential unfairness challenges, disparately impacting minority groups~\cite{beilinson2020fairness, SNBias1,rahmattalabi2020fair,tsang2019group,yu2017fair,FarnadiFIM}.
While similar issues have been reported for cases such as biased advertisement over social networks~\cite{speicher2018potential}, fairness in content spread becomes critical when related to health-care~\cite{yonemoto2019gatekeeper}, news and valuable information spread~\cite{starinsky2021books}, job opportunities~\cite{yaseen2016influence}, etc. 

Despite its importance, fairness in content spread over social networks has been relatively less studied. Existing works take two directions to address the problem: 
(i) fairness-aware influence maximization (IM)~\cite{FarnadiFIM,rahmattalabi2020fair,tsang2019group, yu2017fair}, which aims to select the content spread seed nodes in a way that a combination of fairness and influence spread is maximized,
(ii) \cite{Masrour_Wilson_Yan_Tan_Esfahanian_2020, laclau2021all} which consider fairness in terms of topological properties of the network e.g., promoting inter-group connections in a network in order to alleviate unfairness caused by strong alignment within groups. 

In this paper, we consider friendship suggestion, a popular feature across social network platforms, where a subset of potential edges that currently do not exist in the network is suggested to the users.
We view this as an opportunity to achieve an equitable spread of content in social networks.
That is, instead of selecting a subset of potential edges that only maximize content spread \new{\cite{chaoji-cs, yu-contentspread, yang2019marginal}}, we aim to consider fairness when suggesting the edges.
Our proposal is different from the existing work on two angles.
First, unlike the works on IM that focus on selecting the \new{fair influential} seed nodes \new{with a fixed graph}, our focus is on {\em fair edge selection} \new{with fixed content source nodes}.
Second, 
unlike works that aim to directly make the {\em topology} of the network diverse and independent from the demographic information, we consider suggesting edges to achieve \new{a new objective:} fairness in {content spread}.

Existing social networks often use sophisticated algorithms for friendship suggestions. 
Re-engineering the existing algorithms are costly and perhaps impractical.
Instead, we propose \new{{\em a fair content spread component} that works with any possible friendship suggestion method}, taking their output as the set of potential edges and selecting a subset for the suggestion to achieve fairness.

We introduce Fair Content Spread Maximization (\problem) problem, where given a set of candidate edges, our goal is to select a subset such that: it contains a fixed number of incident edges to each node ($k$ friend suggestions to each node), satisfies fairness (defined on the probability that different demographic groups receive content~\cite{becker2020fairness,BEAMAN2018147,Masrour_Wilson_Yan_Tan_Esfahanian_2020}), and maximizes content spread. \new{To the best of our knowledge, we are the first work to consider fairness in this context.} Unfortunately, not only is the problem $\NP$-hard, but also impossible to approximate in $\mathsf{P}$ time unless $\mathsf{P}=\NP$.
By allowing approximation on content spread and \new{relaxation of the fairness constraint},
we propose a non-trivial randomized approximation algorithm for the \problem problem based on $\LP$-relaxation and randomized rounding~\cite{motwani1995randomized}.
Our algorithm provides constant approximation ratios on the content spread and fairness of its output. \new{We are the first Linear Program designed for content spread with approximation guarantees on fairness.}
We propose several optimizations that further help our algorithm be efficient in practice.
Having to solve an $\LP$, our original algorithm lacks to scale to very large settings.
To resolve this {issue}, 
we design a scalable algorithm based on \new{dynamically increasing the nodes via} sampling. Instead of solving one expensive-to-solve $\LP$ that covers the entire problem space, the algorithm \new{iteratively solves problems over subsets of nodes} with reasonably small sizes.

Our experiments confirm that our algorithms not only could find solutions with {\em near-zero unfairness} but, due to the practical effectiveness of $\LP$-relaxation and randomized techniques, outperformed \new{all baselines from a large breadth of related research \cite{yu-contentspread, chaoji-cs, yang2019marginal, zhu2021minimizing} in content spread.}
We also observed our algorithm \new{achieves} comparable results to the Optimal Brute Force method on very small graphs.
Our scalable algorithm could scale to within an order of {\em half a million nodes}, on a single work station in a reasonable time, \new{while none of the baselines we used were able to compute half a million nodes in under 24 hours,} confirming the effectiveness of our \new{scalable} approach. On half a million nodes, we observed a decrease from an initial unfairness of 3.9\% to 0.04\%, while the content spread increased by 43\%. 
\new{
We were able to run our original algorithm on a largest setting of 4000 nodes, achieving a lift of 57.99\% on content spread and an {\em unfairness of less than 0.0001}. Alternatively, our scalable algorithm still achieved a lift of 45.07\% and a unfairness of 0.0132, and also had a 489x speedup over our other algorithm. At 4000 nodes, no baseline could produce a higher lift, a lower unfairness, or had a lower runtime than our scalable solution.}

\stitle{Summary of Contributions}
\begin{itemize}[leftmargin=*]
    \item (Section \ref{sec:pre}) We formally introduce the \problem problem to {choose} a subset of candidate edges {which maximizes} content spread while being fair. \item (Section \ref{sec:pre:complexity}) We prove the problem is $\NP$-hard, and is inapproximable in polynomial time unless $\mathsf{P}=\NP$.
    \item (Section \ref{sec:approx}) \new{Approximating on content spread while relaxing the fairness constraint}, we provide a randomized approximation algorithm for the problem based on $\LP$-relaxation and randomized rounding, guaranteeing an approximation ratio of $(1-\ee^{-1})$ on content spread and $\frac{2\ee}{1-\ee^{-1}}$ on fairness.
    \item (Section \ref{sec:opt}) We provide multiple optimizations to improve the performance of our algorithm in practice.
    \item (Section \ref{sec:scaling}) We provide a scalable algorithm that,
    instead of solving one very-large $\LP$,
    \new{dynamically adds sampled nodes to the graph allowing the problem to be solved in multiple computationally easier problems}.
    \item (Section \ref{sec:exp}) Besides theoretical analyses, we conduct experiments on real data sets. Our experiments verify the effectiveness of our algorithms and their efficiency\new{, showing that our LP relaxation algorithm outperforms implementations of other algorithms in all but runtime, and that our scalable algorithm produces a fair result of less than $0.05\%$ and does so faster than all implementations even in very large settings.}
\end{itemize}

\section{Preliminaries}\label{sec:pre}

\subsection{Social Networks Model}
\stitle{Network Model} A social network is a model representing relationships between users in form of a graph $G(V,E)$, where $V$ is the set of users, and $E$ is the set of edges, either directed or undirected~\cite{Kleinberg2010, wasserman_faust_1994, scott-sage-2011}. 
Directed relationships represent one way connections such as citations, while undirected relationships represent shared connections such as ongoing correspondence. 
Our work has \new{targeted} {both} directed and undirected social networks.
We assume the nodes are associated with at least one sensitive attribute, such as {\tt \small race}, {\tt \small sex}, or {\tt \small age-group}
used for defining demographic groups, or simply groups\footnote{
While our motivation in this paper is on fairness over demographic groups, our techniques are not limited to those. In particular, the groups can be defined over any attributes of interest such as {\tt occupation} or {\tt political-affiliation}.}, as we shall explain in Section \ref{sec:pre:fairness}.

\stitle{Candidate Edges (Friendship)} Suggesting a set of {\em potential relations} that currently do not exist in the social network but a user may accept is an active area of research. Different techniques and algorithms (white or black-box) are used to find such a candidate set. A popular strategy is ``{\em friends of friends}''~\cite{yu-contentspread}, where only users that are at most within a two-hop distance from a user are candidates for friend suggestion. The intuition behind this model is that users who do not share a friend are unlikely to be friends. More sophisticated methods exist: candidate edges can be determined using an implicit social graph~\cite{roth-implicit}, \new{using overlapping ego-nets \cite{epasto2015ego}, by augmenting the network and detecting resulting communities \cite{hamid2014cohesion}, or with the addition of a peripheral system by analysis of the lifestyle of potential matches \cite{wang2014friendbook}}. Our work only expects the existence of an input candidate edge set, which can be determined by any of these methods. We refer to the set of candidate edges as $E_c$ where, $\forall e \in E_c, e \notin E$.

\stitle{Content Nodes} Social networks enable users to share content at an unprecedented scale. 
Information and marketing~\cite{tang-information-propagation}, user content such as images~\cite{Cha-Flickr}, opinions~\cite{cercel-opinion}, and even happiness~\cite{Fowlera2338} are examples of the content being shared across users in a cascading manner. While often there is a clear source from which the content originated~\cite{Cha-Flickr,gruhl2004information,lerman2006social}, some existing work detects content orignality~\cite{RajapakshaOriginator, adar2005tracking}, or targeted nodes for optimizations such as influence maximization ~\cite{tang-information-propagation,kempe-infmax}.
In this paper, we assume that, by any of the prior work, 
content source nodes are selected as a subset of $V_s \subset V$. 


\subsection{Cascade Model}
\label{sec:cascade}
In social networks, content spread follows a cascade model. Each neighbor recursively shares received info within its own neighborhood, and content propagates through the network to a wide user population.
Specifically, in {\em independent cascade} (IC)~\cite{kempe-infmax} model, the content nodes are active in a timestep $t_0$. Then at timestep $t_{i+1}$, each node $v_x$ that was activated in timestep $t_i$ has a probability $P_{x,y}$ of activating each unactivated node $v_y$ for which there is an edge $e_{x,y} \in E$. 
Computing the expected propagation using IC is known to be \sharpP-hard, and accurately estimating the content spread requires a combinatorially large time~\cite{chaoji-cs,Chen-Scalable}. Thus, models such as {\em Maximum Influence Path}~\cite{Chen-Scalable}, and {\em Restricted Maximum Probability Path}~\cite{chaoji-cs} have been proposed that (i) require polynomial-time computation and (ii) provide an approximation of IC.

{\em Maximum Influence Path} (MIP)~\cite{Chen-Scalable} takes into consideration the single path from content nodes to each other node for which the path has the maximum propagation probability of reaching the destination node.
Given a content node $v_j\in V_s$ and a node $v_i\in V\backslash V_s$, consider a path $q$ from $v_j$ to $v_i$, such that $\forall e\in q,~e\in E\cup E_p$, where $E_p\subseteq E_c$.
Given the probability $P_e$ on every edge $e$, the probability of content spread from a source to a destination along a path $q$ is $Pr(q) = \prod_{e \in q}P_e$.
Let $Q_{i}$ be the set of paths from a content node to the node $v_i\in V\backslash V_s$.
Following MIP, the probability of content reaching $v_i$ is equal to the probability that the node was reached along the path in $Q_{i}$ with the highest probability. That is, $\forall{v_i \in V\backslash V_s}$:
\begin{align} \small
    F_{MIP}(v_i, E_p) = \max_{q \in Q_i} Pr(q)
\end{align}

{\em Restricted Maximum Probability Path} (RMPP)~\cite{chaoji-cs} extends MIP by using the included candidate set, such that it considers only paths that include at most one edge from that set.
Under RMPP, $Q_{i}$ is the set of paths from a content node to the node $v_i\in V\backslash V_s$, such that $\forall q\in Q_i$ there exists at most one edge $e\in E_p$. $F_{RMPP}(v_i, E_p)$ is the expected content received by a node $v_i$ considering $E_p$ as the suggested edges\footnote{Unless mentioned otherwise, we simplify the notation $F_{RMPP}(.)$ to $F(.)$.}.
Following prior work~\cite{chaoji-cs,Chen-Scalable}, we use RMPP to design our randomized algorithm in Section \ref{sec:approx}.

\begin{table}[!t]
    \caption{Table of Notations}
    \centering \small
    \begin{tabular}{|c|c|}
        \hline
         {\bf Symbol}&{\bf Description}  \\ \hline
         $G(V,E)$& The social network with nodes $V$ and edges $E$\\ \hline
         $E_c$ & Set of candidate edges $\notin E$ \\ \hline
         $V_s$ & Content source nodes \\ \hline
         $E_p$ & Suggested candidate edges $\subseteq E_c$\\ \hline
         $k$ & Number of edges per node to be suggested\\ \hline
         $\Gee$ & The set of groups\\ \hline
         $\gee_i$ & The $i$-th group in $\Gee$ \\ \hline
         $\mathcal{N}(v_i)$ & Candidate edges $e_j\in E_c$ incident to the node $v_i\in V$ \\ \hline
         $p$ & Edge activation probability \\ \hline
         $\eps$& Content Spread Disparity (Unfairness) -- Equation~\ref{eq:fairness}\\ \hline
    \end{tabular}
    \label{tab:notations}
\end{table}

\stitle{Edge Suggestion to Maximize Content Spread}
Considering content nodes and a cascade model for information diffusion, a more sophisticated approach to friend suggestion is maximizing content spread in social networks~\cite{chaoji-cs}.
Even though candidate set $E_c$ are those friendships that are likely to be accepted by a user~\cite{yu-contentspread, roth-implicit, Feng-Personality}, there may be an overwhelming number of candidates for each user. Thus, a subset of $k$ edges is to be selected from the candidate set. We call the chosen set of edges, that \new{is} a subset of $E_c$ and has at most $k$ incident edges to each node, the set of {\em suggested edges} $E_p \subseteq E_c$. In~\cite{chaoji-cs}, the content spread maximization problem is:

\begin{definition}[Content Spread Maximization (\cs) Problem]\label{def:spread}
\small
Given a social network $G(V,E)$ 
and a set of candidate edges $E_c$,
let $\mathcal{E}_p \subseteq 2^{E_c}$ be the collection of subset of edges from $E_c$ such that each set $E_p\in \mathcal{E}_p$ contains at most $k$ incident edges on each node.
Find the set $E_p\in \mathcal{E}_p$ that maximizes the content spread.

\begin{equation}\small
    \hspace{24mm}\max_{E_p \in \mathcal{E}_p}~ \sum_{v_i \in V}F(v_i,E_p) 
\end{equation}
\end{definition}

\subsection{Fairness Model}\label{sec:pre:fairness}
Following the literature on fairness in social networks~\cite{becker2020fairness,BEAMAN2018147,Masrour_Wilson_Yan_Tan_Esfahanian_2020}, and other similar settings~\cite{asudeh2019designing,rahmattalabi2020exploring,accinelli2021impact,hannak2017bias,asudeh2020maximizing}, our fairness model is based on the notion of {\bf equity}.
In a social network, users can have {sensitive attributes} such as {\tt \small gender} and {\tt \small race} with values such as 
\{{\tt \small female, male}\} and \{{\tt \small black, white}\} that should be considered for fairness.
We identify demographic groups as the intersection of domain values of the sensitive attributes. For example, $\{${\tt \small white-male, white-female, black-male, black-female}$\}$ can be the groups defined as the intersection of {\tt \small race} and {\tt \small gender}. 
We use the notation $\Gee=\{\gee_1, \ldots, \gee_c\}$ to denote the set of groups.


Existing work highlights unfairness in social networks as 
different groups receive content with different probabilities~\cite{BEAMAN2018147}. 
We consider the notion of group fairness in social networks to assure that users from different groups receive content in an equitable manner.
We define fairness in content spread as Definition~\ref{def:fairness}.
%

\small
The content spread in a social network is fair, if 
for each pair of groups $\gee_i$ and $\gee_j$ in $\Gee$, the expected total content received by group $\gee_i$ proportional to its size is equal to the one by $\gee_j$ proportional to its size.
That is, $\forall \gee_i, \gee_j\in \Gee$:
\begin{equation*}
\frac{1}{|\gee_i|}\sum_{v_\ell \in \gee_i}F(v_\ell, E_p) = \frac{1}{|\gee_j|}\sum_{v_\ell \in \gee_j}F(v_\ell,E_p)
\end{equation*}
%
%
%
We introduce, the problem to maximize fair content spread.


\small
Given a social network $G(V,E)$, 
a set of candidate edges $E_c$, and the demographic groups $\Gee=\{\gee_1,\cdots,\gee_c\}$, 
let $\mathcal{E}_p \subseteq 2^{E_c}$ be the collection of subset of edges from $E_c$ such that each set $E_p\in \mathcal{E}_p$ contains at most $k$ incident edges on each node and $E_p$ satisfies fairness in content spread (Fair Content Spread).
Find the set $E_p\in \mathcal{E}_p$ that maximizes the content spread.

\subsection{Problem Complexity and Inapproximability}\label{sec:pre:complexity}

\begin{lemma}\label{lem:complexity}
The Fair Content Spread Maximization (\problem) Problem is $\NP$-hard.\footnote{The proofs of Lemma~\ref{lem:complexity} and Lemma~\ref{lem:3} are included in the appendix.}
\end{lemma}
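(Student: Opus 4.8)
The plan is to prove $\NP$-hardness by a reduction from a known $\NP$-hard problem. The natural target is a problem that already captures the tension between a budget constraint (here, the ``at most $k$ incident edges per node'' constraint) and the combinatorial objective of maximizing coverage/spread. Since \cs (the fairness-free version of Definition~\ref{def:spread}) is the direct relaxation of \problem obtained by dropping the fairness constraint, the cleanest route is to reduce from a problem whose hardness is well established for coverage-style objectives --- for instance \textbf{Set Cover} or \textbf{Maximum Coverage}, both of which reduce naturally to influence/spread maximization under the IC-style models. I would first establish (or cite) that the underlying content-spread optimization with the per-node degree budget is already hard, and then show how to encode the fairness requirement so that the reduction carries through.

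First I would construct a gadget graph $G(V,E)$ with candidate edges $E_c$ and a single content source in $V_s$ such that choosing $E_p \subseteq E_c$ under the cardinality constraint $k$ corresponds exactly to choosing a collection of sets in a Set Cover / Maximum Coverage instance, with $F(v_i, E_p)$ acting as an indicator of whether element $v_i$ is ``covered'' by the chosen edges. Concretely, I would set edge probabilities so that the RMPP spread $F(v_i, E_p)$ of a target node evaluates to a fixed positive value whenever at least one selected candidate edge reaches it (mimicking coverage) and to its baseline value otherwise; the sum $\sum_i F(v_i, E_p)$ in the objective then tracks the number of covered elements. The budget $k$ per node translates into the number of sets one is allowed to pick in the coverage instance.

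The key step specific to \emph{this} problem --- and the one I expect to be the main obstacle --- is handling the \textbf{fairness constraint} of Definition~\ref{def:fairness} without trivializing the reduction. Because \problem forces the per-group normalized spread to be equal across all groups $\gee_i \in \Gee$, a naive gadget might be infeasible or might make the fairness constraint vacuously satisfiable. I would address this by padding the construction with auxiliary ``balancing'' nodes and groups whose spread can be tuned independently of the coverage gadget, so that any coverage solution can be extended to a fairness-satisfying $E_p$, while the value of the spread objective is preserved. The delicate part is arguing that the balancing apparatus neither inflates the achievable objective beyond what the coverage instance allows nor interferes with the one-to-one correspondence between covers and feasible edge sets; this requires carefully separating the ``payload'' nodes (which encode the coverage instance) from the ``fairness-padding'' nodes and showing the budget $k$ keeps the two decoupled.

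Finally I would verify the two directions of the reduction: a size-$\kappa$ cover yields a feasible fair $E_p$ with objective value at least the target, and conversely any fair $E_p$ achieving the target objective induces a cover of size at most $\kappa$. Since the gadget is polynomial in the size of the coverage instance and all edge probabilities and group sizes are specified explicitly, the reduction runs in polynomial time, establishing $\NP$-hardness of \problem. I would note that the same construction, once spread values are chosen to make the objective an exact count, also sets up the inapproximability claim (proved separately), since Maximum Coverage is hard to approximate beyond $1 - \ee^{-1}$ and the fairness feasibility layer only constrains the solution space further.
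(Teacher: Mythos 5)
Your high-level instinct---reduce from a known hard coverage-style problem and then add machinery so that the fairness constraint can always be met---is reasonable, but the proposal has a genuine gap exactly where the difficulty of this lemma lies: the ``balancing'' gadget is never constructed, and the abstract description you give would not obviously work. The fairness constraint of \problem is an \emph{exact equality} between the size-normalized spreads of all groups, and it must hold for whatever edge set $E_p$ is chosen. So your padding nodes cannot be tuned offline: their group's normalized spread must be matchable, via candidate-edge selections inside the gadget, to \emph{every} value the payload part can realize---a set of values that varies combinatorially with which cover is picked and through which path lengths coverage occurs. You assert that such a decoupled, tunable gadget exists but give no construction, and neither direction of the reduction (every cover extends to a fair $E_p$; every fair $E_p$ meeting the target induces a cover) can be verified without it. Note also that reducing from Set Cover re-proves something you do not need to prove: the fairness-free problem \cs (Definition~\ref{def:spread}) is already known to be $\NP$-complete (the paper cites this from prior work), so the only content this lemma actually requires is the fairness-handling step you left open.

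The paper closes precisely this gap with a one-step trick that your ``balancing apparatus'' is implicitly groping toward: it reduces directly from \cs and takes the balancer to be an \emph{exact duplicate} of the instance. Given a \cs instance on $G'(V',E')$ with candidate edges $E'_c$, sources $V'_s$, per-node budget $k'$, and decision value $T'$, build $G$ as the disjoint union of $G'$ and an identical copy $G''$ (duplicating edges, candidate edges, and sources), put all of $V'$ in group $\gee_1$ and all of $V''$ in group $\gee_2$, keep $k=k'$, and set the decision value to $2T'$. Symmetry guarantees that any solution on $G'$ can be mirrored on $G''$, giving exactly equal group spreads (so fairness is never an obstruction), and conversely a fair solution of value $2T'$ restricts to a value-$T'$ solution on $G'$, since the two copies are disconnected and have equal size. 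If you replace your unspecified gadget with this duplication, your argument essentially becomes the paper's; without it, the proof does not go through.
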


A difference between the content spread maximization (\cs) and the \problem problem is that {for any given problem instance}, \cs always has a valid solution, but \problem problem may not have a valid solution for \new{a particular problem instance}. In proof of Theorem~\ref{th:inapprox} we show it is $\NP$-complete to determine if an instance of \problem problem has a valid solution, even when there are only two groups $\gee_1$, $\gee_2$. It consequently proves the inapproximability result for \problem problem as in Theorem~\ref{th:inapprox}. Following this negative result, none of the existing approximation algorithms for the \cs problem (including the greedy approach) works for \problem.
\begin{theorem}\label{th:inapprox}
There exists no polynomial approximation algorithm for the \problem problem, even when there are only two groups $\gee_1$ and $\gee_2$, unless $\mathsf{P}=\NP$.
\end{theorem}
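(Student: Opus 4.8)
The plan is to prove inapproximability by establishing a stronger intermediate claim: that deciding whether a \problem instance admits \emph{any} feasible solution is $\NP$-complete, even when restricted to two groups $\gee_1$ and $\gee_2$. Once this is in hand, inapproximability follows immediately, because any polynomial-time approximation algorithm with a finite ratio would, in particular, have to return \emph{some} valid $E_p$ whenever one exists (and signal infeasibility otherwise); thus it would decide feasibility in polynomial time, forcing $\mathsf{P}=\NP$. This reduction-to-feasibility strategy is the standard route for this kind of ``no multiplicative guarantee is possible'' result, and it sidesteps any need to reason about the quality of spread — the obstacle is purely the fairness constraint, which is an equality.

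First I would fix an appropriate $\NP$-complete source problem to reduce from. The natural candidates are \textsc{Subset-Sum} or \textsc{Partition}, since the fairness constraint in Definition (Fair Content Spread) is an \emph{equality} between two weighted sums, $\frac{1}{|\gee_1|}\sum_{v_\ell \in \gee_1}F(v_\ell,E_p) = \frac{1}{|\gee_2|}\sum_{v_\ell \in \gee_2}F(v_\ell,E_p)$, which has exactly the flavor of forcing a sum to hit a prescribed target. I would encode the numbers of a \textsc{Partition}/\textsc{Subset-Sum} instance into the contributions $F(v_\ell,E_p)$ that candidate edges can induce on nodes of each group. The combinatorial freedom of the reduction lives in the choice of which candidate edges enter $E_p$ subject to the at-most-$k$-incident-edges constraint; I would design gadgets so that selecting a candidate edge corresponds to placing the associated number on the $\gee_1$ side or the $\gee_2$ side (or including it in the chosen subset), and so that the per-node degree cap $k$ enforces a clean one-of-two (or in/out) choice per element.

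The central design requirement is that the RMPP spread values $F(v_\ell,E_p)$ be \emph{controllable and additive enough} to simulate integer arithmetic. Because RMPP takes the maximum-probability path (with at most one candidate edge), I would build the gadget so each element-node is reached only through a single designated candidate edge whose inclusion contributes a fixed, precomputable amount to its group's weighted total, with background graph edges arranged so no unintended path interferes. Keeping the two sides on the same scale is handled by the $\frac{1}{|\gee_i|}$ normalization, which I can neutralize by making the two groups equal in size (or by rescaling the encoded numbers). Then a feasible $E_p$ — one meeting the degree cap \emph{and} the equality — exists if and only if the \textsc{Partition} instance is a yes-instance.

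The hard part will be controlling the spread function precisely: RMPP's max-over-paths and single-candidate-edge restriction make $F$ nonlinear and potentially entangled across nodes, so I must guarantee that the contribution of each chosen candidate edge is isolated and exactly equal to its intended numeric value, with no cross-talk from alternative paths or from simultaneously selected edges. I would address this by using long, low-probability ``stub'' paths for the ambient graph so that the designated candidate edge is always the maximum-probability route to its element-node, and by separating gadgets so their paths are vertex-disjoint. The remaining care is verifying that the degree constraint encodes the choice faithfully and that the whole construction is polynomial in the size of the \textsc{Partition} instance; once feasibility is shown $\NP$-complete, the inapproximability statement of Theorem~\ref{th:inapprox} follows as argued above.
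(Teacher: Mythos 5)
Your high-level strategy is exactly the paper's: reduce inapproximability to the $\NP$-completeness of \emph{deciding feasibility}, observing that any polynomial approximation algorithm must produce some valid $E_p$ whenever one exists (and validity is checkable in polynomial time), so it would decide feasibility. The paper implements this via a reduction from Vertex Cover, whereas you propose \textsc{Partition}/\textsc{Subset-Sum}; the difference in source problem is where your proposal has a genuine gap. Under RMPP with a uniform activation probability $p$, the spread contribution of any gadget is quantized: it is a sum of terms $n_d\, p^d$, where $n_d$ is the number of nodes whose shortest distance becomes $d$. To make a candidate edge contribute an amount proportional to an element value $a_i$, the natural construction uses $\Theta(a_i)$ nodes, i.e., a \emph{unary} encoding. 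But \textsc{Partition} and \textsc{Subset-Sum} are only weakly $\NP$-hard: they are solvable in pseudo-polynomial time, so a reduction whose graph size grows with the \emph{values} $a_i$ (rather than their bit length) proves nothing. This is precisely the step you defer to at the end (``verifying that the whole construction is polynomial in the size of the \textsc{Partition} instance''), and as sketched it fails. To repair it you would need either a positional encoding of the $a_i$ across distances (e.g., fix $p=\tfrac12$ and place $b_{ij}\in\{0,1\}$ nodes at distance $d_0+j$ behind the candidate edge, so the contribution is $p^{d_0}\sum_j b_{ij}2^{-j}$, giving binary encoding with polynomially many nodes), or a reduction from a \emph{strongly} $\NP$-hard problem such as 3-\textsc{Partition}, where unary encodings are legitimate.

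Two further points need tightening. First, the degree cap $k$ is only an upper bound, so it cannot ``enforce a clean one-of-two choice per element'' — nothing in \problem forces any edge to be selected; the in/out choice must instead be driven by the fairness equality itself, which also requires arranging a baseline disparity so that the empty selection is infeasible (otherwise feasibility is trivial and the reduction collapses). Second, it is worth noting what the paper's Vertex Cover construction buys by comparison: it avoids numeric encodings entirely by exploiting only the distance structure (contributions $p$, $p^2$, $p^3$), so that the fairness equality forces any valid solution to select exactly the edges from $v_s$ to a vertex cover of size $k$; your arithmetic-gadget route can be made to work, but it carries the extra burden of isolating contributions and controlling magnitudes that the paper's construction sidesteps.
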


\begin{figure}[!t]
    \centering
    \includegraphics[width=.32\textwidth]{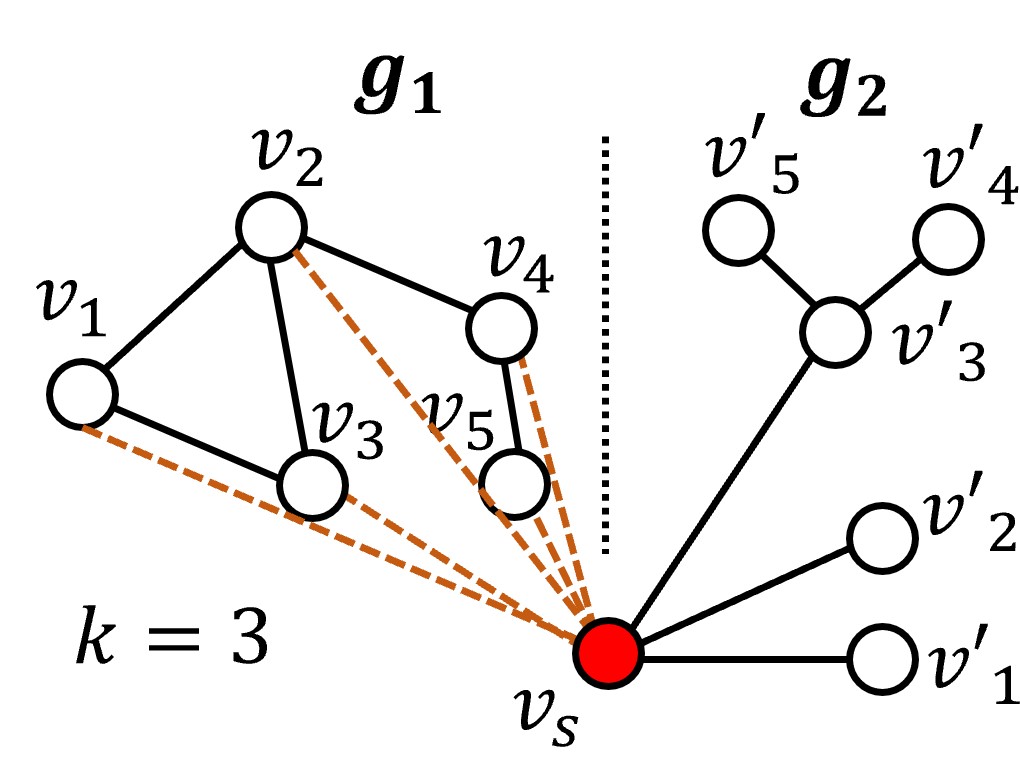}
    \caption{\small (Proof of Theorem~\ref{th:inapprox}) Illustration of a vertex-cover instance (left graph, $k=3$) to an instance of \problem problem. Candidate edges $E_c$ are specified in dashed brown.}
    \label{fig:proof-1}
\end{figure}

\begin{proof}
Here, we prove that an approximation algorithm for the \problem problem 
provides an exact solution for (the decision version of) the Vertex-Cover problem (VC). As a result, since VC is $\NP$-complete~\cite{clrs}, the polynomial approximation algorithm for \problem does not exist, unless $\mathsf{P}=\NP$.

Consider an instance of the VC problem, where given the graph $G(V,E)$, a value $k$, and $|V|=n$, the objective is to determine if there exists a subset of $k$ vertices that \new{are at least one endpoint of every edge} in $E$. For example, in Figure~\ref{fig:proof-1}, let the (induced sub-)graph with vertices $v_1$ to $v_5$ and the edges between them (the left graph) \new{be the input to a VC problem}. 
Given $k=3$, the answer to the VC is yes, since there are 3 vertices \new{(e.g. $\{v_1, v_2,v_5\}$)
that includes at least one endpoint of every edge of the graph}.

We transform a given instance of the VC to an instance of \problem problem over the graph $G''(V'',E'')$ as following:
\begin{itemize}[leftmargin=*]
    \item Add $V$ to $V''$ and $E$ to $E''$. Also, for each vertex $v\in V$ add a vertex $v'$ to $V''$.
    \item Let $\Gee=\{\gee_1,\gee_2\}$, where every vertex $v\in V$ belongs to $\gee_1$ and every vertex $v'$ belongs to $\gee_2$.
    \item Add a single content node $v_s$ to $V_s$. Add $k$ edges between $v'_1,\cdots,v'_{k}$ and $v_s$ (in Figure~\ref{fig:proof-1}, connected $v'_1, v'_2, v'_3$ to $v_s$). Add $(n-k)$ edges between $v'_{k+1},\cdots,v'_n$ and $v'_k$ (in Figure~\ref{fig:proof-1}, connected $v'_4, v'_5$ to $v'_3$). Add $n$ potential edges between $v_1,\cdots,v_n$ and $v_s$ to $E_c$. 
\end{itemize}
Now, as shown below, using $k''=k$ as the max incident edges suggested to each node, an approximation solution for the \problem problem on $G''$, provides an exact solution for VC on $G$.

\new{
To see why, consider the {universe} of valid solutions for this instance of the \problem problem.
A solution is valid if it selects at most k incident edges to each node and satisfies fairness constraints {\em but is not necessarily optimal} (i.e., it may not have the maximum content spread).
As we shall show in Equations~\ref{eq:proof:1} and \ref{eq:proof:2},
in order to satisfy the fairness constraints in the constructed \problem problem, 
any valid solution (optimal or approximate) {\em must} select 
$k$ edges between the nodes of a vertex-cover (of size $k$) and $v_s$.} 
\new{
The expected content received by $v'_1,\cdots,v'_n$ is:}
\begin{align}\label{eq:proof:1} \small
\nonumber    \EX\big[CS_{\gee_2}\big] &= \sum_{v'_i}F(v'_i, E_p) = \sum_{i=1}^kF(v'_i, E_p) + \sum_{i=k+1}^nF(v'_i, E_p) \\
    &= k\,p+ (n-k)\,p^2 
\end{align}
where $p$ is the edge activation probability.
To satisfy the fairness constraint, the expected content received by the two groups should be equal.
That is, $\EX\big[CS_{\gee_1}\big]=\EX\big[CS_{\gee_2}\big]$.
This happens when only the edges from a vertex-cover of size $k$ to $v_s$ are selected. In this situation, the expected content received by $v_1,\cdots,v_n$ is:

\begin{align}\label{eq:proof:2} \small
\nonumber    \EX\big[CS_{\gee_1}\big] &= \sum_{v_i}F(v_i, E_p) = \sum_{(v_i,v_s)\in E_p}F(v_i, E_p) + \sum_{(v_i,v_s)\notin E_p}F(v_i, E_p) \\
    &= k\,p+ (n-k)\,p^2 
\end{align}

\new{
So far, we proved that every VC of size $k$ corresponds to a valid solution.
Next, we prove by contradiction that every valid solution also corresponds to a VC of size $k$.
Suppose there exists a valid solution that does not correspond to a VC of size $k$. In this situation, there should exist at least one node in $\gee_1$ (one of the end-nodes of an uncovered edge) with a distance of at least 3 to $v_s$.
This reduces the expected content received by $\gee_1$ by at least $p^2-p^3$.
To resolve this reduction, the valid solution requires to select more than $k$ edges connected to $v_s$, which has contradiction with the requirement that at most $k$ incident edges on each node should be selected.
The one-to-one mapping between the VC of size $k$ and the valid solutions complete our proof: a polynomial approximation algorithm, finding a valid solution for \problem, solves (the decision version of) VC; hence, such an algorithm does not exist unless $\mathsf{P}=\NP$.
}
\end{proof}

Next, we provide an integer programming formulation for solving the problem exactly. This formulation, though not practical, helps us in designing our solution in \S~\ref{sec:approx:lpr}.

\stitle{Warm-up Integer Programming}
The optimal solution to \problem can be obtained by formulating it as an integer program ($\IP$). 

{\small
\begin{align*}
    \mbox{max}\mbox{imize}\hspace{5mm}&\sum_{v_i\in V}{x_i} \\
    \mbox{sub}\mbox{ject to}\hspace{5mm}&
    \forall v_i\in V,~ x_i \leq \max_{e_i\in E_P}(y_j p_{ij}) \\
    &\forall v_i\in V,~ \sum_{e_j \in \mathcal{N}(v_i)}y_j \leq k \\ 
    &\forall\gee_a, \gee_b\in \Gee,~ \frac{1}{|\gee_a|}\sum_{\ell \in \gee_a}{x_\ell} =\frac{1}{|\gee_b|}\sum_{\ell \in \gee_b}{x_\ell} \\
    &\forall v_i\in V,~  0 \leq x_i \leq 1;~
    \forall e_i\in E_c,~  y_j \in \{0,1\}  
\end{align*}}
In the above formulation, $x_i\in[0,1]$ is the probability that a node $v_i$ receives content, and the goal is to maximize this for all nodes. $y_j$ is $1$ if the edge $e_j\in E_c$ is added to $E_p$, 0 otherwise.
$p_{ij}$ is a pre-computed probability for each edge $e_j$ and node $v_i$, indicating 
the likelihood that $v_i$ receives content, if only $e_j$ is added to $E_p$.
The first set of constraints \new{computes} $x_i$ following the RMPP cascade model (\S~\ref{sec:cascade}).
The second constraints enforce that selected edges incident to each node $v_i$ should not be more than $k$ ($\mathcal{N}(v_i)$ is the candidate edges $e_j\in E_c$ incident to the node $v_i$).
Finally, the weighted sum of nodes activation is the overall content spread, and the weighted sum of node activation values between groups size enforce fairness.

Besides the fact the $\IP$ is $\NP$-complete, the max function in the $\IP$ constraints is not immediately solvable by any of the $\IP$ solvers and needs to be converted into a compatible mathematical format, which turns out to make the solution double exponential~\cite{caramia2008multi}.

\section{The Randomized Approximation Algorithm}\label{sec:approx}
Theorem~\ref{th:inapprox} proves the inapproximability of the \problem problem.
To find an efficient solution for the problem, we
allow \new{approximation on content spread and relaxation of the fairness constraint}.
We define a $(1+\eps)$-approximation ($\eps\geq 0$) on fairness, as following: 

\begin{align}\label{eq:fairness} \small
\forall \gee_i, \gee_j\in \Gee ~:~
(1+\eps)^{-1} \leq \frac{\frac{1}{|\gee_i|}\sum_{v_\ell \in \gee_i}F(v_\ell, E_p)}{\frac{1}{|\gee_j|}\sum_{v_\ell \in \gee_j}F(v_\ell, E_p)} \leq (1+\eps)
\end{align}
Note that $\eps=0$ refers to the maximum fairness, and it decreases as $\eps$ increases. 
Conversely, the disparity of content spread is $\eps$. In the rest of the paper and in our experiments, we refer to $\eps$ as {\em disparity} or unfairness.

Our algorithm is $\LP$-relaxation followed by randomized rounding.
The first step towards developing the algorithm is to formulate the problem as $\IP$. 
At a high level, the algorithm has three steps (Algorithm~\ref{algo:LP}): step 1 computes the values and necessary inputs for the $\LP$ solver; in step 2 it forms an $\LP$-relaxation and solves it; finally, in step 3 it randomly rounds the output of the solver into an approximation solution for the \problem problem.
\begin{algorithm}[t]
\small
    \caption{ $\LP$-relaxation (LP-APPROX)}\label{algo:LP}
    \hspace*{\algorithmicindent} \textbf{Input} : $G(V, E), V_s, E_c, k, \mathcal{G};~$
    \textbf{Output} : suggested edges $E_p$
    \begin{algorithmic}[1]
        \State $\mathcal{S}' \leftarrow$ {\sc ShortestDistances}$(G(V, E), E_c)$ {\tt \small // Algorithm \ref{algo:shortestpaths} in Appendix}
        \Statex {\tt \small // $S_{ir} = \{e_j~|~(e_j,d_{ij})\in S'_i\in\mathcal{S}',~d_{ij}\leq r\}$}
        \State $\langle\mathbf{x}^*,\mathbf{y}^*\rangle \leftarrow$ Solve the LP in Figure~\ref{fig:LPRelax}
        \State {\bf return} {\sc ROUND} $(\mathbf{y}^*, G(V, E), k, V_s, \Gee)$ {\tt \small // Algorithm~\ref{algo:round}}
    \end{algorithmic}
\end{algorithm}

\begin{figure}[!tb]
    \framebox[\columnwidth]{
    \parbox{\columnwidth}{
    {\small
    \begin{align*}
        ~\mbox{max}&\mbox{imize}~ \sum_{r=1}^{r_m} \delta_r \sum_{v_i\in V} {x_{ir}}\\
        \mbox{sub}&\mbox{ject to} &~\\
        &\forall v_i\in V,r\in[1,r_m],~x_{ir} \leq \sum_{e_j \in S_{ir}} y_j\\
        &\forall  v_i\in V,~  \sum_{e_j \in \mathcal{N}(v_i)}y_j \leq k\\
        &\forall\gee_a, \gee_b\in \Gee,~ \frac{1}{|\gee_a|}\sum_{r=1}^{r_m} \delta_r\sum_{v_\ell \in \gee_a}{x_{\ell r}} = \frac{1}{|\gee_b|}\sum_{r=1}^{r_m} \delta_r\sum_{v_\ell \in \gee_b}{ x_{\ell r}}\\
        &\forall v_i\in V,r\in[1,r_m],~ 0 \leq x_{ir} \leq 1\\
        &\forall  e_j\in E_c,~  0 \leq y_j \leq 1
    \end{align*}}}
    }
    \caption{$\LP$-relaxation (relaxed $x_{ir} \in \{0,1\}$ and $y_j \in \{0,1\}$) used for solving \problem problem.}
    \label{fig:LPRelax}
\end{figure}

\subsection{$\LP$-Relaxation}\label{sec:approx:lpr}

Instead of the $\IP$ formulation discussed in \ref{sec:pre:complexity}, we consider an alternative formulation inspired from~\cite{GUNEY2019589} for the $\LP$-relaxation.
A major change in the $\IP$ formulation is that we replace the variable $x_i$, representing the probability of content receiving at node $v_i$, by the weighted sum of a collection of {\em binary} variables $x_{ir}$, which is $1$ if there exists a path of length at most $r$ from a content node to $v_i$, and $0$ otherwise.
Let $p$ be the probability that an edge gets activated during the cascade process. Then, the probability that content reaches a node $v_i$ using a path of length $r$ is $p^r$.
Given a node $v_i$, let $\ell$ be its shortest distance from a content node. Furthermore, let $r_m$ be the maximum distance for which all nodes $V\backslash V_s$ are reachable from a content node.
First, $\forall r\geq \ell$, $x_{ir}=1$, while $\forall r< \ell$, $x_{ir}=0$.
We define the values $\delta_r$, such that $x_i = \sum_{r=1}^{r_m} \delta_r x_{ir}$.
We shall explain the derivation of $\delta_r$ values in \ref{sec:approx:pre}. Using the above transformation, the objective function can be rewritten as 
$\sum_{v_i\in V} x_i = \sum_{r=1}^{r_m} \delta_r \sum_{v_i\in V} {x_{ir}}$.

Next, for every node $v_i$ and a distance $r$, we precompute the set $S_{ir}$, as the set of edges that if added to $E_p$, the distance to $v_i$ from a content node will be at most $r$.
Introducing the new variables, the $\LP$-relaxation formulation is shown in Figure~\ref{fig:LPRelax}.
Recall that in the $\IP$ formulation $y_j$ is a binary attribute that is $1$ when the edge $e_j\in E_c$ gets selected. Similarly, $x_{ir}$ is $1$ only if there exists a path of length $r$ to $v_i$ from a content node. That is, if at least one of the edges in $S_{ir}$ has been selected. This is enforced by the first set of constraints in the $\LP$ formulation.
The second constraints enforce $k$ incident edges to each node, while the third one enforces fairness. 

After constructing the $\LP$ as in Figure~\ref{fig:LPRelax}, the algorithm finds its optimal values $\mathbf{x}^*=\{x^*_{ir}~|~\forall v_i\in V, r\in [1,r_m]\}$ and $\mathbf{y}^* = \{y^*_j~|~e_j\in E_c\}$ and \new{passes} those to step 3 of Algorithm~\ref{algo:LP} for rounding, as we shall explain next in \ref{sec:approx:round}.




\subsection{Randomized Rounding}\label{sec:approx:round}
Next step after solving the $\LP$ in Figure~\ref{fig:LPRelax} is to round the continuous values for the edge selection variables $\mathbf{y}^*=\{y^*_j~|~e_j\in E_c\}$ into binary values $\mathbf{y}^+=\{y^+_j~|~e_j\in E_c\}$.
An edge $e_j\in E_c$ is selected for suggestion if $y^+_j=1$.
We follow a randomized rounding for rounding $\mathbf{y}^*$ to $\mathbf{y}^+$, using $\mathbf{y}^*$ as the probability to select edges for suggestion. That is, the probability to suggest edge $e_j\in E_p$ is $P(y^+_j=1) = y^*_j$.
To do so, Algorithm~\ref{algo:round} draws a number uniformly at random from the range $[0,1]$. If the generated sample is less than $y^*_j$,  $e_j$ is added to $E_p$ i.e. $y^+_j=1$. 
As we shall study in \ref{sec:aapprox:analysis}, following this rounding approach, 
the expected number of edges selected per each node is $k$, and the algorithm guarantees constant approximation ratios both on fairness and the content spread.

The bottle-neck in the time complexity of Algorithm~\ref{algo:LP} is solving the $\LP$, i.e., the second step.
On the other hand, the rounding of the variables is fast (rounding the values of $y^*$ to $y^+$ is in $O(|E_c|)$).
Since the rounding is randomized, 
it is reasonable to repeat the rounding step many times to boost the success probability.
Let $iter_m$ be the number of times the rounding is repeated and let 
$E_{p_1} \cdots, E_{p_{iter_m}}$ be the results obtained over different repetition of randomized rounding.
From these, the algorithm considers the ones that have fairness close to the most-fair one found, and returns the one that maximizes the content spread (Lines 5-10 of Algorithm~\ref{algo:round}).

\begin{algorithm}[t]
\small
    \caption{ Randomized Rounding ({\sc ROUND})}\label{algo:round}
    \hspace*{\algorithmicindent} \textbf{Input} : $\mathbf{y}^*, G(V, E), V_s, \Gee;~$
    \textbf{Output} : selected edges $E_p$
    \begin{algorithmic}[1]
        \State $sol\gets \{\}$
            \For {$i\gets 1$ to $iter_m~$} 
            \For {$e_j\in E_c$}
                \State {\bf if} {\sc Rand}$(0,1)\leq y^*_j$ {\bf then} $y^+_j=1$, i.e., add $e_j$ to $E_{p_i}$
            \EndFor
                \State $CS\gets \sum_{v_j \in V}F(v_j,E_{p_i})$ {\tt // content spread}
            \State $\mathcal{F}\gets \max_{\gee_i,\gee_j \in \Gee} \Big(\frac{\frac{1}{|\gee_i|}\sum_{v_\ell \in \gee_i}F(v_\ell, E_{p_i})}{\frac{1}{|\gee_j|}\sum_{v_\ell \in \gee_j}F(v_\ell, E_{p_i})}\Big)$ {\tt // fairness}
            \State add $\langle E_{p_i}, \mathcal{F}, CS\rangle$ to $sol$
        \EndFor
        \State $\mathcal{F}_{min}\gets \min_{\mathcal{F}}(sol)$
        \State candidates $\leftarrow\{\langle E_{p_i}, \mathcal{F}, CS\rangle\in sol~|~\mathcal{F} - \mathcal{F}_{min}\leq \epsilon\}$
        \State {\bf return} argmax$_{CS}$ (candidates)
    \end{algorithmic}
\end{algorithm}

\subsection{Constructing $\LP$ Inputs}\label{sec:approx:pre}

For every node $v_i$ and a distance $r$, we need to precompute the set $S_{ir}$, of edges in $E_c$, that makes $v_i$ at most $r$ hops away from a content node.
Again, under RMMP, a path from a content node to $v_i$ can contain at most one edge from $E_c$.
Hence, to compute the sets $\mathcal{S}=\{S_{ir}~|~v_i\in V\backslash V_s, r\leq r_m\}$, we compute the shortest distance from any content node to every node in the graph with edge $e_j\in E_c$.
That is, $\forall_{e_j \in E_c}$, we compute all-pair shortest paths from content nodes in graph $G(V,E\cup\{e_j\})$. Suppose $d_{ij}$ is the shortest path length from content nodes to a node $v_i$ in graph $G(V,E\cup\{e_j\})$.
Then $e_j$ is added to all sets $S_{ir}$ where $r\geq d_{ij}$, because $e_j$ enables a path with length $\geq d_{ij}$ to $v_i$. 
Instead of repeating $e_j$ in all sets $S_{ir}, r\geq d_{ij}$, one can save space by defining the intermediate sets $\mathcal{S}' = \{S'_{ir}~|~v_i\in V\backslash V_s, d_{ij} = r\}$, where $S'_{ir}$ contains the edges in $E_p$ that enable a shortest path of length $r$ to $v_i$.
Then $S_{ir} = \cup_{r'\leq r} S'_{ir'}$.

Lastly, in order to form the $\LP$, we need to compute the values of $\delta_r$, such that 
$\sum_{r=1}^{r_m} \delta_r x_{ir}$ is equal to $x_i$, the probability that a node $v_i$ receives content. Let $d$ be the length of the shortest path from a content node to a node $v_i$, and let $p$ be the activation probability for each edge.
Then, the probability that it receives content (using RMPP) is $x_i=p^d$.
We want to compute the values of $\delta_r$, in a way that for every shortest-path length $d$, $\sum_{r=1}^{r_m} \delta_r x_{ir} = p^d$.
For cases where $d=r_m$, $x_{ir_m}=1$ and $x_{ir}=0, \forall r<r_m$.
As a result: $\sum_{r=1}^{r_m} \delta_r x_{ir} = \delta_{r_m}$. Hence, $\delta_{r_m} = p^{r_m}$.


When $d=r_m-1$, $\sum_{r=1}^{r_m} \delta_r x_{ir} = \delta_{r_{m-1}} + \delta_{r_{m}} = p^{r_m-1}$. Therefore, $\delta_{r_{m-1}} = p^{r_m-1} - p^{r_m}$.

Similarly, $\forall r<r_m$, $\sum_{r=1}^{r_m} \delta_r x_{ir}$ can be written as 
\begin{align*} \small
   &\sum_{r=1}^{r_m} \delta_r x_{ir} = \sum_{r=d}^{r_m} \delta_r = \delta_d + \sum_{r=d+1}^{r_m} \delta_r = p^d \\
   &\sum_{r=d+1}^{r_m} \delta_r = p^{d+1}
   \Rightarrow \delta_d = p^d - p^{d+1}
\end{align*}

In summary,

\begin{align} \small
    \delta_r = 
    \begin{cases}
        p^{r_m} & r=r_m \\
        p^r - p^{r+1} & r<r_m
    \end{cases}
\end{align}

\subsection{Approximation Analysis}\label{sec:aapprox:analysis}

\begin{lemma}\label{lem:3}
The expected number of selected edges by Algorithm~\ref{algo:LP} per each node $v_i\in V$ is $k$.
\end{lemma}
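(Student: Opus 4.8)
The plan is to prove this entirely through linearity of expectation applied to the randomized rounding step (Algorithm~\ref{algo:round}), leaning on the second (per-node budget) constraint of the $\LP$ in Figure~\ref{fig:LPRelax}. First I would recall the precise rounding rule: for each candidate edge $e_j\in E_c$ the algorithm draws a uniform sample in $[0,1]$ and sets $y^+_j=1$ exactly when that sample does not exceed $y^*_j$. Since the final box constraint of the $\LP$ guarantees $y^*_j\in[0,1]$, this is a well-defined Bernoulli trial with $\pr(y^+_j=1)=y^*_j$, i.e.\ each edge is suggested with probability equal to its fractional $\LP$ value.

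Next I would fix a node $v_i\in V$ and let $N_i=\sum_{e_j\in\mathcal{N}(v_i)} y^+_j$ be the number of suggested edges incident to $v_i$ (outgoing edges in the directed case, incident edges in the undirected case, consistent with the definition of $\mathcal{N}(v_i)$). By linearity of expectation, $\EX[N_i]=\sum_{e_j\in\mathcal{N}(v_i)}\EX[y^+_j]=\sum_{e_j\in\mathcal{N}(v_i)} y^*_j$. The second family of $\LP$ constraints states precisely that $\sum_{e_j\in\mathcal{N}(v_i)} y^*_j\le k$, so $\EX[N_i]\le k$ is immediate, establishing that the rounding respects the per-node budget in expectation.

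The delicate step, and the one I expect to be the main obstacle, is upgrading this inequality to the claimed equality $\EX[N_i]=k$. I would argue that an $\LP$ optimum saturates the budget at every node, using monotonicity: the objective $\sum_r\delta_r\sum_{v_i} x_{ir}$ is non-decreasing in the $y$ variables, because raising any $y_j$ can only relax the first family of constraints $x_{ir}\le\sum_{e_j\in S_{ir}} y_j$ and hence never lowers any $x_{ir}$. Thus an optimal solution has no reason to leave budget unused, and one may take $\sum_{e_j\in\mathcal{N}(v_i)} y^*_j=k$ whenever $v_i$ has at least $k$ usable candidate edges. The subtlety is that the fairness equality constraint can in principle forbid saturating the budget at some nodes, breaking exact equality; I would resolve this by stating the result for the generic case (each node carries $\ge k$ candidate edges and fairness does not bind the per-node budget), under which $\EX[N_i]=k$, while otherwise reading $\EX[N_i]\le k$ as the operative guarantee used later in the approximation analysis of \S\ref{sec:aapprox:analysis}.
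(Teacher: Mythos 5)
Your core argument is identical to the paper's: the rounding makes each $y^+_j$ a Bernoulli variable with $\pr(y^+_j=1)=y^*_j$, and linearity of expectation gives $\EX\big[\sum_{e_j\in\mathcal{N}(v_i)}y^+_j\big]=\sum_{e_j\in\mathcal{N}(v_i)}y^*_j$. Where you diverge is the final step, and you diverge in the right direction. The paper's proof simply writes $\sum_{e_j\in\mathcal{N}(v_i)}y^*_j=k$, silently treating the per-node budget constraint as tight at the $\LP$ optimum; you correctly flag that the constraint only guarantees $\sum_{e_j\in\mathcal{N}(v_i)}y^*_j\leq k$ and that equality needs a separate argument. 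Two corrections to your discussion of when equality can be forced, though. First, the fairness constraint in Figure~\ref{fig:LPRelax} involves only the $x$ variables, so it can never forbid raising $y$ values: increasing some $y_j$ only slackens the constraints $x_{ir}\leq\sum_{e_j\in S_{ir}}y_j$ and leaves fairness and the objective untouched, so your worry on that front is misplaced. Second, the genuine obstructions are (i) a node with fewer than $k$ candidate edges, as you note, and (ii) budget competition, which your monotonicity argument does not handle: a candidate edge $e_j=(u,v)$ consumes budget at \emph{both} endpoints, so one cannot in general saturate all nodes simultaneously even when every degree is at least $k$ (for $k=1$ and a star with three leaves, saturating each leaf forces the center's sum to $3$). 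Since raising $y_j$ at a starved node may require lowering some other $y_{j'}$, which can in turn force an $x$ variable and hence the objective down, saturation at every node is not always achievable at an optimum. The honest statement of the lemma is therefore $\EX\big[\sum_{e_j\in\mathcal{N}(v_i)}y^+_j\big]=\sum_{e_j\in\mathcal{N}(v_i)}y^*_j\leq k$, which is all that the analysis in \S\ref{sec:aapprox:analysis} actually needs; your hedged formulation is more rigorous than the paper's own proof of this lemma.
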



\begin{theorem}
If \problem problem has at least one valid solution, then the $\LP$-relaxation algorithm (Algorithm~\ref{algo:LP}) for solving it satisfies the approximation ratio of $(1-\ee^{-1})$ on the content spread and $\frac{2\ee}{1-\ee^{-1}}$ on fairness.
\end{theorem}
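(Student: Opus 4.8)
The plan is to analyze the randomized rounding of step~3 separately for the two guarantees, treating the optimal fractional values $\mathbf{y}^*$ as independent edge-inclusion probabilities. For the content-spread ratio, I would first observe that once $E_p$ is fixed the realized spread is $\sum_{v_i\in V}F(v_i,E_p)=\sum_{r=1}^{r_m}\delta_r\sum_{v_i\in V}x^+_{ir}$, where $x^+_{ir}$ is the indicator that $v_i$ lies within $r$ hops of a content node after rounding; this is exactly the telescoping identity used to define $\delta_r$ in \S\ref{sec:approx:pre}. Since $v_i$ is reached within $r$ hops whenever some edge of $S_{ir}$ is selected, $\Pr[x^+_{ir}=1]\ge 1-\prod_{e_j\in S_{ir}}(1-y^*_j)\ge 1-\ee^{-\sum_{e_j\in S_{ir}}y^*_j}\ge 1-\ee^{-x^*_{ir}}\ge(1-\ee^{-1})\,x^*_{ir}$, chaining $1-z\le \ee^{-z}$, the first $\LP$ constraint $\sum_{e_j\in S_{ir}}y^*_j\ge x^*_{ir}$, and the concavity of $1-\ee^{-z}$ on $[0,1]$ (so $1-\ee^{-z}\ge(1-\ee^{-1})z$ there). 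Summing over $r$ with the nonnegative weights $\delta_r$ and over all nodes yields $\EX\big[\sum_i F(v_i,E_p)\big]\ge(1-\ee^{-1})\sum_r\delta_r\sum_i x^*_{ir}$, i.e.\ $(1-\ee^{-1})$ times the $\LP$ optimum; because the $\LP$ of Figure~\ref{fig:LPRelax} relaxes the exact $\IP$, the $\LP$ optimum dominates the true optimum, giving the claimed $(1-\ee^{-1})$ ratio on content spread. Feasibility of the expected edge budget is supplied separately by Lemma~\ref{lem:3}.

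For fairness I would bound each group's realized spread $Z_i=\sum_{v_\ell\in\gee_i}F(v_\ell,E_p)$ two-sidedly. Writing $\tilde x_{ir}=\min\!\big(1,\sum_{e_j\in S_{ir}}y^*_j\big)$, the same estimate gives the sandwich $(1-\ee^{-1})\,\tilde x_{ir}\le \Pr[x^+_{ir}=1]\le \tilde x_{ir}$, and hence $(1-\ee^{-1})\,\tilde L_i\le \EX[Z_i]\le \tilde L_i$, where $\tilde L_i=\sum_r\delta_r\sum_{v_\ell\in\gee_i}\tilde x_{\ell r}$. The $\LP$'s equality constraint pins the \emph{fractional} per-capita spreads $L_i/|\gee_i|$ (with $L_i=\sum_r\delta_r\sum_{v_\ell\in\gee_i}x^*_{\ell r}$) to a common value $c$ for all groups. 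The plan is then to take the ratio of an over-served group at its upper bound to an under-served group at its lower bound, cancel the common $c$ via the $\LP$ fairness equality, and account for the residual gap between the saturated quantity $\tilde L_i$ and the throttled value $L_i$. The factor $\tfrac{1}{1-\ee^{-1}}=\tfrac{\ee}{\ee-1}$ contributes one power of $\ee$, and a final factor $2$ is paid to convert the bound on the ratio of \emph{expectations} into a bound on the \emph{realized} ratio $\mathcal F$ reported in Algorithm~\ref{algo:round} — justified by a Markov/union-bound argument over the $iter_m$ independent roundings, from which the algorithm returns the most-fair candidate.

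The hard part is the upper-bound direction of the fairness argument, and it is the source of the factor $2\ee$. Randomized rounding keys off $\mathbf{y}^*$ rather than the reduced variables $x^*_{ir}$: whenever the $\LP$ throttles a high-spread group down to the common per-capita value $c$, it leaves $x^*_{\ell r}<\tilde x_{\ell r}$, so rounding can \emph{over-deliver} content to exactly that group, pushing $Z_i$ above $L_i$ and jeopardizing the fairness ratio. The delicate step is therefore to show that this over-delivery is bounded, i.e.\ that the per-capita saturated spreads $\tilde L_i/|\gee_i|$ cannot exceed a constant multiple of the common value $c$; I expect this to rely essentially on the optimality of the $\LP$ solution (so that throttling is performed as sparingly as possible, keeping $\tilde L_i$ within a bounded factor of $L_i$), which is what ultimately pins the constant to $\tfrac{2\ee}{1-\ee^{-1}}$. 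Establishing this gap bound, rather than the routine concavity estimates, is where I would concentrate the effort.
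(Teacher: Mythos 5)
Your content-spread half coincides with the paper's proof (the chain $\Pr[x^+_{ir}=1]=1-\prod_{e_j\in S_{ir}}(1-y^*_j)\geq 1-\ee^{-\sum_{e_j\in S_{ir}}y^*_j}\geq(1-\ee^{-1})x^*_{ir}$, summed with the weights $\delta_r$, plus the fact that the $\LP$ optimum dominates $\mathsf{OPT}$), so that part stands. The genuine gap is in the fairness half, and it is exactly the step you postponed. What the theorem needs is an upper bound on delivered content \emph{relative to the $\LP$ variables} $x^*_{\ell r}$, because the $\LP$ fairness equality constrains per-capita sums of $x^*$, not of your saturated caps $\tilde x_{\ell r}=\min\big(1,\sum_{e_j\in S_{\ell r}}y^*_j\big)$. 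Your sandwich $(1-\ee^{-1})\,\tilde x_{\ell r}\leq\Pr[x^+_{\ell r}=1]\leq\tilde x_{\ell r}$ is correct but by itself says nothing about fairness; the entire content of the bound is controlling the gap between $\tilde L_i$ and $L_i$, which you only conjecture can be handled ``by optimality of the $\LP$.'' That conjecture is unproven and dubious: the objective gives the $\LP$ no incentive to shrink $y^*_j$ on edges feeding a throttled group (only the degree constraints touch those $y$'s), so optimality alone does not force $\tilde L_i$ to be within any constant factor of $L_i$. You have correctly located the subtlety but not supplied the missing idea, and the route you sketch does not obviously lead to one.

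The paper closes this differently: it proves the pointwise bound $\EX[x^+_{ir}]\leq 2\ee\,x^*_{ir}$ by a three-case analysis keyed to the threshold $\frac{1}{2\ee}$: (I) if $\sum_{e_j\in S_{ir}}y^*_j\geq 1$ then $x^*_{ir}=1$ and $\EX[x^+_{ir}]\leq 1=x^*_{ir}$; (II) if some $y^*_j\geq\frac{1}{2\ee}$ then $x^*_{ir}\geq\frac{1}{2\ee}$ and $\EX[x^+_{ir}]\leq 1\leq 2\ee\,x^*_{ir}$; (III) otherwise every $y^*_j$ is small, and the estimate $\ee^{-2y^*_j}\leq 1-y^*_j$ gives $\EX[x^+_{ir}]\leq 1-\ee^{-2x^*_{ir}}\leq 2x^*_{ir}$. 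Combining this $2\ee$ upper bound with the $(1-\ee^{-1})$ lower bound and the $\LP$ equality across groups shows the ratio of per-capita expected spreads lies between $\frac{1-\ee^{-1}}{2\ee}$ and $\frac{2\ee}{1-\ee^{-1}}$; note that \emph{both} constants arise from the two expectation bounds, and the guarantee is on a ratio of expectations. Your plan to obtain the factor $2$ by a Markov/union-bound conversion from expectation ratios to the realized ratio over the $iter_m$ roundings is therefore not what is needed and would not produce this constant (Markov does not transfer cleanly to a ratio of correlated random variables). As an aside, the paper's cases (I) and (II) silently assume $x^*_{ir}=\min\big(1,\sum_{e_j\in S_{ir}}y^*_j\big)$ at optimality --- precisely the throttling issue you raised --- so your instinct identifies a real soft spot in the published argument; but the published proof resolves the constant by case analysis on $y^*$, not by comparing $\tilde L_i$ with $L_i$, and your proposal as written does not contain a proof of either step.
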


\begin{proof}

Let $\{x^*_{ir}~|~\forall v_i\in V, r\in [1,r_m]\}$ and $\{y^*_j~|~e_i\in E_c\}$ be the optimal values of $\LP$, shown in Figure~\ref{fig:LPRelax}.
Besides, let $\{x^+_{ir}~|~\forall v_i\in V, r\in [1,r_m]\}$ and $\{y^+_j~|~e_i\in E_c\}$
be the approximation values for the \problem problem.
Let us first compute the approximation ratio on the content spread.

\begin{multline} \small
    \EX\Big[x_{ir}^+\Big] = \pr(x_{ir}^+=1) = 1- \pr(x_{ir}^+=0)
    \\ = 1 - \prod_{e_j \in S_{ir}}(1 - y_j^*)
    \geq 1 - \prod_{e_j \in S_{ir}}e^{-y_j^*} = 1 - \ee^{-\sum_{e_j \in S_{ir}}y_j^*}\\ 
    \geq 1 - \ee^{-x_{ij}^*}\geq (1 - \ee^{-1})x_{ij}^*
\end{multline}
Let $\mathsf{APR}$ be the content spread of the approximation algorithm output and let $\mathsf{OPT}$ be the optimal content spread.

\begin{align} \small
    \nonumber \EX[&\mathsf{APR}]= \EX\Big[ \sum_{r=1}^{r_m} \delta_r \sum_{v_i\in V} {x^+_{ir}}\Big] = \sum_{r=1}^{r_m} \delta_r \sum_{v_i\in V} { \EX\big[x^+_{ir}\big]} \\
    &\geq \sum_{r=1}^{r_m} \delta_r \sum_{v_i\in V} {(1 - \ee^{-1})x_{ij}^*} = (1 - \ee^{-1})\sum_{r=1}^{r_m} \delta_r \sum_{v_i\in V} {x_{ij}^*}
\end{align}
Since the optimal solution for $\LP$ is an upper-bound on the $\IP$ solution (optimal content spread),

\begin{align}\label{eq:lower} \small
    &\sum_{r=1}^{r_m} \delta_r \sum_{v_i\in V} {x_{ij}^*} \geq \mathsf{OPT} \\
    \nonumber\Rightarrow&~ \EX[\mathsf{APR}] \geq (1 - \ee^{-1}) \sum_{r=1}^{r_m} \delta_r \sum_{v_i\in V} {x_{ij}^*} \geq (1-\ee^{-1})~ \mathsf{OPT}
    \\ \Rightarrow&~ \frac{\EX[\mathsf{APR}]}{\mathsf{OPT}} \geq (1-\ee^{-1})
\end{align}


    

Next, we prove the approximation ratio on fairness.
Inspired from \cite{asudeh2020maximizing}, we first provide an upper bound on $\EX[x_{ir}^+]$ that bounds the value of fairness across different groups.

Looking at Figure~\ref{fig:LPRelax}, if $\sum_{e_j \in S_{ir}} y_j\geq 1$, then $x_{ir}^*=1$. Hence, (case I) $\EX[x_{ir}^+]\leq 1 = x_{ir}^*$. \\
For cases where $\sum_{e_j \in S_{ir}} y_j< 1$, we know $x_{ir}^* = \sum_{e_j \in S_{ir}} y_j$.
Now, if $\exists~ e_j\in S_{ir}$, such that $y_j^*\leq \frac{1}{2\ee}$, then $x_{ir}^*\geq \frac{1}{2\ee}$. Therefore, (case II) $\EX[x_{ir}^+]\leq 1\leq 2\ee x_{ir}^*$.\\
The only case left is when $\forall e_j\in S_{ir}$, $y_j^*\leq \frac{1}{2\ee}$.
Since for this case, $y_j^*\in[0,1]$,
using the second-order Taylor series expansion of $\ee^{-2y_j^*}$, we know:

\begin{align*}\small
\ee^{-2y_j^*} \leq 1 - 2y_j^* + \frac{\ee}{2}(2y_j^*)^2 \leq 1 - y_j^*
\end{align*}

Therefore (case III),

\begin{multline}\small
    \EX\Big[x_{ir}^+\Big] = \pr(x_{ir}^+=1) = 1- \pr(x_{ir}^+=0)
    \\ = 1 - \prod_{e_j \in S_{ir}}(1 - y_j^*)
    \leq 1 - \prod_{e_j \in S_{ir}}e^{-2y_j^*} = 1 - \ee^{-\sum_{e_j \in S_{ir}}2y_j^*}\\ 
    \leq 1 - \ee^{-2 x_{ir}^*} \leq 1 - (1 - 2 x_{ir}^*) =  2 x_{ir}^*
\end{multline}

From cases I, II, and III, we can conclude that $\EX\Big[x_{ir}^+\Big]\leq 2\ee x_{ir}^*$.
As a result,

\begin{align}\label{eq:upper}
    \nonumber \EX&\Big[ \sum_{r=1}^{r_m} \delta_r \sum_{v_i\in V} {x^+_{ir}}\Big] = \sum_{r=1}^{r_m} \delta_r \sum_{v_i\in V} {\EX\big[x^+_{ir}\big]} \\
    &\leq \sum_{r=1}^{r_m} \delta_r \sum_{v_i\in V} {2\ee \,x_{ij}^*} = 2\ee\sum_{r=1}^{r_m} \delta_r \sum_{v_i\in V} {x_{ij}^*}
\end{align}

We now use the lower-bound found in Equation~\ref{eq:lower} and the upper-bound in Equation~\ref{eq:upper} to bound the unfairness (Equation~\ref{eq:fairness}) between any pair of groups $\gee_a,\gee_b\in \Gee$.

From the fairness constraint in the $\LP$ formulation (the third constraints Figure~\ref{fig:LPRelax}), we know that

\begin{align*}\small
    \frac{1}{|\gee_a|}\sum_{r=1}^{r_m} \delta_r\sum_{v_\ell \in \gee_a}{x_{\ell r}^*} = \frac{1}{|\gee_b|}\sum_{r=1}^{r_m} \delta_r\sum_{v_\ell \in \gee_b}{x_{\ell r}^*}
\end{align*}
Using the upper-bound and the lower-bound in Equations~\ref{eq:lower} and ~\ref{eq:upper}, for every group $\gee_a\in \Gee$, we know:

\begin{align*}\small
    &\frac{1}{|\gee_a|}\sum_{r=1}^{r_m} \delta_r\sum_{v_\ell \in \gee_a}{x_{\ell r}^+} \geq (1-\ee^{-1})\frac{1}{|\gee_a|}\sum_{r=1}^{r_m} \delta_r\sum_{v_\ell \in \gee_a}{x_{\ell r}^*} 
    \\
    \mbox{and }&\frac{1}{|\gee_a|}\sum_{r=1}^{r_m} \delta_r\sum_{v_\ell \in \gee_a}{x_{\ell r}^+} 
    \leq 2\ee \frac{1}{|\gee_a|}\sum_{r=1}^{r_m} \delta_r\sum_{v_\ell \in \gee_a}{x_{\ell r}^*}
\end{align*}
As a result, for all $\gee_a,\gee_b\in \Gee$,
\begin{align}
    \frac{1-\ee^{-1}}{2\ee} \leq 
    \frac{\frac{1}{|\gee_a|}\sum_{r=1}^{r_m} \delta_r\sum_{v_\ell \in \gee_a}{x_{\ell r}^+}}{\frac{1}{|\gee_b|}\sum_{r=1}^{r_m} \delta_r\sum_{v_\ell \in \gee_b}{x_{\ell r}^+}} \leq \frac{2\ee}{1-\ee^{-1}}
\end{align}
\end{proof}
\section{Practical Optimizations}\label{sec:opt}

\stitle{LP-Advanced Algorithm (Extension Beyond RMPP)}
Our approximation algorithm is designed using RMPP as the cascade model. 
In RMPP, the content spread to a node is the \emph{maximum} probability of a path from a content node to this node containing at most one edge from $E_c$.
This means, if by including edge $e_j'$ the content received by each node is less than or equal to the content received by each node by including $e_j$, then if $e_j$ is in the chosen set $E_p$, including $e_j'$ does not change the maximum probability of these paths. Under $RMPP$, the edge $e_j'$ will not be included if $e_j$ is included. Alternatively, under the MIP and IC models, the content spread to a node is determined by the path with the highest probability (for MIP) or the probability over all paths (for IC) such that the paths contain \emph{any number} of edges from $E_c$. After adding $e_j$, a path that includes both $e_j$ and $e_j'$ may have an additional effect under MIP and IC when having no effect under RMPP. As discussed in Section \ref{sec:approx}, Algorithm~\ref{algo:LP}, using RMPP, finds a solution relatively efficiently with theoretical guarantees on accuracy. In order to find edges that would be included in MIP and IC but not RMPP, we propose to \emph{rerun} Algorithm~\ref{algo:LP} multiple times, allowing it to first calculate a result $E_p$, then redefine $E \leftarrow E \cup E_p$ and $E_c \leftarrow E_c \setminus E_p$. By moving $E_p$ from $E_c$ to $E$, in the next iteration $e_j'$ can provide addition content to nodes, even under RMPP. To mantain $k$ expected incident edges in this modification, we only find $k'_i$ edges for a node $v_i$, determined by the number of incident edges to $v_i$ in $E_p$ from all previous iterations
This process is shown in Algorithm~\ref{algo:lp-iter}.

\new{Additional optimizations to limit the variables being solved in an Linear Program and to dynamically update the graph are included in the appendix.}

\begin{algorithm}[t]
\small
    \caption{{\sc LP-Advanced}}\label{algo:lp-iter}
    \hspace*{\algorithmicindent} \textbf{Input} : $G(V, E), V_s, E_c, k, \Gee, cutoff;~$
    \textbf{Output} : suggested edges $E_p$
    \begin{algorithmic}[1]
    \State $sol \gets \{\}$; 
    $\forall_{v_i \in V} ~ k'_i \gets k$; 
    Copy $E_c$ to $E_c^*$
    \Repeat
        \State{$E_p \gets$ {\sc LP-Approx}$(G(V ,E), V_s, E_c^*, k', \Gee)$} {\tt \small // Using MIP for rounding} 
        \State $sol \leftarrow sol \cup E_p$;
        $E \gets E \cup E_p$
        \State{$\forall_{v_i \in V} ~ k'_i \gets \max(k - |\{e_{u,v} \in sol~|~u == i~||~v == i\}|, 0)$}
        \State{$E_c^* \gets \{e_{u,v} \in E_c^* \setminus E_p~|~k'_u > 0~\&\&~k'_v > 0\}$}
    \Until{$|E_p| == 0$ OR a small ratio of $E_c$ left}\\
    \Return{$sol$}
    \end{algorithmic}
    
\end{algorithm}
\section{Scalability}\label{sec:scaling}

Though polynomial, solving $\LP$ problems \new{is} time-taking, and usually impractical for large settings with tens of thousands or more variables.
As we demonstrate in our experiments in Section \ref{sec:exp}, our $\LP$-relaxation algorithm is not scalable for such settings.
We introduce a different approach for large settings and, instead of solving one inefficient $\LP$, use a heuristic to break down the problem into {\em multiple instances of $\LP$ with reasonably small sizes}.

Our extension is based on the observation that 
{\em the contribution of a path to the content spread exponentially reduces as its length increases,} and the nodes that are far from the content nodes have a minimal impact in the content spread.
Solving the problem for the high-impact nodes can be viewed as an approximation heuristic for solving the full problem.
On the other hand, only \new{considering the impact of a small portion of the graph on the overall problem,}
can significantly reduce the size of $\LP$, remarkably increasing the algorithms efficiency.
We propose an iterative approach traversing the graph that, at a high level, starts from a local neighborhood around the content nodes, forms a reasonably small subgraph in the neighborhood, and solves it using the $\LP$-relaxation algorithm. Then we gradually add nodes, branching away for the content nodes, and solving \new{the overall problem by making changes to just those nodes} using the same $\LP$-relaxation technique.

\stitle{Subgraphs} 
Solving the \problem problem using Algorithm~\ref{algo:LP} or Algorithm~\ref{algo:lp-iter}, for a candidate edge set $E_c$, and each vertex whose expected content is improved by an edge in $E_c$, $V^* \subset V$, requires solving a $\LP$ of size $n = \Omega(|V^*| + |E_c|)$.
Instead of solving an $\LP$ with $n$ variables, taking $O(n^{2.37})$ time using \cite{cohen2021solving}, we solve a series of $b$ problems of size $\bar{n} << n$, with a runtime of $O(b \times \bar{n}^{2.37})$. 

We design a method to \emph{incrementally} add nodes to the graph such that processing the graph in stages provides a meaningful result while limiting computation time. We start by dividing the graph into $b$ subsets of nodes, each containing $N \over b$ nodes. In each iteration $i$, we consider the set of nodes $V'$ defined as the union of all the subsets $1..i$. 
With each iteration our graph grows, until approaching the full graph, allowing us to isolate portions of the graph and converges on an overall solution. In iteration $i$, we consider \emph{new} candidate edges $E'_{c}$, incident to both a node from iteration $i$ as well as another node which can be from iteration $i$ or any other iteration in $V'$. Since some nodes are processed earlier, they may have reached $k$ edges in an earlier iteration and can not add additional candidate edges, limiting the processing which occurs in later iterations.
Since each candidate edge that is considered
is only processed once,
the total sum of edges processed is less than the number of candidate edges, $|E_c| \geq \sum |E'_{c}|$. It follows that each iteration the candidate set $E'_{c}$ has less than or equal to $|E_c| \over b$ candidate edges on average. The nodes receiving an improvement in expected content $V'^* \subset V'$ is localized to the ones reachable by candidate edges from the nodes in the new iteration and thus $V'^*$ is also much smaller than both $V'$ and $V$. Hence, at each iteration we process a smaller \emph{subgraph} limited by $V'$ and the set of edges available in the current iteration $E'_c$, equal in size to $\bar{n} = \Omega(|V'^*| + |E'_{c}|)$, which is then repeated over a total of $b$ iterations.

When processing a subgraph, a decision is made to add or not add an edge $e'_j \in E'_c$ to $E_p$. If this decision were made on a full graph, the choice to add or not add an edge would be dependent on 1) the impact of the edge on all nodes in $V$, as well as 2) all other candidate edges in $E_c$ it could choose instead. In order for the heuristic to be effective, $V'$ must somewhat mimic the structure of $V$, and the method of producing subgraphs must provide a variety of options, with the options believed to be more impactful being considered first.
\stitle{Forest Fire Sampling} In order to generate the subsets described in the previous section, we need a way to find portions of the graph which are representative of the graph as a whole, and provide choices with a higher impact earlier. 
To do so, starting from the content nodes $V_s$, we perform sampling using a traversal based method. 
There are multiple intuitions as to why this is a good choice. First, the content received by each node $v_i \in V \backslash V_s$ is based on its path from a content node $v_j \in V_s$. By having the graph that always includes the nodes in $V_s$, we can find new paths $v_j \rightarrow v_i$ which have a shorter distance than the initial distance to a node $v_i$. Second, the impact of an edge $e_{u,v}$ is based on the nodes reachable by paths from $v$, where the paths continue to improve content, so analyzing connected components allowed us to make more meaningful decisions. A traversal based approach supplies such connected components. 

We use the Forest Fire algorithm~\cite{Leskovec:GraphsOT}, the de-facto graph sampling method that maintains the underlying properties of the graph.
This method of sampling randomly generates a subset of a node's neighbors in the graph. The subgraphs being sampled would include both a large number of neighbors for each node, as well as a significant depth, meaning a new candidate edge $e_{u,v} \in E'_c$ which ended at a node in the current subgraph $v \in V'$ would have information about many of the \emph{additional} nodes whose content spread was updated by including edge $e_{u,v}$. Since the branching factor is large in a given sample, a node $u$ might have several candidate edges, $e_{u,v_1}, e_{u,v_2},...$ such that it could make opinionated choices about which edges to select. The forest fire method will produce more prominent nodes closer to the source earlier than less prominent nodes further away. This means that we will be able to make the decisions which are more likely to have a higher impact on content spread earlier on, while making less meaningful decisions later.

\begin{algorithm}[!tb]
\small
    \caption{ Scalable Algorithm (\scalable)}\label{algo:ff-lp}
    \hspace*{\algorithmicindent} \textbf{Input} : $G(V, E), V_s, E_c, k, \Gee, $ Iter; 
    \textbf{Output} : suggested edges $E_p$
    \begin{algorithmic}[1]
        \State excluded$\leftarrow \{\}$; $E_p \leftarrow \{\}$; 
         $\forall_{v_i \in V}~:~ k'_i \leftarrow k$
        \While{$(G^{'}(V^{'}, E^{'}) \leftarrow $ForestFireSample$(G(V, E), V_s, $ Iter$)) \not = G$}
            \State add $E_p$ to $G^{'}(V^{'}, E^{'})$
            \State $E_c^{'} \leftarrow E_c$ on $G^{'} - $ excluded
            \State add $E_c^{'}$ to excluced
            \State $\mathcal{S}' \leftarrow$ {\sc ShortestDistances}$(G^{'}(V^{'}, E^{'}), E_c^{'})$ {\tt \small // Algorithm \ref{algo:shortestpaths} in Appendix}
            \State $\langle\mathbf{x}^*,\mathbf{y}^*\rangle \leftarrow$ Solve the LP in Figure~\ref{fig:LPRelax}
            \State add {\sc ROUND} $(\mathbf{y}^*, G^{'}(V^{'}, E^{'}), \textbf{k}', V_s, \Gee)$ to $E_p^{'}$ {\tt \small // Algorithm~\ref{algo:round}}
            \State {\bf for} $v_i \in V$ {\bf do} $k_i' \leftarrow \max (k - |\{e_{u,v} \in E_p | u = i\}|, 0)$
        \EndWhile
        \State \bf{return} $E_p$
        
    \end{algorithmic}
\end{algorithm}

\subsection{\scalable Algorithm}
We now propose algorithm \scalable (Algorithm~\ref{algo:ff-lp}) for solving the \problem at scale. The algorithm uses a Forest Fire sampling to achieve nodes close to the content nodes, and then finds the subproblem on that induced subgraph consisting of the candidate edges on that graph $E'_c$ and the nodes improved by those edges $V'^*$ as described above. An instance of $\LP$-Advanced is solved for the first edges to be added to $E_p$ based on the current understanding of the graph $G'(V', E')$. 
For every node $v_i\in V$, let $k_i'$ be the number of candidate edges remained to be suggested to it. Initially, $k_i'=k$ since no edge has been added to $E_p$. 
After adding the new candidate edges to $E_p$, values of $k_i'$ in vector $\mathbf{k}'$ also get updated accordingly.
\new{The new shortest distances to all nodes from source nodes and the corresponding content spread disparity is updated dynamically without recalculating the entire graph.}
The graph is expanded finding nodes that are either less connected or further from the content source nodes $V_s$, meaning the redefined candidate set $E'_c$ contains edges that are generally less important than the previous iteration, but more important than later iterations. The algorithm then expands upon $E_p$ with a subset of the edges in $E'_c$, using the increased information of the new nodes in $V'$. This process repeats over $b$ iterations, adding $N \over b$ nodes and an average $|E_c| \over b$ each iteration. 
Keeping the size of each $\LP$ instances relatively small in each iteration, our algorithm scales to large setting with low disparity and high content spread.

\section{Experiments}\label{sec:exp}

The experiments were conducted using a single work station with a Core i9 Intel X-series 3.5 GHz processor and 128 GB of DDR4 memory, running Linux Ubuntu. The algorithms were implemented in Python 3.
For evaluation purposes, we generated samples from two data sets: a collection of Antelope Valley synthetic networks to model obesity prevention intervention, as well as a real data set from the Pokec social network.

\stitle{Antelope Valley Synthetic Networks Data Set}
We use the Antelope Valley Synthetic Networks from Farnadi et al.'s work on Fair Influence Maximization \cite{FarnadiFIM}. We used twenty of the synthetic networks. Each of these networks initially contained 500 nodes and between 1576-1697 edges. Each node in the network has a sensitive attribute of {gender}  with two classes. 



\stitle{Pokec Social Network Data Set}
We used the Pokec Social Network data set from the SNAP's network data sets~\cite{snapnets}. 
The Pokec data set represents real data on the Pokec social network, and it contains 1.63 million nodes and 30.6 million edges. Each node in the network contains a sensitive attribute {\small\tt gender}. 



\stitle{Evaluation Plan and Performance Measures} We evaluate our algorithms LP-Approx \new{(Algorithm~\ref{algo:LP})}, LP-Advanced \new{(Algorithm~\ref{algo:lp-iter})}, and \scalable \new{(Algorithm~\ref{algo:ff-lp}, using \cite{littleballoffur})}. The metrics we focus on are (1) the run time of the algorithm, (2) the disparity between groups, and (3) the change in content spread. 
We report the runtime in seconds, we report disparity ($\eps$ in Equation~\ref{eq:fairness}) as the \% difference between the two groups' size weighted content spread, or $\eps \times 100$.


\new{
Computing the content spread using IC is known to be \sharpP-hard~\cite{chaoji-cs,Chen-Scalable} and, hence MIP~\cite{Chen-Scalable}, and later RMPP~\cite{chaoji-cs}, have been proposed alternative measures to approximate it. Even though the theoretical analyses in this paper have been carried out using RMPP, our algorithms work (almost) equally well for other cascade models. Confirming that we observed consistent results for RMPP, MIP, and IC, we report our evaluations using MIP to demonstrate practical performance of our work beyond RMPP.
}
\new{
The change in Content Spread is reported as the percent increase in MIP Content Spread:}
\begin{align}\small
\sum_{v_i \in V}{F_{MIP}(v_i, E_p) - F_{MIP}(v_i, \{\}) \over F_{MIP}(v_i, \{\})}\times 100
\end{align}


\stitle{\new{Baselines}}
\new{
\begin{itemize}[leftmargin=*]
    \item {\em Brute Force}. Checks all valid combinations of edges from the candidate set, choosing the combination which maximizes content spread, constrained to k edges per node, and perfect fairness between demographic groups.
    \item {\em Continuous Greedy} \cite{chaoji-cs}. Each candidate edge is assigned a weight starting at 0. Over 500 iterations, 10 sampled graphs are generated with the edges included with the probability of their weights. Each edge is then added to the sampled graphs. A value based on it's lift on the sample is assigned to the edge. A matching of the highest valued edges is made, and then those edges have their weights increased. After all iterations the edges with the highest weights are chosen as the solution. Optimizations made based on our dynamic algorithms included in the appendix.
    \item {\em IRFA} \cite{yang2019marginal}. 
    Greedily chooses edges based on their Influence Ranking (IR) or effect on down stream nodes, and the improvement in probability of that influence ranking based on adding the edge. It then performs a fast adjustment (FA) to recalculate probabilities and influence rankings. The input to this algorithm should be a Directed Acyclic Graph. As a preprocessing step, we used \cite{sun2017breaking} to remove cycles from the graph.
    \item {\em SpGreedy} \cite{zhu2021minimizing}. 
    Measures the relationship between demographic groups via the opinion model properties {\em polarizaiton} (groups being radically separated) and {\em disagreement} (members of different groups neighboring each other). The fairness of the network is then attempted to be increased by greedily choosing edges which minimize these properties. The opinions of one demographic group are given as $1$ and for the other are $-1$.
    \item {\em ACR-FoF} \cite{yu-contentspread}. The algebraic connectivity of including an edge and the number of shared friends are combined to score each edge. The highest scored edges are greedily chosen to k per node. This algorithm require an undirected graph, so for this baseline, in our experiments we ignored the edge directions.   
\end{itemize}}

\stitle{\new{Candidate Edge Selection}}
%
\new{\begin{itemize}[leftmargin=*]
\item {\em Friend of Friend (FoF)} - All friends of friends (not already friends) are considered as possible recommendations.
\item {\em Intersecting Group Count (IGC)} \cite{roth-implicit}.
For each node $v_i \in V$, a rating for a potential candidate edge is given to each other node equal to the number of groups that node shares with friends of $v_i$. Each node on the graph's neighborhood is considered as a group. If a node is not in the same group as any friend it does not have a rating. The 1/3 highest rated nodes are included in the candidate set of suggestions for $v_i$. 
\end{itemize}} 

\subsection{Experiment Results}

\stitle{\new{The Cost of Fairness}}
\new{Algorithm~\ref{algo:lp-iter} uses $\LP$-Relaxation and randomized rounding for solving \problem. Alternatively, if the $\LP$ did not contain the fairness constraint (Definition~\ref{def:fairness}), it would generate an approximate solution to \cs (Definition~\ref{def:spread}). Fairness is a desirable property of an algorithm, but it comes with "The Cost of Fairness (CoF)". The CoF of \problem~is demonstrated in Figure~\ref{fig:cof}.
The left-axis and solid lines in the figure show the average lift, while the disparity is shown in the right-axis with dashed lines.
Algorithm~\ref{algo:lp-iter} is run both with and without the fairness constraint, with $k=5$, $p=.5$, $|V|=500$, an initial disparity between 30\% and 35\%, and with 20 trials across each of a varying $|V_s|$. The resulting average lift and average disparity are recorded. The algorithm without the fairness constraint achieves 1.25x on lift. The algorithm with the fairness constraint reduces disparity by over 98\% in all cases while the algorithm without it reduces it by less than 60\%. The "Cost of Fairness" is demonstrated as a loss in overall lift for the result of near zero disparity between demographic groups.}



\stitle{Demonstration of Near Optimality}
\new{Algorithm~\ref{algo:LP}, $\LP$-Approx, is demonstrated to be nearly optimal in both overall content spread and disparity, by comparison with the optimal Brute Force baseline. 7 graphs with $|V|=40$ are sampled from the Antelope Valley dataset, such that Brute Force can achieve 0 disparity through edge selection}
, using
1 content node, with $k = 2$, \new{$p$ = 0.5,} and an initial disparity of ~$95.6-98.9\%$.
The $\LP$-Approx algorithm was run ten times on each of the graphs and the average and standard deviation were reported \new{and compared against the results of Brute Force}. 

\new{Figure~\ref{fig:optimal-disparity} demonstrates the near optimal disparity results.} \new{As shown, for all but two problem instances,} the $\LP$-Approx was able to achieve optimal disparity ($\eps=0$).
\new{Even in the worst case, the average disparity across experiments never exceeded} $0.006$.

\begin{figure*}[!tb] 
    \begin{minipage}[t]{0.23\linewidth}
        \centering
        \includegraphics[width=\textwidth]{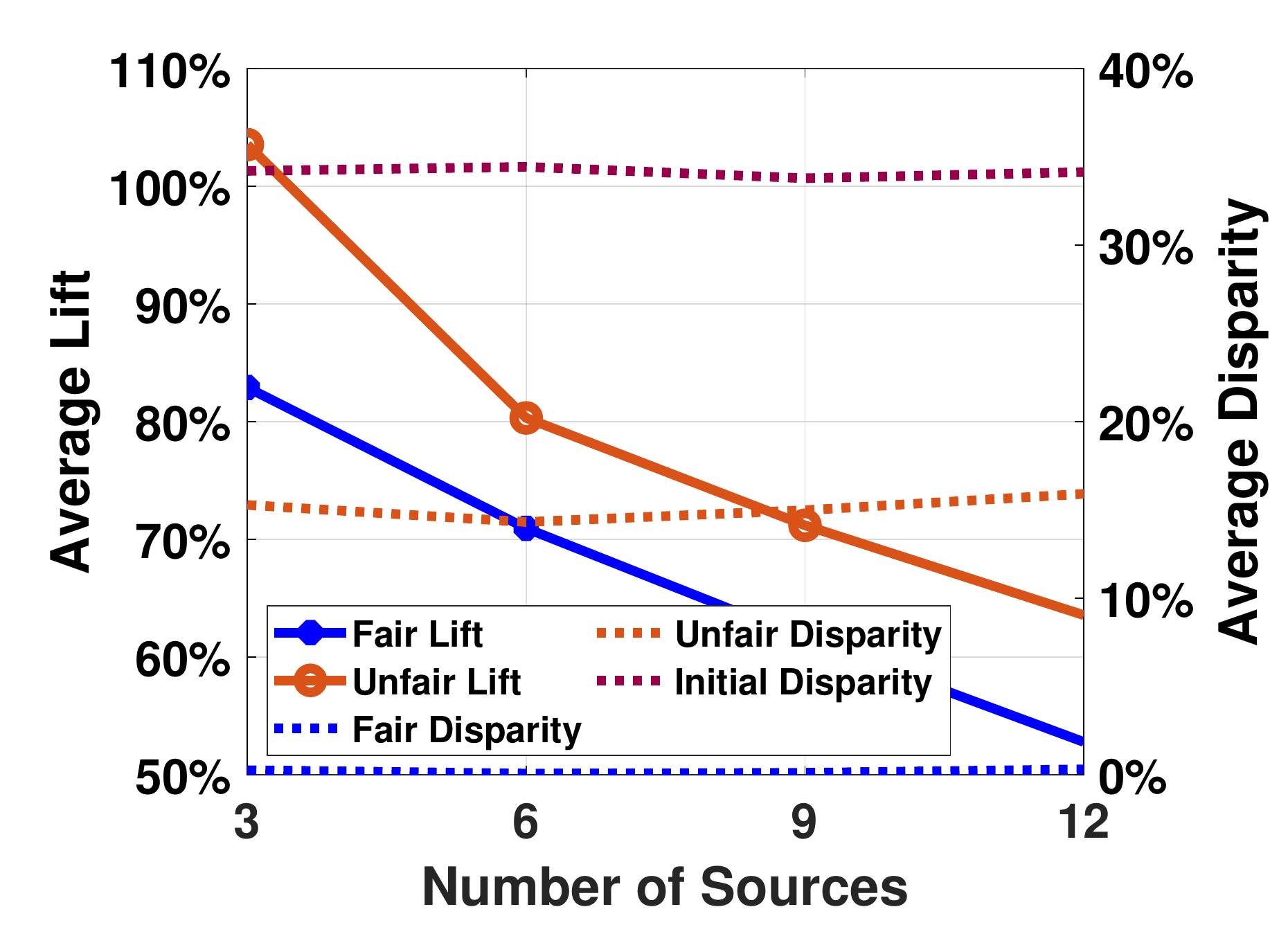}
    \vspace{-8mm} \caption{\small\new{Alg.~\ref{algo:lp-iter}: w/ and w/o the fairness constraint, comparing lift and disparity}}
        \label{fig:cof}
    \end{minipage}
    \hfill
    \begin{minipage}[t]{0.23\linewidth}
        \centering
        \includegraphics[width=\textwidth]{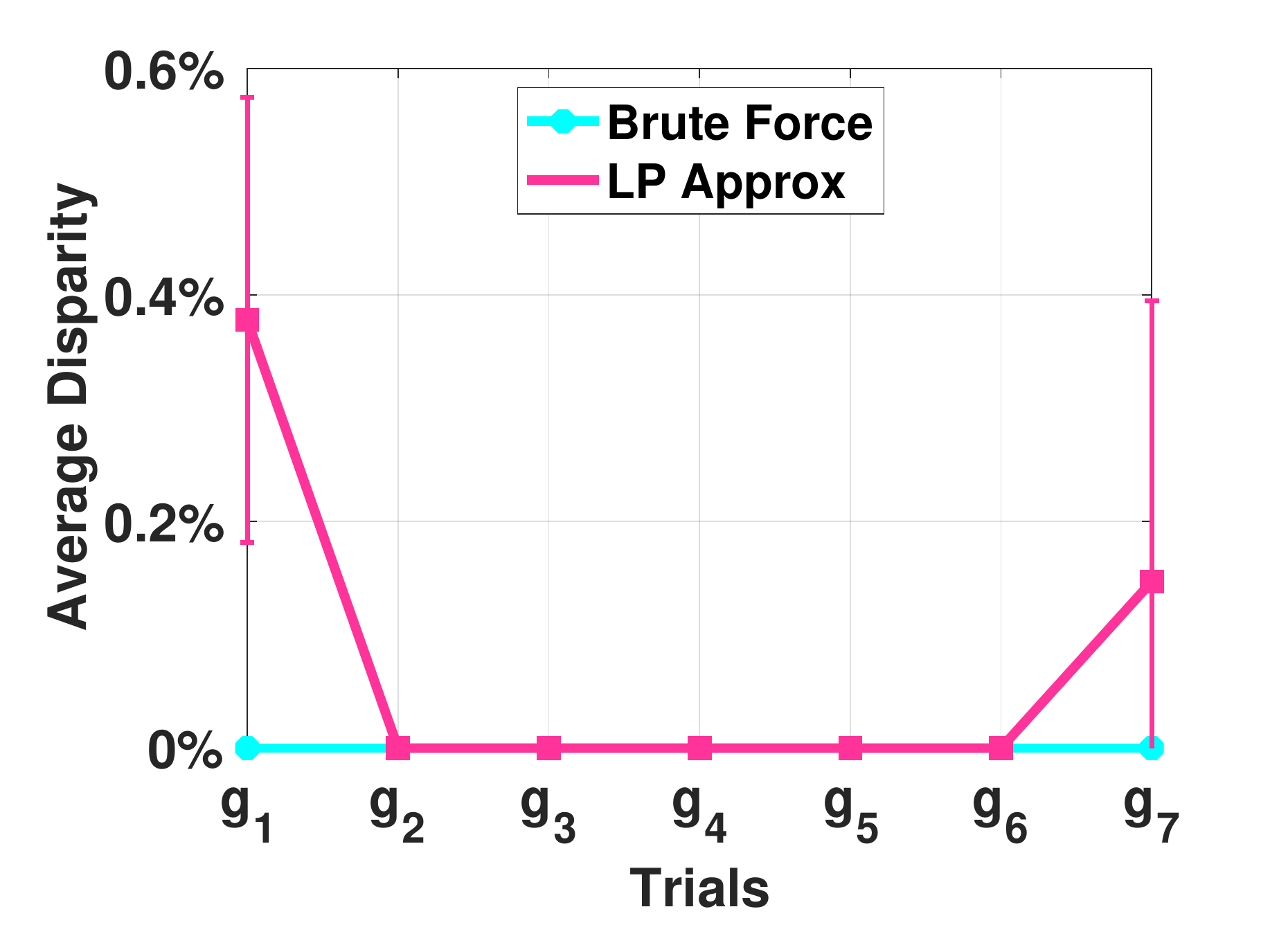}
        \vspace{-8mm}\caption{\small demonstration of near optimal disparity of $\LP$-Approx}
        \label{fig:optimal-disparity}
    \end{minipage}
    \hfill
    \begin{minipage}[t]{0.23\linewidth}
        \centering
        \includegraphics[width=\textwidth]{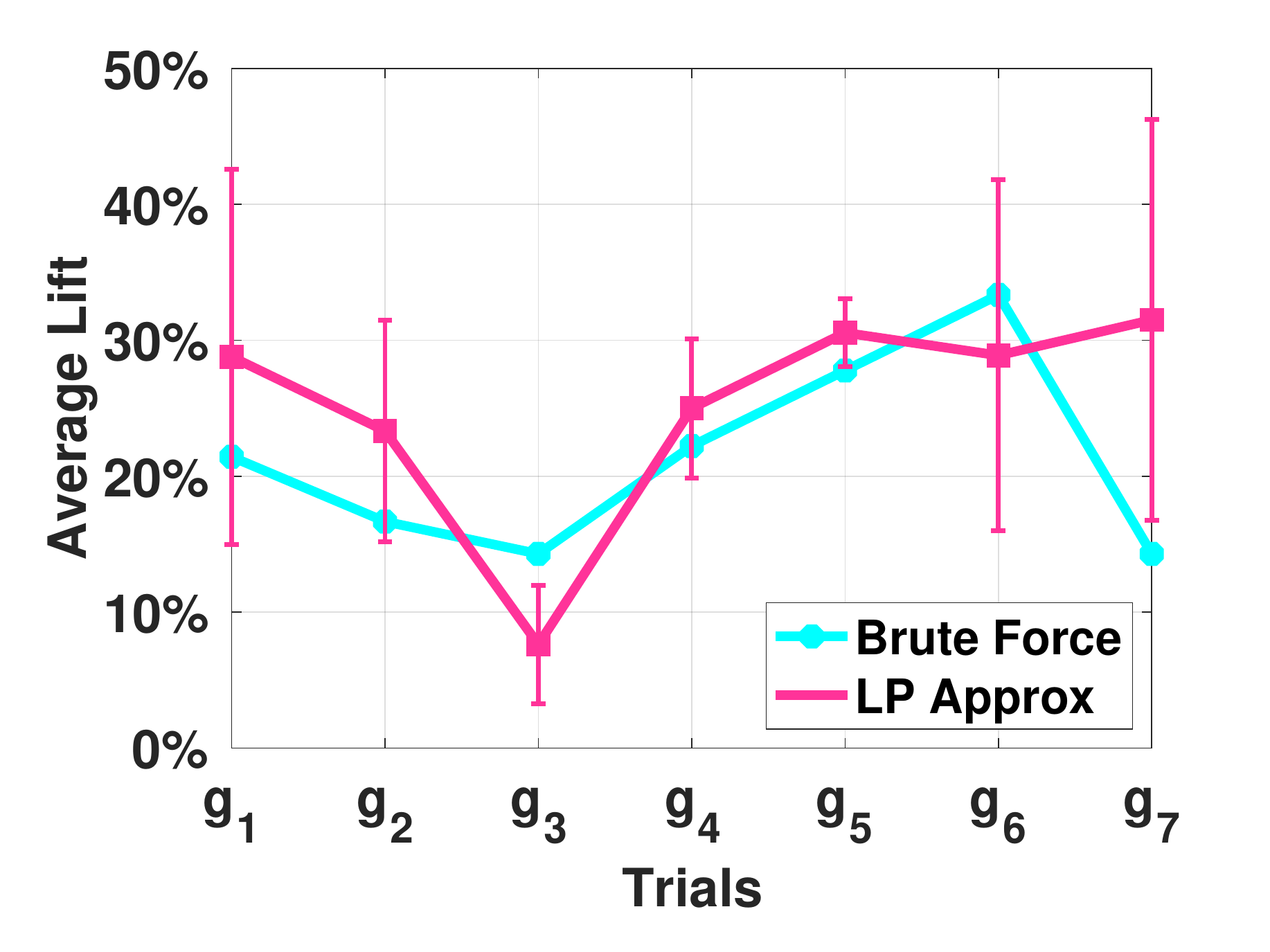}
        \vspace{-8mm}\caption{\small demonstration of near optimal lift of $\LP$-Approx}
        \label{fig:optimal-lift}
    \end{minipage}
    \hfill
    \begin{minipage}[t]{0.23\linewidth}
         \centering
         \includegraphics[width=\textwidth]{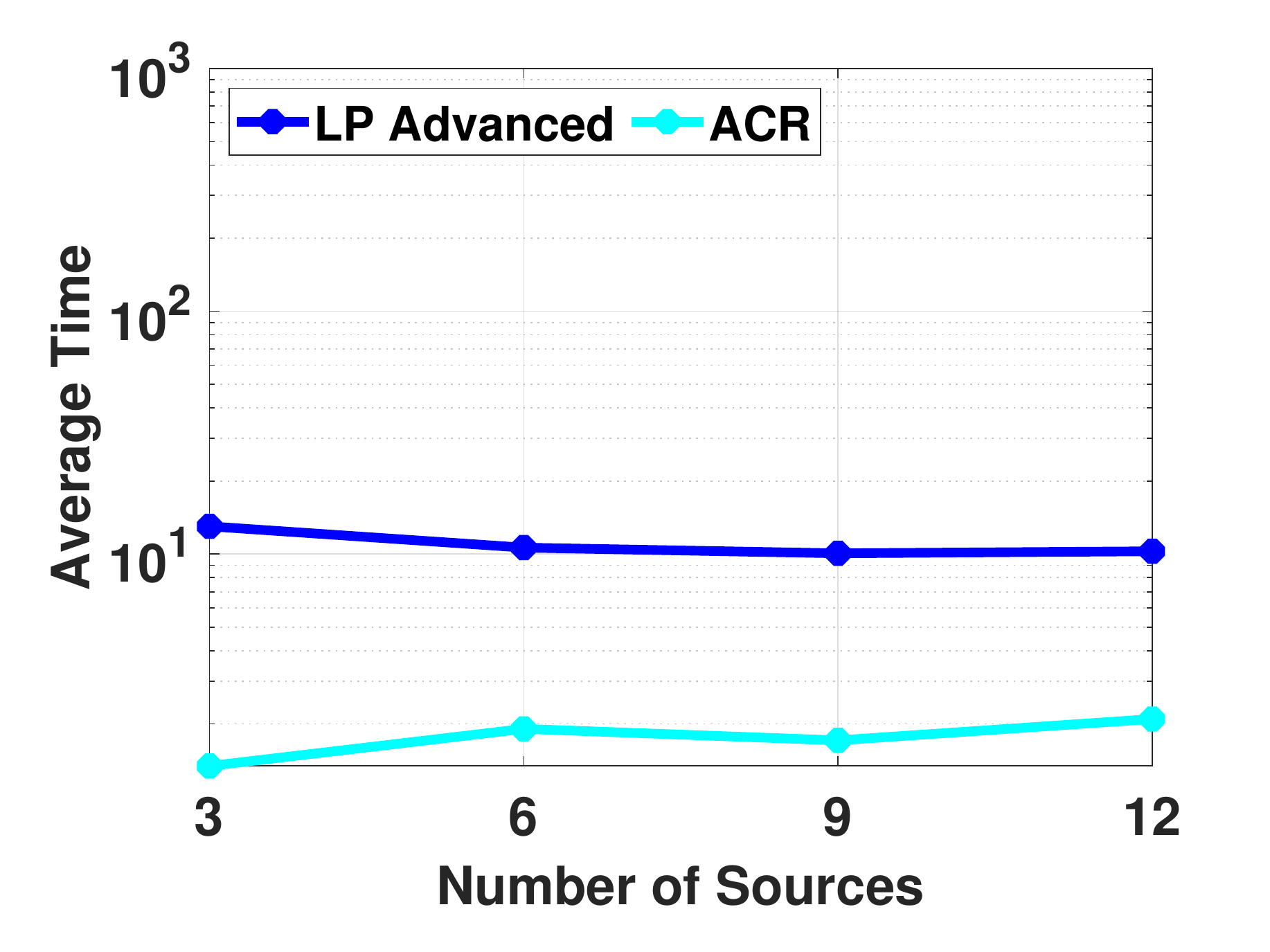}
         \vspace{-8mm}\caption{\small\new{Runtime with varying sources at 500 nodes}}
         \label{fig:500-varying-runtime}
    \end{minipage}
    \vspace{-4mm}
\end{figure*}
\begin{figure*}[!tb] 
    \begin{minipage}[t]{0.23\linewidth}
        \centering
        \includegraphics[width=\textwidth]{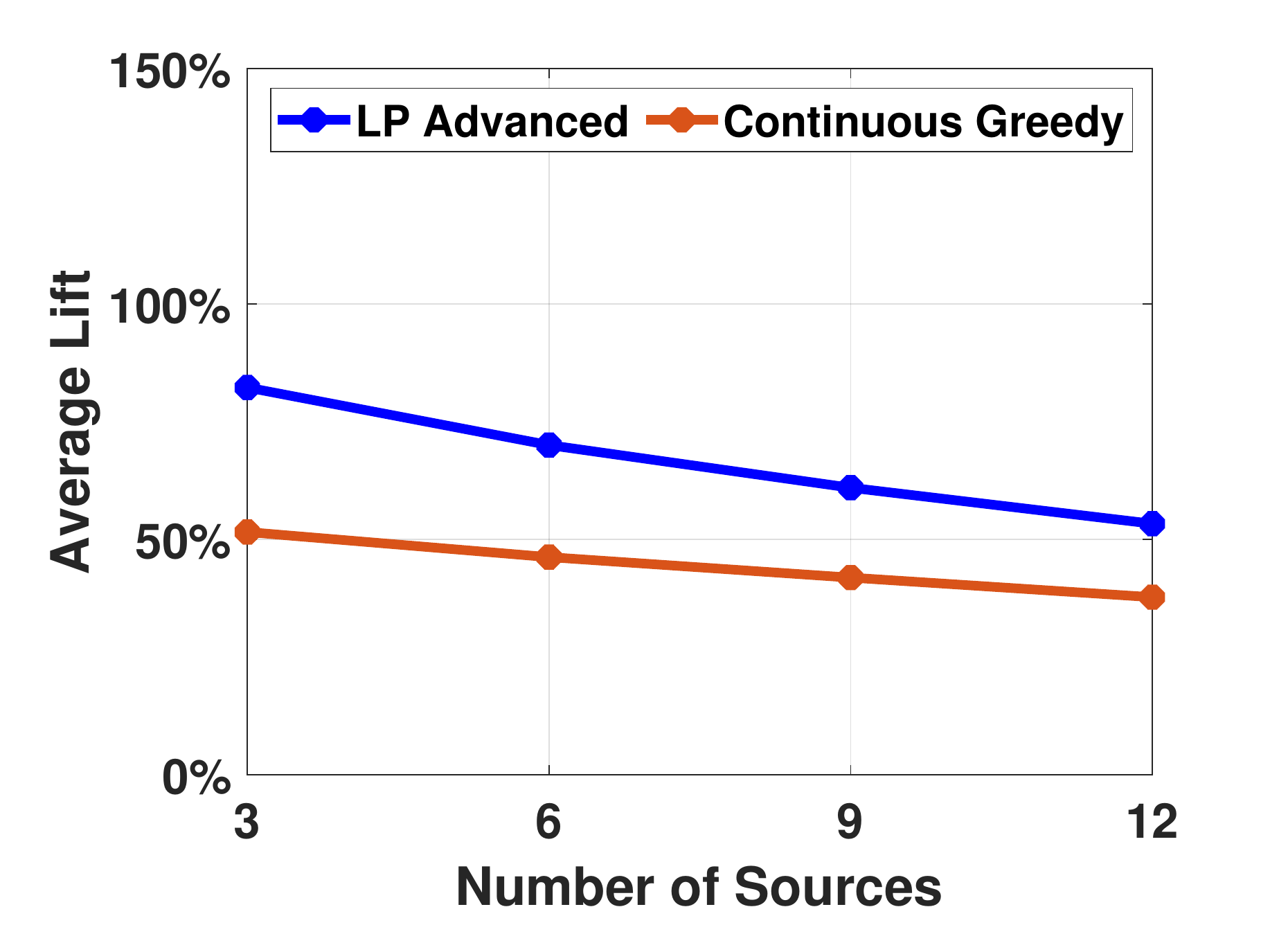}
    \vspace{-8mm} \caption{\small\new{Lift with varying sources at 500 nodes}}
        \label{fig:500-varying-lift}
    \end{minipage}
    \hfill
    \begin{minipage}[t]{0.23\linewidth}
        \centering
        \includegraphics[width=\textwidth]{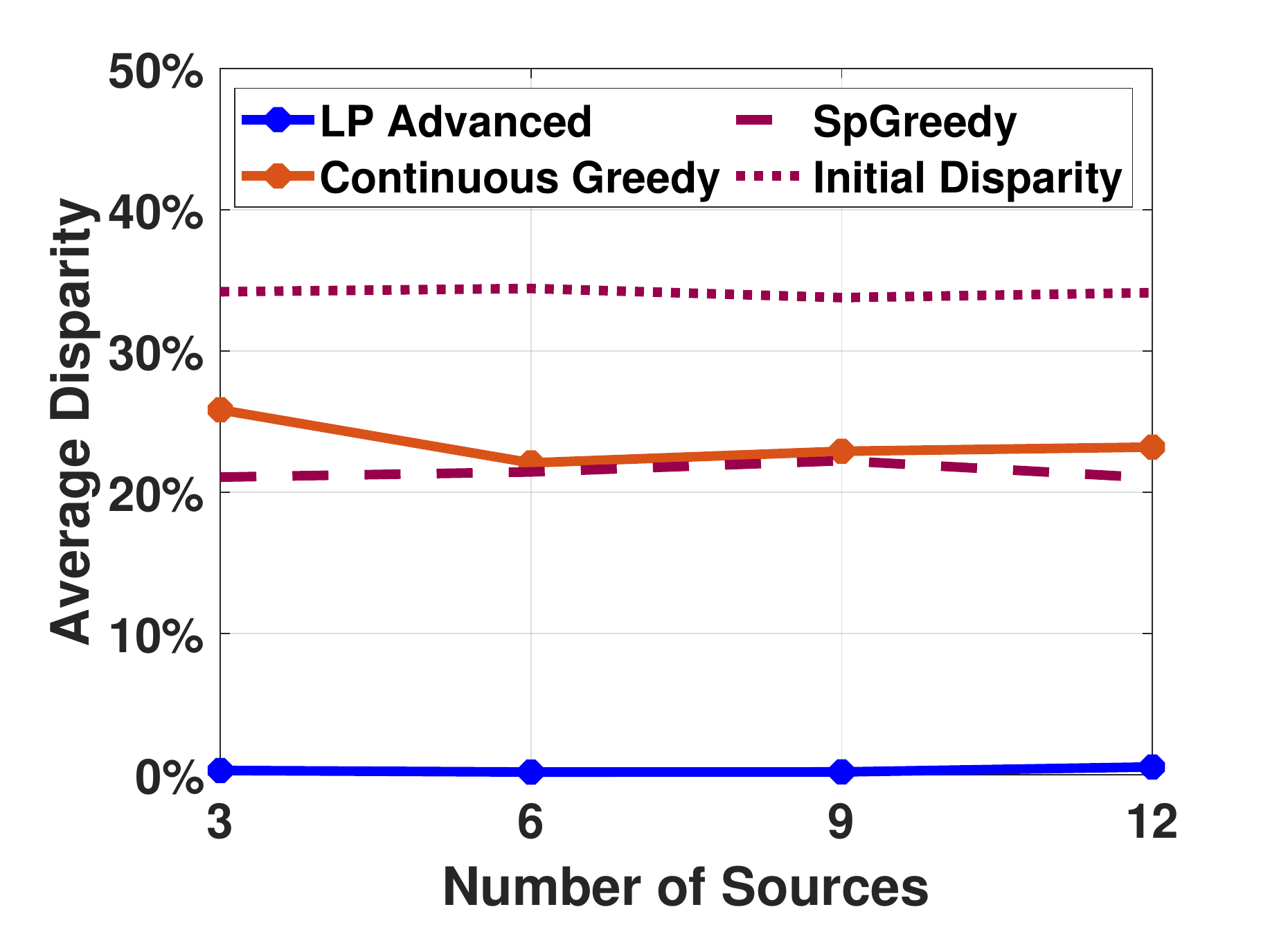}
        \vspace{-8mm}\caption{\small\new{Disparity with varying sources at 500 nodes}}
        \label{fig:500-varying-disparity}
    \end{minipage}
    \hfill
    \begin{minipage}[t]{0.23\linewidth}
        \centering
        \includegraphics[width=\textwidth]{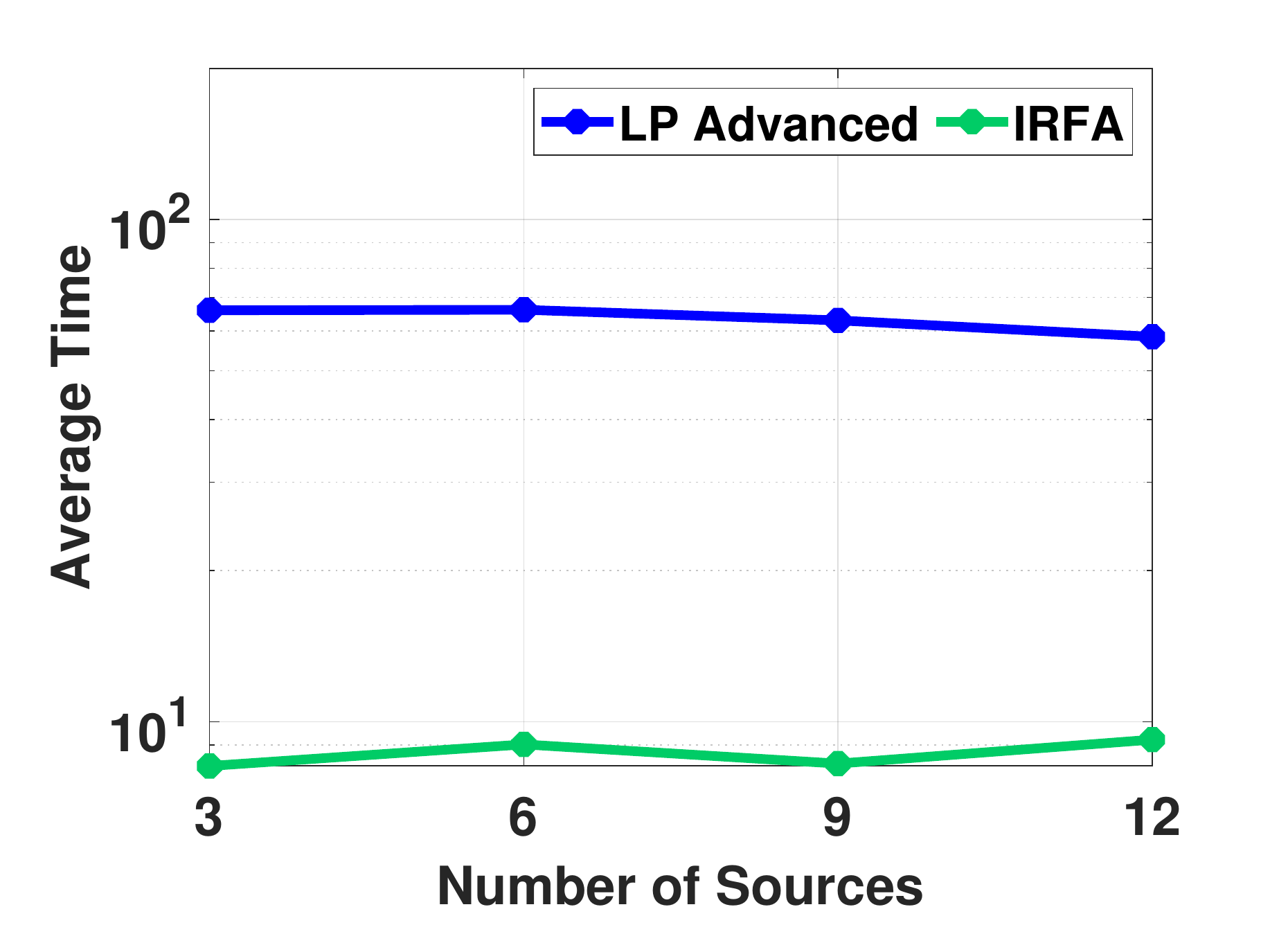}
        \vspace{-8mm}\caption{\small\new{Runtime with varying sources at 1000 nodes}}
        \label{fig:1000-varying-time}
    \end{minipage}
    \hfill
    \begin{minipage}[t]{0.23\linewidth}
         \centering
         \includegraphics[width=\textwidth]{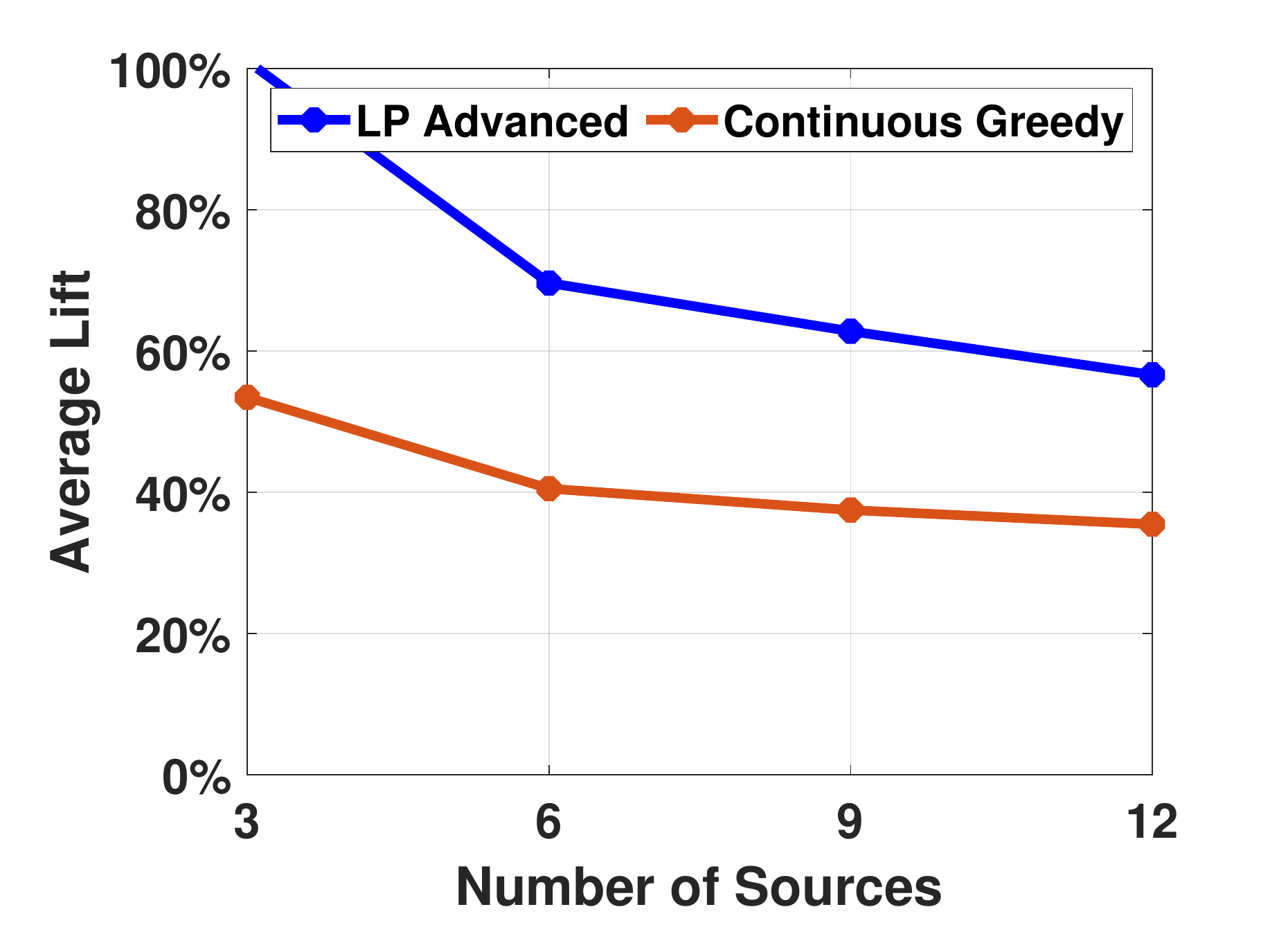}
         \vspace{-8mm} \caption{\small\new{Lift with varying sources at 1000 nodes}}
         \label{fig:1000-varying-lift}
    \end{minipage}
    \vspace{-4mm}
\end{figure*}
\begin{figure*}[!tb] 
    \begin{minipage}[t]{0.23\linewidth}
         \centering
         \includegraphics[width=\textwidth]{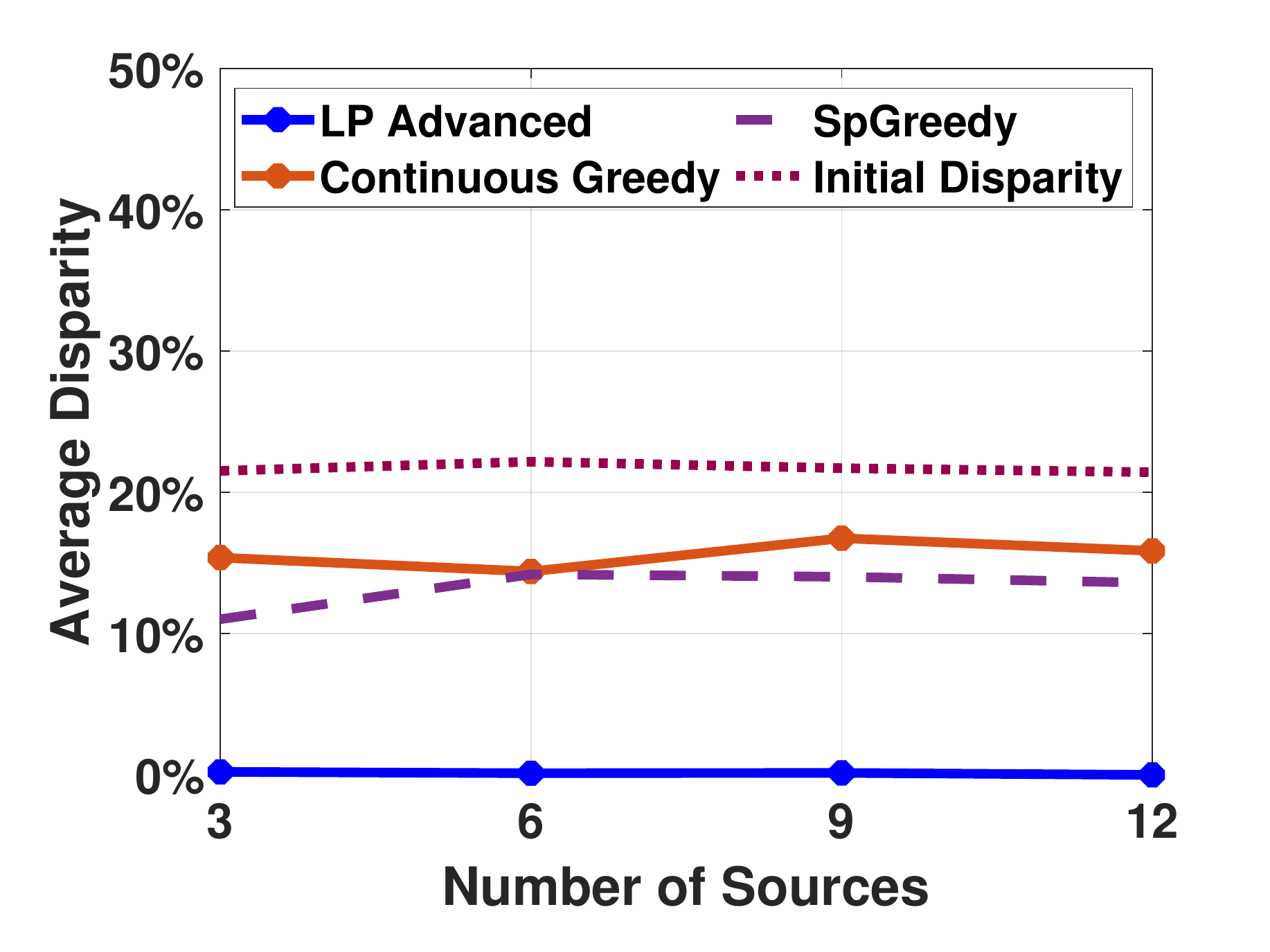}
         \vspace{-8mm} \caption{\small\new{Disparity with varying sources at 1000 nodes}}
         \label{fig:1000-varying-dispairty}
    \end{minipage}
    \hfill
    \begin{minipage}[t]{0.23\linewidth}
        \centering
        \includegraphics[width=\textwidth]{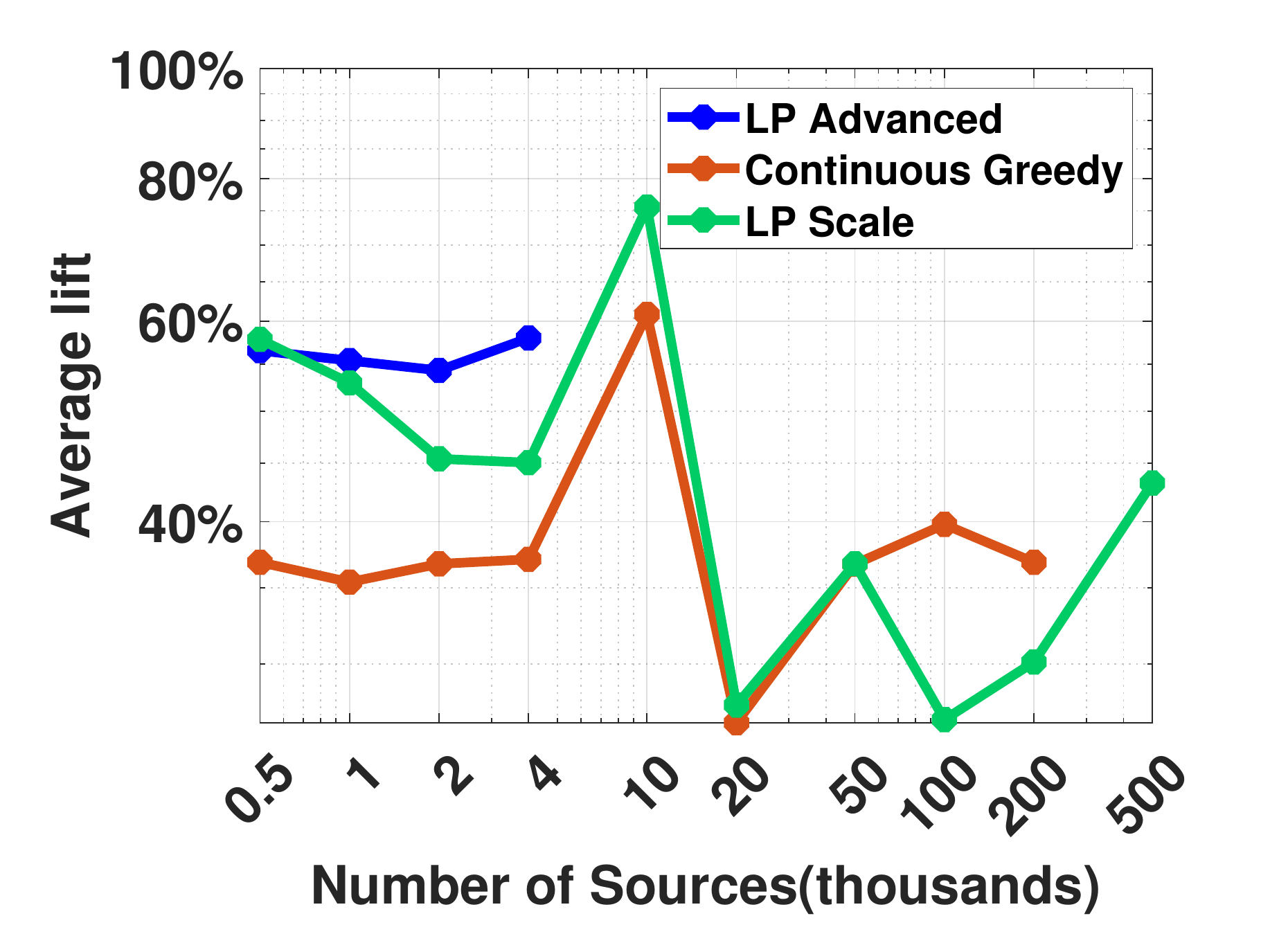}
        \vspace{-8mm}\caption{\small\new{Lift at scale}}
        \label{fig:scaling-lift}
    \end{minipage}
    \hfill
    \begin{minipage}[t]{0.23\linewidth}
        \centering
        \includegraphics[width=\textwidth]{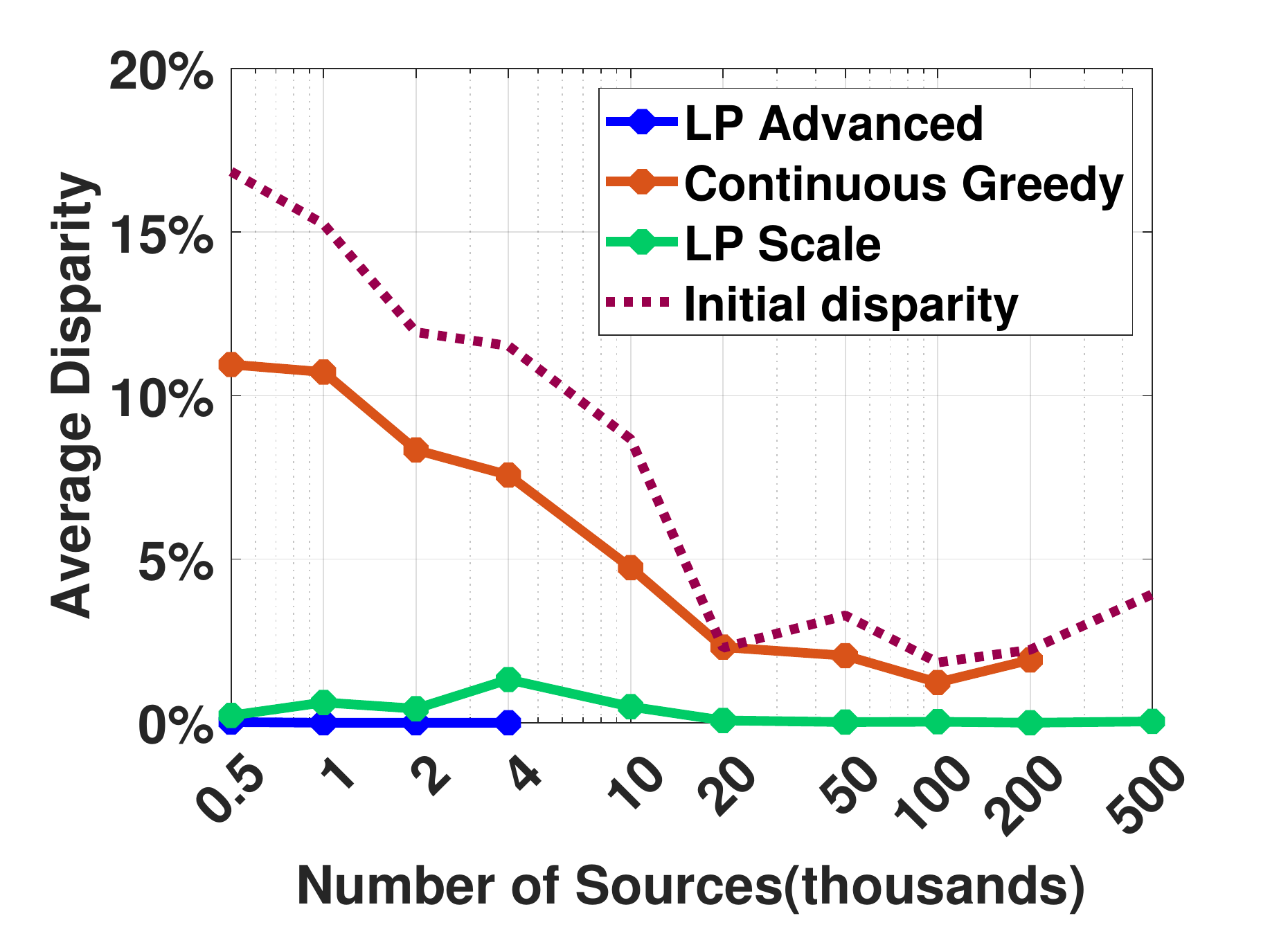}
        \vspace{-8mm}\caption{\small\new{Disparity at scale}}
        \label{fig:scaling-disparity}
    \end{minipage}
    \hfill
    \begin{minipage}[t]{0.23\linewidth}
        \centering
        \includegraphics[width=\textwidth]{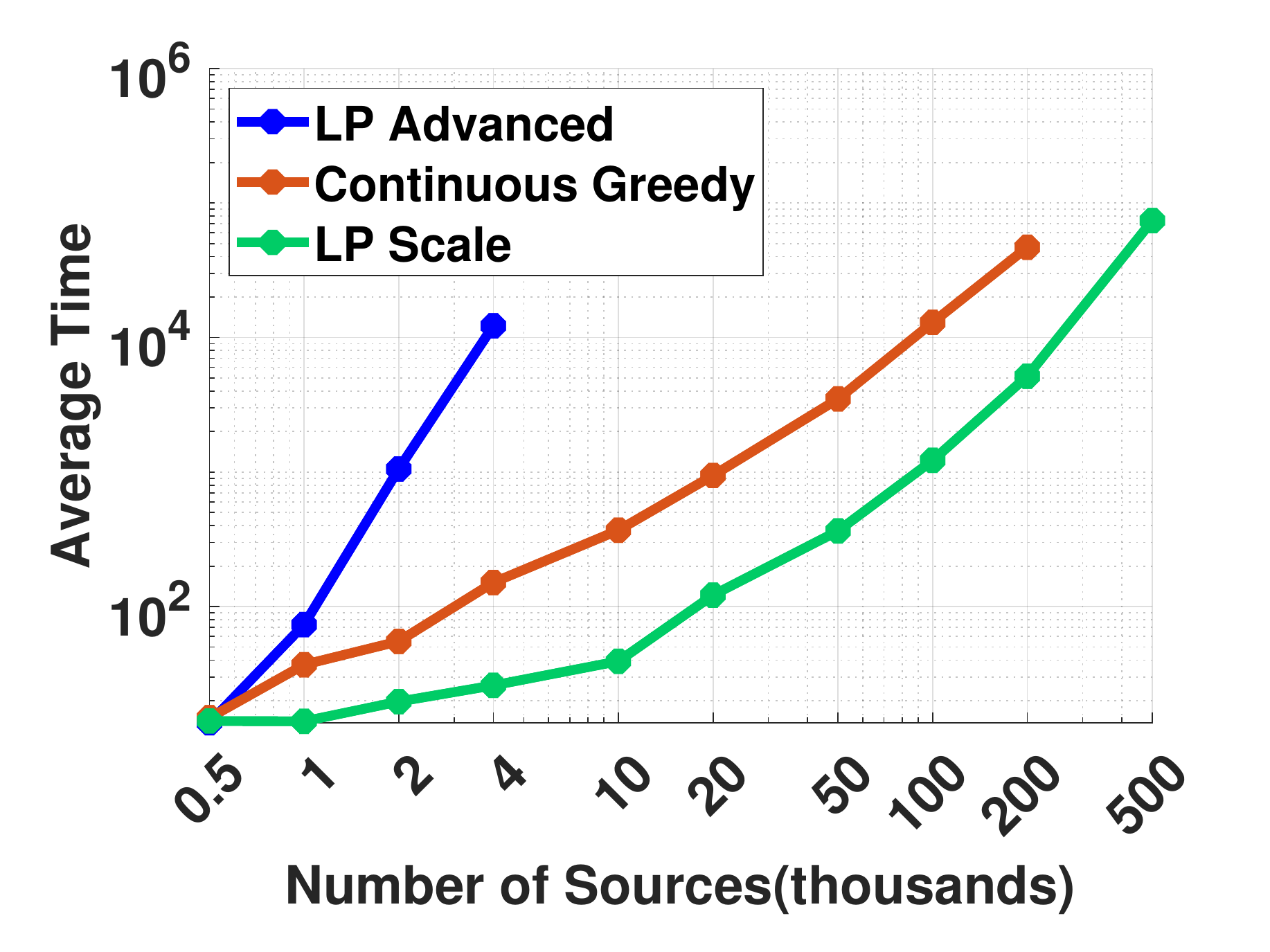}
        \vspace{-9mm}\caption{\small\new{Runtime at scale}}
        \label{fig:scaling-runtime}
    \end{minipage}
\end{figure*}



\new{Figure~\ref{fig:optimal-lift} demonstrates the near optimal lift results.
For the previously mentioned disparity results, $\LP$-Approx achieved optimal lift} in every case except two. It should be noted that in some cases $\LP$-Approx \new{strictly dominates} the optimal solution in both disparity and lift. This is because occasionally in rounding $> k$ edges are selected, \new{as only} the expected number of edges for each node is $k$.



\stitle{Comparison with \new{Various} Baseline \new{Methods}}
\new{Algorithm~\ref{algo:lp-iter} ($\LP$-Advanced), is demonstrated to achieve with both methods of candidate edge selection a higher content spread and lower disparity than all baselines. These results are demonstrated on both the $500$ node Antelope Valley datasets, and on $1000$ node samples of the Pokec dataset, with sources selected to be between $30-35\%$ and $20-25\%$ respectively. Experiments were done varying $k$, $p$, and $|V_s|$. For each setting, 20 trials were run, with $k=3$, $p=0.5$ and $|V_s|=3$ when not being varied (varying $k$ and varying $p$ are included in the appendix.)}

\new{In Table \ref{table:varying-sources-500}, baseline methods and $\LP$-Advanced are compared on 500 nodes. The bold elements contain the best values for each property on each setting. $\LP$-Advanced had the highest lift and lowest disparity compared against all baselines in all instances. It however was not the fastest, where ACR-FoF had the lowest runtime (Figure ~\ref{fig:500-varying-runtime} demonstrates the comparison between $\LP$-Advanced and ACR-FoF w.r.t. runtime). Even though we are 10x slower than ACR-FoF, we had 3-4x greater content spread lift, and while they reduced disparity by at most 33\% we reduced it by 99\%!
The baseline with the next highest lift was Continuous Greedy (Figure ~\ref{fig:500-varying-lift} demonstrates the comparison between $\LP$-Advanced and Continuous Greedy w.r.t. lift), and the next lowest disparity was closely between Continuous Greedy and SpGreedy (Figure ~\ref{fig:500-varying-disparity} demonstrates comparison between $\LP$-Advanced and both these methods w.r.t. disparity). Additionally, at 500 nodes, $\LP$-Advanced had a shorter runtime than both Continuous Greedy and SpGreedy.}

\new{For 1000 node graphs, $\LP$-Advanced is compared against baselines in Table \ref{table:varying-sources-1000}. Again, the best results are shown in bold. $\LP$-Advanced still achieved the highest lift and lowest disparity, and in these instances IRFA had the lowest runtime (Figure ~\ref{fig:1000-varying-time} demonstrates the comparison between $\LP$-Advanced and IRFA w.r.t. runtime)
While we are 10x slower than IRFA we produced 3x greater lift in content spread, and they achieved only a 25\% improvement in disparity compared to our 95\% improvement.
Continuous Greedy produced the next highest lift. (Figure ~\ref{fig:1000-varying-lift} demonstrates the comparison between $\LP$-Advanced and Continuous Greedy w.r.t. lift), and the next lowest disparity was between SpGreedy and Continous Greedy (Figure ~\ref{fig:1000-varying-dispairty} demonstrates comparison between $\LP$-Advanced and both these methods w.r.t. disparity). Additionally, at 1000 nodes, $\LP$-Advanced runs faster than only SpGreedy.}

\stitle{\new{Scalability}} \new{For these experiments, Pokec was sampled at 500, 1000, 2000, and 4000, 10k, 20k, 50k, 200k, and 500k nodes, with $k = 3$, and $p = 0.5$ for each setting, and $|V_s| = 5$ for sizes up to and including 10k nodes, while $|V_s| = 10$ for 20k nodes and greater. For up to and including 4k nodes, results are averaged over 5 trials. Due to runtime constraints, for 10k nodes and higher, results are from one trial. FoF candidate edges were used. The baseline methods were compared with both $\LP$-Advanced as well as \scalable. If an algorithm took greater than 24 hours, results were not reported. Additionally, IRFA is not reported over 20k due to constraints with generating the necessary DAG from the initial graph.}

\new{The results are shown in Table ~\ref{tab:scaling} (best results in bold). This table shows that while $\LP$-Advanced is able to be run, it produces both the highest lift and lowest disparity (except once our scalable algorithm produced higher lift). $\LP$-Advanced produced 1.5x as much content spread as the next closest baseline, Continuous Greedy. SpGreedy achieved the next best reduction in disparity of 50\%, and $\LP$-Advanced {\em reduced the disparity by at least 99.9\%}.
In all other cases, the next best algorithm is \scalable, producing only marginally less lift and slightly higher disparity. For 500 and 1000 nodes, IRFA had a lower runtime than \scalable. However, \scalable had 2x the lift of IRFA and while IRFA reduced disparity by 25\%, \scalable reduced disparity by 96\%.
For 2000 nodes and higher, \scalable always resulted in the shortest runtime. Additionally, when $\LP$-Advanced cannot be run, \scalable always produces the lowest disparity among all methods. \scalable produced the highest lift at 10k, 20k, and 50k nodes, followed by Continuous Greedy. Then for two settings, 100k nodes and 200k nodes, Continuous Greedy produced a higher lift than \scalable.
On the same settings Continuous Greedy only reduced the disparity by at most 25\%, whereas \scalable reduced the disparity by at least 98\%. Since all other methods failed to run within runtime constraints at 500k nodes, \scalable is the defacto best option. $\LP$-Advanced, \scalable, and Continuous Greedy are compared w.r.t. lift, disparity, runtime and in Figures~\ref{fig:scaling-lift}, \ref{fig:scaling-disparity}, and \ref{fig:scaling-runtime}~respectively.} 

\begin{table*}[]
\caption{\small\new{Comparing Lift (\%), Disparity (\%), and Runtime (sec), on 500 node Antelope Valley data sets, of LP-Advanced and all baselines, varying $|V_s|$, using 20 trials per setting. Both FoF and IGC were considered for candidate edge selection. $k = 3$, $p=0.5$}}
\resizebox{\textwidth}{!}{ %
\begin{tabular}{ | @{}c@{} | @{}c @{}| @{}c @{}| @{}c@{}c@{}c@{} | @{}c@{}c@{}c@{} | @{}c@{}c@{}c@{} | @{}c@{}c@{}c@{} | @{}c@{}c@{}c@{} |}\hline
\multirow{2}{*}
{\begin{tabular}[c]{ @ {}c @ {}}Candidate\\ Edges\end{tabular}} &
\multirow{2}{*}{$|V_s|$}
&  \multirow{2}{*}{\begin{tabular}[c]{ @ {} c @ {}}Initial\\ Disparity\end{tabular}}
&  \multicolumn{3}{@{}c@{} |}{LP - Advanced (ours)}
& \multicolumn{3}{@{}c@{} |}{Continuous Greedy}
&  \multicolumn{3}{@{}c@{} |}{IRFA}
& \multicolumn{3}{@{}c @{}|}{SpGreedy}
& \multicolumn{3}{@{}c@{} |}{ACR - FoF}
\\ \cline{4 - 18}
& & & \multicolumn{1}{@{}c@{} |}{Lift} & \multicolumn{1}{@{}c@{} |}{Disparity  } & Time
& \multicolumn{1}{@{}c@{} |}{Lift} & \multicolumn{1}{@{}c@{} |}{Disparity} & Time
& \multicolumn{1}{@{}c@{} |}{Lift} & \multicolumn{1}{@{}c@{} |}{Disparity} & Time
& \multicolumn{1}{@{}c@{} |}{Lift} & \multicolumn{1}{@{}c@{} |}{Disparity} & Time
& \multicolumn{1}{@{}c@{} |}{Lift} & \multicolumn{1}{@{}c@{}|}{Disparity} & Time
\\ \hline
\multirow{4}{*}{FoF}
& 3 & 34.2$\pm$0.3 & \multicolumn{1}{@{}c@{} |}{\textbf{82.3$\pm$14.4}} & \multicolumn{1}{@{}c@{} |}{\textbf{0.3$\pm$0.5}} & 13.0$\pm$13.9  & \multicolumn{1}{@{}c@{} |}{51.6$\pm$8.2} & \multicolumn{1}{@{}c@{} |}{25.8$\pm$5.6} & 33.0$\pm$8.9  & \multicolumn{1}{@{}c@{} |}
{25$\pm$9.1} & \multicolumn{1}{@{}c@{} |}{29.2$\pm$8.6} & 4.3$\pm$1.6  & \multicolumn{1}{@{}c@{} |}{20.1$\pm$9.0} & \multicolumn{1}{@{}c@{} |}{21.1$\pm$6.1} & 41.1$\pm$10.0  & \multicolumn{1}{@{}c@{} |}{27.1$\pm$9.0} & \multicolumn{1}{@{}c@{} |}{25.0$
\pm$7.1} & \textbf{1.3$\pm$0.7}
\\ \cline{2 - 18}
& 6 & 34.4$\pm$0.5 & \multicolumn{1}{@{}c@{} |}{\textbf{70.0$\pm$11.3}} & \multicolumn{1}{@{}c@{} |}{\textbf{0.2$\pm$0.8}} & 10.6$\pm$2.6  & \multicolumn{1}{@{}c@{} |}{46.2$\pm$7.3} & \multicolumn{1}{@{}c@{} |}{22.1$\pm$6.9} & 26.6$\pm$4.0  & \multicolumn{1}{@{}c@{} |
}{22.7$\pm$5.8} & \multicolumn{1}{@{}c@{} |}{26.1$\pm$6.9} & 3.6$\pm$1.6  & \multicolumn{1}{@{}c@{} |}{17.8$\pm$5.4} & \multicolumn{1}{@{}c@{} |}{21.5$\pm$5.0} & 43.3$\pm$6.9  & \multicolumn{1}{@{}c@{} |}{24.4$\pm$4.3} & \multicolumn{1}{@{}c@{} |}{25.0$\pm$5.4} & \textbf{1.9$\pm$0.8}
\\ \cline{2 - 18}
& 9 & 33.8$\pm$0.6 & \multicolumn{1}{@{}c@{} |}{\textbf{61.0$\pm$6.2}} & \multicolumn{1}{@{}c@{} |}{\textbf{0.2$\pm$0.4}} & 10.1$\pm$3.5  & \multicolumn{1}{@{}c@{} |}{41.9$\pm$5.1} & \multicolumn{1}{@{}c@{} |}{22.9$\pm$6.2} & 25.5$\pm$6.9  & \multicolumn{1}{@{}c@{} |}{
21.0$\pm$5.3} & \multicolumn{1}{@{}c@{} |}{27.5$\pm$5.0} & 4.0$\pm$1.4  & \multicolumn{1}{@{}c@{} |}{16.5$\pm$3.0} & \multicolumn{1}{@{}c@{} |}{22.2$\pm$3.7} & 41.4$\pm$7.7  & \multicolumn{1}{@{}c@{} |}{21.2$\pm$4.4} & \multicolumn{1}{@{}c@{} |}{27.4
$\pm$4.7} & \textbf{1.7$\pm$0.8}
\\ \cline{2 - 18}
& 12 & 34.1$\pm$0.5 & \multicolumn{1}{@{}c@{} |}{\textbf{53.4$\pm$7.9}} & \multicolumn{1}{@{}c@{} |}{\textbf{0.6$\pm$1.7}} & 10.3$\pm$3.4  & \multicolumn{1}{@{}c@{} |}{37.8$\pm$5.7} & \multicolumn{1}{@{}c@{} |}{23.2$\pm$5.6} & 23.3$\pm$6.5  & \multicolumn{1}{@{}c@{} |}
{20.1$\pm$5.0} & \multicolumn{1}{@{}c@{} |}{26.8$\pm$3.9} & 3.9$\pm$1.8  & \multicolumn{1}{@{}c@{} |}{16.7$\pm$3.0} & \multicolumn{1}{@{}c@{} |}{21.0$\pm$4.4} & 45.4$\pm$9.5  & \multicolumn{1}{@{}c@{} |}{20.2$\pm$4.3} & \multicolumn{1}{@{}c@{} |}{25.0$\pm$4.5} & \textbf{2.1$\pm$1.1}
\\ \hline
\multirow{4}{*}{IGC}
& 3 & 34.2$\pm$0.3 & \multicolumn{1}{@{}c@{} |}{\textbf{94.0$\pm$20.5}} & \multicolumn{1}{@{}c@{} |}{\textbf{0.6$\pm$1.5}} & 18.7$\pm$4.8  & \multicolumn{1}{@{}c@{} |}{61.9$\pm$13.5} & \multicolumn{1}{@{}c@{} |}{21.0$\pm$10.1} & 35.1$\pm$10.1  & \multicolumn{1}{@{}c@{}
 |}{28.0$\pm$9.9} & \multicolumn{1}{@{}c@{} |}{27.5$\pm$9.2} & 3.0$\pm$1.2  & \multicolumn{1}{@{}c@{} |}{23.6$\pm$11.6} & \multicolumn{1}{@{}c@{} |}{20.0$\pm$7.1} & 54.7$\pm$16.6  & \multicolumn{1}{@{}c@{} |}{20.0$\pm$6.3} & \multicolumn{1}{@{}c@{} |}
{26.3$\pm$7.2} & \textbf{2.5$\pm$1.0}
\\ \cline{2-18}
& 6 & 34.4$\pm$0.5 & \multicolumn{1}{@{}c@{} |}{\textbf{77.0$\pm$13.5}} & \multicolumn{1}{@{}c@{} |}{\textbf{0.1$\pm$0.2}} & 11.7$\pm$2.9  & \multicolumn{1}{@{}c@{} |}{51.2$\pm$8.4} & \multicolumn{1}{@{}c@{} |}{16.7$\pm$8.0} & 24.1$\pm$6.5  & \multicolumn{1}{@{}c@{} |
}{21.0$\pm$5.9} & \multicolumn{1}{@{}c@{} |}{27.5$\pm$6.1} & 3.0$\pm$0.7  & \multicolumn{1}{@{}c@{} |}{19.9$\pm$5.9} & \multicolumn{1}{@{}c@{} |}{19.8$\pm$5.9} & 51.2$\pm$9.9  & \multicolumn{1}{@{}c@{} |}{18.2$\pm$5.3} & \multicolumn{1}{@{}c@{} |}{27.9$\pm$5.8} & \textbf{1.9$\pm$0.6}
\\ \cline{2-18}
& 9 & 33.8$\pm$0.6 & \multicolumn{1}{@{}c@{} |}{\textbf{65.8$\pm$9.5}} & \multicolumn{1}{@{}c@{} |}{\textbf{0.1$\pm$0.3}} & 17.4$\pm$15.8  & \multicolumn{1}{@{}c@{} |}{46.7$\pm$7.6} & \multicolumn{1}{@{}c@{} |}{20.0$\pm$7.4} & 24.1$\pm$5.8  & \multicolumn{1}{@{}c@{} |
}{21.7$\pm$5.1} & \multicolumn{1}{@{}c@{} |}{27.4$\pm$4.7} & 2.8$\pm$0.7  & \multicolumn{1}{@{}c@{} |}{19.1$\pm$4.9} & \multicolumn{1}{@{}c@{} |}{21.7$\pm$5.4} & 48.0$\pm$11.0  & \multicolumn{1}{@{}c@{} |}{19.8$\pm$4.2} & \multicolumn{1}{@{}c@{} |}{26.9$\pm$6.4} & \textbf{2.2$\pm$1.0}
\\ \cline{2-18}
& 12 & 34.1$\pm$0.5 & \multicolumn{1}{@{}c@{} |}{\textbf{56.0$\pm$7.7}} & \multicolumn{1}{@{}c@{} |}{\textbf{0.1$\pm$0.1}} & 13.4$\pm$5.0  & \multicolumn{1}{@{}c@{} |}{40.9$\pm$5.4} & \multicolumn{1}{@{}c@{} |}{20.2$\pm$4.2} & 26.7$\pm$8.0  & \multicolumn{1}{@{}c@{} |}{
20.8$\pm$4.1} & \multicolumn{1}{@{}c@{} |}{26.0$\pm$6.3} & 3.2$\pm$1.0  & \multicolumn{1}{@{}c@{} |}{16.7$\pm$5.3} & \multicolumn{1}{@{}c@{} |}{22.2$\pm$5.1} & 53.3$\pm$8.7  & \multicolumn{1}{@{}c@{} |}{18.19$\pm$5.2} & \multicolumn{1}{@{}c@{} |}{25.9$\pm$4.0} & \textbf{2.2$\pm$1.1}
\\ \hline
\end{tabular}}
\label{table:varying-sources-500}\
\end{table*}
\begin{table*}[]
\caption{\small\new{Comparing Lift (\%), Disparity (\%), and Runtime (sec), on 1000 node Pokec samples, of LP-Advanced and all baselines, varying $|V_s|$. 20 trials are used at each setting. Both Friend of Friend and Intersecting Group Count are reported. $k = 3$, $p=0.5$}}
\resizebox{\textwidth}{!}{ %
\begin{tabular}{ | @{}c@{} | @{}c @{}| @{}c@{} | @{}c@{}c@{}c@{} | @{}c@{}c@{}c@{} | @{}c@{}c@{}c@{} | @{}c@{}c@{}c@{} | @{}c@{}c@{}c@{} |}\hline
\multirow{2}{*}
{\begin{tabular}[c]{ @ {}c @ {}}Candidate\\ Edges\end{tabular}} &
\multirow{2}{*}{$|V_s|$}
&  \multirow{2}{*}{\begin{tabular}[c]{ @ {} c @ {}}Initial\\ Disparity\end{tabular}}
&  \multicolumn{3}{@{}c@{} |}{LP - Advanced (ours)}
& \multicolumn{3}{@{}c@{} |}{Continuous Greedy}
&  \multicolumn{3}{@{}c@{} |}{IRFA}
& \multicolumn{3}{@{}c@{} |}{SpGreedy}
& \multicolumn{3}{@{}c@{} |}{ACR - FoF}
\\ \cline{4 - 18}
& & & \multicolumn{1}{@{}c@{} |}{Lift} & \multicolumn{1}{@{}c@{} |}{Disparity  } & Time
& \multicolumn{1}{@{}c@{} |}{Lift} & \multicolumn{1}{@{}c@{} |}{Disparity} & Time
& \multicolumn{1}{@{}c@{} |}{Lift} & \multicolumn{1}{@{}c@{} |}{Disparity} & Time
& \multicolumn{1}{@{}c@{} |}{Lift} & \multicolumn{1}{@{}c@{} |}{Disparity} & Time
& \multicolumn{1}{@{}c@{} |}{Lift} & \multicolumn{1}{@{}c@{} |}{Disparity} & Time
\\ \hline
\multirow{4}{*}{FoF}
& 3 & 21.5$\pm$1.2 & \multicolumn{1}{@{}c@{} |}{\textbf{101$\pm$40.8}} & \multicolumn{1}{@{}c@{} |}{\textbf{0.2$\pm$0.6}} & 66.0$\pm$18.0  & \multicolumn{1}{@{}c@{} |}{53.5$\pm$16.2} & \multicolumn{1}{@{}c@{} |}{15.4$\pm$4.1} & 51.1$\pm$11.1  & \multicolumn{1}
{@{}c@{} |}{30.9$\pm$16.1} & \multicolumn{1}{@{}c@{} |}{16.4$\pm$4.2} & \textbf{8.2$\pm$2.6}  & \multicolumn{1}{@{}c@{} |}{30.8$\pm$12.8} & \multicolumn{1}{@{}c@{} |}{11.0$\pm$2.9} & 290$\pm$94.5  & \multicolumn{1}{@{}c@{} |}{36.8$\pm$29.5} & \multicolumn{1}
{@{}c@{} |}{16.2$\pm$3.9} & 19.2$\pm$14.4
\\ \cline{2 - 18}

& 6 & 22.2$\pm$1.4 & \multicolumn{1}{@{}c@{} |}{\textbf{69.6$\pm$21.5}} & \multicolumn{1}{@{}c@{} |}{\textbf{0.1$\pm$0.5}} & 66.2$\pm$19.3  & \multicolumn{1}{@{}c@{} |}{40.5$\pm$9.5} & \multicolumn{1}{@{}c@{} |}{14.4$\pm$3.4} & 43.7$\pm$9.1  & \multicolumn{1}{@{}c@{} |}
{24.8$\pm$13.0} & \multicolumn{1}{@{}c@{} |}{18.3$\pm$3.2} & \textbf{9.0$\pm$3.9}  & \multicolumn{1}{@{}c@{} |}{20.1$\pm$8.2} & \multicolumn{1}{@{}c@{} |}{14.2$\pm$2.5} & 281$\pm$82.0  & \multicolumn{1}{@{}c@{} |}{19.8$\pm$12.1} & \multicolumn{1}{@{}c@{} |}{18.3$\pm$3.9} & 19.0$\pm$11.7
\\ \cline{2 - 18}

& 9 & 21.7$\pm$1.3 & \multicolumn{1}{@{}c@{} |}{\textbf{62.8$\pm$13.6}} & \multicolumn{1}{@{}c@{} |}{\textbf{0.2$\pm$0.3}} & 63.0$\pm$11.2  & \multicolumn{1}{@{}c@{} |}{37.5$\pm$6.1} & \multicolumn{1}{@{}c@{} |}{16.8$\pm$3.2} & 42.7$\pm$11.5  & \multicolumn{1}{@{}c@{}
 |}{20.7$\pm$5.5} & \multicolumn{1}{@{}c@{} |}{18.8$\pm$3.3} & \textbf{8.3$\pm$3.7}  & \multicolumn{1}{@{}c@{} |}{18.2$\pm$3.8} & \multicolumn{1}{@{}c@{} |}{14.0$\pm$3.0} & 257$\pm$46.6  & \multicolumn{1}{@{}c@{} |}{19.7$\pm$6.6} & \multicolumn{1}{@{}c@{} |}
{18.0$\pm$3.0} & 13.7$\pm$4.5
\\ \cline{2 - 18}

& 12 & 21.4$\pm$1.1 & \multicolumn{1}{@{}c@{} |}{\textbf{56.6$\pm$9.6}} & \multicolumn{1}{@{}c@{} |}{\textbf{0.0$\pm$0.0}} & 58.5$\pm$11.9  & \multicolumn{1}{@{}c@{} |}{35.5$\pm$5.4} & \multicolumn{1}{@{}c@{} |}{15.9$\pm$2.7} & 40.1$\pm$7.6  & \multicolumn{1}{@{}c@{} |
}{20.1$\pm$3.6} & \multicolumn{1}{@{}c@{} |}{18.3$\pm$2.7} & \textbf{9.2$\pm$3.9} & \multicolumn{1}{@{}c@{} |}{16.9$\pm$3.4} & \multicolumn{1}{@{}c@{} |}{13.6$\pm$1.9} & 275$\pm$95.3  & \multicolumn{1}{@{}c@{} |}{17.1$\pm$6.4} & \multicolumn{1}{@{}c@{} |}{18.0$\pm$2.5} & 17.1$\pm$15.5
\\ \hline
\multirow{4}{*}{IGC}
& 3 & 21.5$\pm$1.2 & \multicolumn{1}{@{}c@{} |}{\textbf{109$\pm$59.8}} & \multicolumn{1}{@{}c@{} |}{\textbf{1.4$\pm$2.6}} & 124$\pm$69.1  & \multicolumn{1}{@{}c@{} |}{67.0$\pm$43.0} & \multicolumn{1}{@{}c@{} |}{13.2$\pm$4.3} & 63.5$\pm$19.2  & \multicolumn{1}{
@{}c@{} |}{31.8$\pm$18.5} & \multicolumn{1}{@{}c@{} |}{16.4$\pm$4.1} & \textbf{8.8$\pm$3.7}  & \multicolumn{1}{@{}c@{} |}{38.6$\pm$37.8} & \multicolumn{1}{@{}c@{} |}{11.8$\pm$4.9} & 345$\pm$118  & \multicolumn{1}{@{}c@{} |}{23.7$\pm$16.3} & \multicolumn{1}{@{}c@{}
|}{17.6$\pm$3.4} & 45.1$\pm$37.0
\\ \cline{2-18}
& 6 & 22.2$\pm$1.4 & \multicolumn{1}{@{}c@{} |}{\textbf{70.1$\pm$30.2}} & \multicolumn{1}{@{}c@{} |}{\textbf{1.2$\pm$2.2}} & 102$\pm$52.3  & \multicolumn{1}{@{}c@{} |}{45.9$\pm$20.5} & \multicolumn{1}{@{}c@{} |}{13.7$\pm$3.8} & 54.1$\pm$15.8  & \multicolumn{1
}{@{}c@{} |}{26.5$\pm$16.1} & \multicolumn{1}{@{}c@{} |}{18.5$\pm$3.2} & \textbf{8.3$\pm$3.1}  & \multicolumn{1}{@{}c@{} |}{21.8$\pm$14.4} & \multicolumn{1}{@{}c@{} |}{14.6$\pm$3.9} & 336$\pm$108  & \multicolumn{1}{@{}c@{} |}{16.7$\pm$10.4} & \multicolumn{1}
{@{}c@{} |}{19.0$\pm$3.1} & 36.7$\pm$32.0
\\ \cline{2-18}
& 9 & 21.7$\pm$1.2 & \multicolumn{1}{@{}c@{} |}{\textbf{62.5$\pm$16.9}} & \multicolumn{1}{@{}c@{} |}{\textbf{0.8$\pm$1.8}} & 89.6$\pm$29.9  & \multicolumn{1}{@{}c@{} |}{41.7$\pm$11.7} & \multicolumn{1}{@{}c@{} |}{14.9$\pm$2.6} & 51.1$\pm$16.0  & \multicolumn{1}{
@{}c@{} |}{21.5$\pm$6.6} & \multicolumn{1}{@{}c@{} |}{18.1$\pm$3.4} & \textbf{9.5$\pm$4.0}  & \multicolumn{1}{@{}c@{} |}{20.2$\pm$8.1} & \multicolumn{1}{@{}c@{} |}{14.7$\pm$3.0} & 306$\pm$77.0  & \multicolumn{1}{@{}c@{} |}{15.4$\pm$7.6} & \multicolumn{1}{@{}c@{} |}{19.4$\pm$3.2} & 26.3$\pm$16.7
\\ \cline{2-18}
& 12 & 21.4$\pm$1.1 & \multicolumn{1}{@{}c@{} |}{\textbf{56.6$\pm$14.8}} & \multicolumn{1}{@{}c@{} |}{\textbf{0.8$\pm$1.4}} & 84.3$\pm$35.7  & \multicolumn{1}{@{}c@{} |}{37.7$\pm$9.5} & \multicolumn{1}{@{}c@{} |}{14.7$\pm$2.5} & 47.8$\pm$17.6  & \multicolumn{1}{@{}c@{}
 |}{21.0$\pm$5.2} & \multicolumn{1}{@{}c@{} |}{18.0$\pm$2.6} & \textbf{10.3$\pm$3.8}  & \multicolumn{1}{@{}c@{} |}{19.7$\pm$8.4} & \multicolumn{1}{@{}c@{} |}{14.6$\pm$2.1} & 312$\pm$114  & \multicolumn{1}{@{}c@{} |}{14.8$\pm$7.4} & \multicolumn{1}{@{}c@{} |
}{19.4$\pm$2.7} & 32.8$\pm$40.1
\\ \hline

\end{tabular}}
\label{table:varying-sources-1000}
\end{table*}

\begin{table*}[]
\caption{\small\new{Scaling results comparing  Lift (in \%), Disparity (in \%), and Runtime (in seconds), while varying graph size from 500 nodes to 500k nodes. $k$ = 3, $p$ = 0.5 and Candidate Edges are FoF for all settings. 5 trials were used up to 400k nodes, then only 1 trial is used. $|V_s| = 5$ for 500-10k nodes, and $|V_s| = 10$ for 20k+ nodes. If an algorithm took longer than 24 hours, results were not reported, and IRFA cannot be run greater than 20k due to preprocessing constraints.}}
\resizebox{\textwidth}{!}{%
\begin{tabular}{|@{}c@{}|@{}c@{}|@{}c@{}|@{}c@{}|@{}c@{}|@{}c@{}c@{}c@{}|@{}c@{}c@{}c@{}|@{}c@{}c@{}c@{}|@{}c@{}c@{}c@{}|@{}c@{}c@{}c@{}|}
\hline
\multirow{2}{*}{Nodes} & \multirow{2}{*}{\begin{tabular}[c]{@{}c@{}}Initial\\ Disparity\end{tabular}} & \multicolumn{3}{c|}{LP-Advanded (ours)}                                                           & \multicolumn{3}{@{}c@{}|}{LP-SCALE (ours)}                                                        & \multicolumn{3}{@{}c@{}|}{Continous Greedy}                                                   & \multicolumn{3}{@{}c@{}|}{IRFA}                                                               & \multicolumn{3}{@{}c@{}|}{SpGreedy}                                                            & \multicolumn{3}{@{}c@{}|}{ACR-FoF}                                                             \\ \cline{3-20}
                       &                                                                              & \multicolumn{1}{@{}c@{}|}{Lift}          & \multicolumn{1}{@{}c@{}|}{Disparity}   & Runtime            & \multicolumn{1}{@{}c@{}|}{Lift}          & \multicolumn{1}{@{}c@{}|}{Disparity}   & Runtime      & \multicolumn{1}{@{}c@{}|}{Lift}          & \multicolumn{1}{@{}c@{}|}{Disparity}    & Runtime        & \multicolumn{1}{@{}c@{}|}{Lift}         & \multicolumn{1}{@{}c@{}|}{Disparity}     & Runtme         & \multicolumn{1}{@{}c@{}|}{Lift}         & \multicolumn{1}{@{}c@{}|}{Disparity}   & Runtime           & \multicolumn{1}{@{}c@{}|}{Lift}         & \multicolumn{1}{@{}c@{}|}{Disparity}    & Runtime          \\ \hline
500                    & 16.6$\pm$0.1                                                                 & \multicolumn{1}{@{}c@{}|}{56.5$\pm$15.2} & \multicolumn{1}{@{}c@{}|}{\textbf{0.0$\pm$0.0}} & 13.7$\pm$4.4        & \multicolumn{1}{@{}c@{}|}{\textbf{57.8$\pm$14.2}} & \multicolumn{1}{@{}c@{}|}{0.2$\pm$0.2} & 14.1$\pm$5.5  & \multicolumn{1}{@{}c@{}|}{36.9$\pm$11.7} & \multicolumn{1}{@{}c@{}|}{11.0$\pm$4.6} & 14.9$\pm$2.3   & \multicolumn{1}{@{}c@{}|}{19.1$\pm$2.8} & \multicolumn{1}{@{}c@{}|}{14.0$\pm$7.8}  & \textbf{3.1$\pm$0.8}    & \multicolumn{1}{@{}c@{}|}{20.8$\pm$7.4}  & \multicolumn{1}{@{}c@{}|}{9.0$\pm$5.0} & 83.9$\pm$16.1     & \multicolumn{1}{@{}c@{}|}{16.7$\pm$8.7}  & \multicolumn{1}{@{}c@{}|}{14.3$\pm$5.6} & 6.7$\pm$3.6      \\ \hline
1000                   & 15.2$\pm$0.1                                                                 & \multicolumn{1}{@{}c@{}|}{\textbf{55.4$\pm$13.7}}  & \multicolumn{1}{@{}c@{}|}{\textbf{0.0$\pm$0.0}}    & 73.4$\pm$13.8       & \multicolumn{1}{@{}c@{}|}{53.0$\pm$12.9} & \multicolumn{1}{@{}c@{}|}{0.6$\pm$0.4} & 14.0$\pm$4.3 & \multicolumn{1}{@{}c@{}|}{35.4$\pm$8.1}   & \multicolumn{1}{@{}c@{}|}{10.7$\pm$3.4} & 37.0$\pm$9.6   & \multicolumn{1}{@{}c@{}|}{20.2$\pm$5.5} & \multicolumn{1}{@{}c@{}|}{12.6$\pm$5.3} & \textbf{13.2$\pm$2.5}   & \multicolumn{1}{@{}c@{}|}{15.6$\pm$5.4} & \multicolumn{1}{@{}c@{}|}{9.8$\pm$3.7} & 332$\pm$70.0    & \multicolumn{1}{@{}c@{}|}{11.1$\pm$4.4} & \multicolumn{1}{@{}c@{}|}{13.7$\pm$4.8}  & 40.4$\pm$18.9    \\ \hline
2000                   & 11.9$\pm$0.0                                                                 & \multicolumn{1}{@{}c@{}|}{\textbf{54.3$\pm$12.5}} & \multicolumn{1}{@{}c@{}|}{\textbf{0.0$\pm$0.0}}       & 1054$\pm$346      & \multicolumn{1}{@{}c@{}|}{45.4$\pm$9.3}  & \multicolumn{1}{@{}c@{}|}{0.4$\pm$0.4} & \textbf{19.7$\pm$6.8} & \multicolumn{1}{@{}c@{}|}{36.7$\pm$8.1}  & \multicolumn{1}{@{}c@{}|}{8.3$\pm$2.0}  & 55.2$\pm$9.2   & \multicolumn{1}{@{}c@{}|}{16.2$\pm$6.8} & \multicolumn{1}{@{}c@{}|}{10.6$\pm$2.6}  & 74.7$\pm$13.1  & \multicolumn{1}{@{}c@{}|}{14.2$\pm$5.8} & \multicolumn{1}{@{}c@{}|}{7.2$\pm$3.4} & 1597$\pm$369 & \multicolumn{1}{@{}c@{}|}{10.9$\pm$5.4}   & \multicolumn{1}{@{}c@{}|}{10.9$\pm$3.01} & 285$\pm$141 \\ \hline
4000                   & 11.5$\pm$0.0                                                                 & \multicolumn{1}{@{}c@{}|}{\textbf{58.0$\pm$13.8}} & \multicolumn{1}{@{}c@{}|}{\textbf{0.0$\pm$0.0}}       & 1.2e4$\pm$3776 & \multicolumn{1}{@{}c@{}|}{45.1$\pm$7.7}  & \multicolumn{1}{@{}c@{}|}{1.3$\pm$0.7} & \textbf{26.0$\pm$8.9} & \multicolumn{1}{@{}c@{}|}{37.1$\pm$8.7}  & \multicolumn{1}{@{}c@{}|}{7.6$\pm$1.0}  & 151$\pm$34.1 & \multicolumn{1}{@{}c@{}|}{15.2$\pm$8.7}  & \multicolumn{1}{@{}c@{}|}{9.3$\pm$1.3}   & 337$\pm$44.2 & \multicolumn{1}{@{}c@{}|}{12.34$\pm$4.2} & \multicolumn{1}{@{}c@{}|}{6.8$\pm$1.6}  & 7391$\pm$1267 & \multicolumn{1}{@{}c@{}|}{10.8$\pm$4.5} & \multicolumn{1}{@{}c@{}|}{11.1$\pm$1.4} & 1355$\pm$490    \\ \hline
10k                    & 8.67                                                                         & \multicolumn{1}{@{}c@{}|}{-}             & \multicolumn{1}{@{}c@{}|}{-}           & -                  & \multicolumn{1}{@{}c@{}|}{\textbf{75.6}}         & \multicolumn{1}{@{}c@{}|}{\textbf{0.5}}        & \textbf{39.2}        & \multicolumn{1}{@{}c@{}|}{60.8}         & \multicolumn{1}{@{}c@{}|}{4.7}         & 370         & \multicolumn{1}{@{}c@{}|}{17.4}         & \multicolumn{1}{@{}c@{}|}{8.4}          & 442         & \multicolumn{1}{@{}c@{}|}{16.4}        & \multicolumn{1}{@{}c@{}|}{3.4}        & 4.9e4          & \multicolumn{1}{@{}c@{}|}{16.9}        & \multicolumn{1}{@{}c@{}|}{7.0}         & 4754          \\ \hline
20k                    & 2.3                                                                         & \multicolumn{1}{@{}c@{}|}{-}             & \multicolumn{1}{@{}c@{}|}{-}           & -                  & \multicolumn{1}{@{}c@{}|}{\textbf{27.6}}         & \multicolumn{1}{@{}c@{}|}{\textbf{0.1}}        & \textbf{122}       & \multicolumn{1}{@{}c@{}|}{26.6}         & \multicolumn{1}{@{}c@{}|}{2.3}         & 943         & \multicolumn{1}{@{}c@{}|}{6.1}         & \multicolumn{1}{@{}c@{}|}{3.4}          & 1.2e4       & \multicolumn{1}{@{}c@{}|}{-}            & \multicolumn{1}{@{}c@{}|}{-}           & -                 & \multicolumn{1}{@{}c@{}|}{-}            & \multicolumn{1}{@{}c@{}|}{-}            & -                \\ \hline
50k                    & 3.3                                                                         & \multicolumn{1}{@{}c@{}|}{-}             & \multicolumn{1}{@{}c@{}|}{-}           & -                  & \multicolumn{1}{@{}c@{}|}{\textbf{36.7}}         & \multicolumn{1}{@{}c@{}|}{\textbf{0.0}}        & \textbf{365}       & \multicolumn{1}{@{}c@{}|}{36.7}         & \multicolumn{1}{@{}c@{}|}{2.1}         & 3498        & \multicolumn{1}{@{}c@{}|}{-}            & \multicolumn{1}{@{}c@{}|}{-}             & -              & \multicolumn{1}{@{}c@{}|}{-}            & \multicolumn{1}{@{}c@{}|}{-}           & -                 & \multicolumn{1}{@{}c@{}|}{-}            & \multicolumn{1}{@{}c@{}|}{-}            & -                \\ \hline
100k                   & 1.8                                                                         & \multicolumn{1}{@{}c@{}|}{-}             & \multicolumn{1}{@{}c@{}|}{-}           & -                  & \multicolumn{1}{@{}c@{}|}{26.8}         & \multicolumn{1}{@{}c@{}|}{\textbf{0.0}}        & \textbf{1222}      & \multicolumn{1}{@{}c@{}|}{\textbf{39.8}}         & \multicolumn{1}{@{}c@{}|}{1.2}         & 1.3e4       & \multicolumn{1}{@{}c@{}|}{-}            & \multicolumn{1}{@{}c@{}|}{-}             & -              & \multicolumn{1}{@{}c@{}|}{-}            & \multicolumn{1}{@{}c@{}|}{-}           & -                 & \multicolumn{1}{@{}c@{}|}{-}            & \multicolumn{1}{@{}c@{}|}{-}            & -                \\ \hline
200k                   & 2.2                                                                         & \multicolumn{1}{@{}c@{}|}{-}             & \multicolumn{1}{@{}c@{}|}{-}           & -                  & \multicolumn{1}{@{}c@{}|}{30.1}         & \multicolumn{1}{@{}c@{}|}{\textbf{0}}           & \textbf{5160}       & \multicolumn{1}{@{}c@{}|}{\textbf{37}}         & \multicolumn{1}{@{}c@{}|}{1.9}         & 4.7e4       & \multicolumn{1}{@{}c@{}|}{-}            & \multicolumn{1}{@{}c@{}|}{-}             & -              & \multicolumn{1}{@{}c@{}|}{-}            & \multicolumn{1}{@{}c@{}|}{-}           & -                 & \multicolumn{1}{@{}c@{}|}{-}            & \multicolumn{1}{@{}c@{}|}{-}            & -                \\ \hline
500k                   & 3.9                                                                         & \multicolumn{1}{@{}c@{}|}{-}             & \multicolumn{1}{@{}c@{}|}{-}           & -                  & \multicolumn{1}{@{}c@{}|}{\textbf{43.3}}         & \multicolumn{1}{@{}c@{}|}{\textbf{0.0}}        & \textbf{7.4e4}     & \multicolumn{1}{@{}c@{}|}{-}             & \multicolumn{1}{@{}c@{}|}{-}            & -              & \multicolumn{1}{@{}c@{}|}{-}            & \multicolumn{1}{@{}c@{}|}{-}             & -              & \multicolumn{1}{@{}c@{}|}{-}            & \multicolumn{1}{@{}c@{}|}{-}           & -                 & \multicolumn{1}{@{}c@{}|}{-}            & \multicolumn{1}{@{}c@{}|}{-}            & -                \\ \hline
\end{tabular}}
\label{tab:scaling}
\end{table*}

\new{Additional experiment results on configuring LP-SCALE are included in the appendix.}

\section{Related Work}\label{sec:related}


Propagation in networks has been extensively studied in the literature \cite{10.1145/502512.502525, 10.1145/775047.775057}. 
There are two main approaches on propagation optimization, 1. (influence maximization) the orthogonal problem of selecting nodes to maximize propagation, and also, 2. (content spread) targeting the topology of the network and selecting edges to recommend to users to maximize propagation.

\stitle{Influence Maximization} The problem of finding a set of most influential (seed) nodes in a graph is called Influence Maximization (IM) \cite{10.1145/502512.502525,li2018influence, banerjee2020survey}.
The first discrete optimization definition of Influence Maximization
was proposed in \cite{kempe-infmax}, where relying on sub-modularity of the Influence function, the author proposed a greedy algorithm for the problem.
Furthermore, practical optimizations have been proposed \cite{Leskovec:GraphsOT,Chen-Scalable} to address the scalability issue of the greedy algorithm.

\stitle{Content Spread}
Unlike the IM research that finds a set of seed {\em nodes}, the objective of the Content Spread Maximization problem is to find a set of {\em potential edges} that once added, maximize the content spread \cite{yu-contentspread, Silva2010AGF,chaoji-cs, yang2019marginal}. \new{The candidate set of edges can be considered an input into this problem for many methods, allowing it to wrap various friendship suggestion methods \cite{roth-implicit, epasto2015ego, hamid2014cohesion, wang2014friendbook}.}
With the introduction of the RMPP cascade model, \cite{chaoji-cs} proposes the Continuous Greedy approximation algorithm, achieving a content spread $90$ times higher than other heuristics and greedy strategies. \new{Content spread algorithm have also been derived to target the content spread capabilities of the entire network as opposed to a given set of sources \cite{yu-contentspread}. Recent works in content spread have optimized the process by considering only the influence ranking of a single node \cite{yang2019marginal}.}
Our paper falls under this category of Content Spread algorithms, adding fairness as a constraint to the problem.
However, as shown by Theorem~\ref{th:inapprox} and demonstrated in our experiments, existing approaches fail to extend for our problem setting.

\stitle{\new{Social Network Polarization and Opinion Formation}}
\new{Groups in the network often take opposing opinions, and measuring the structure of the network in relation to these opinions is the problem of opinion formation \cite{friedkin1990social, chen2018quantifying, gionis2013opinion}, and polarization \cite{guerra2013measure, garimella2018political, chitra2020analyzing}. A popular approach for this line of work is based on the Friedkin-Johnsen model \cite{friedkin1990social}. In this model, individuals have an innate opinion, and then based on that and the topology of the network, they establish an expressed opinion. Bindel et al. recognized that this opinion could be calculated using matrix multiplication \cite{bindel2015bad}. Several authors built upon this idea establishing calculation of polarization and other metrics using the matrix multiplication technique \cite{musco2018minimizing, matakos2017measuring, chitra2020analyzing}. 
Related to our work, several authors used the method of edge suggestion in relation to network opinions in order to alter polarization and
other metrics \cite{zhu2021minimizing, Masrour_Wilson_Yan_Tan_Esfahanian_2020, garimella2017reducing}. 
The main difference between these works and ours is that these works target conflict between groups on the social network, while ours targets {equitable} reception of content between groups, making it suited for solving \problem.} 


\stitle{Fairness}
Machine bias and fairness in data science has become timely across different research communities, especially in ML~\cite{fairmlbook,barocas2016big,dwork2012fairness,kearns2018preventing,narayanan2018translation,friedler2019comparative,kamiran2012data,feldman2015certifying,calmon2017optimized,kamishima2012fairness} and data management~\cite{islam2022through,asudeh2019assessing,nargesian2022responsible,nargesian2021tailoring,shah2020sigmod,asudeh2020fairly,zhang2021omnifair, shahbazi2022survey,asudeh2021identifying,suraj2022fair,wei2022rank,zhang2021fairrover,galhotra2022causal,stoyanovich2022responsible}.
While extensive research focused on the scalability and efficiency in this literature, recently fairness started to come into the picture in propagation optimization.  More recently, equitable access to information has raised questions. As a result, a related line of research started with proposing fairness-aware frameworks. As shown in \cite{10.1145/3308558.3313680}, the structure of the network can cause individuals from minority groups to have less access to the content across the network. 
In influence maximization, while works like \cite{10.1145/3308558.3313680} uses a maxmin formulation to impose fairness in Influence Maximization problem, other works address 
fairness by defining fairness constraints \cite{ GUNEY2019589,becker2020fairness, FarnadiFIM}, equivalent objective functions that encounter fairness notions \cite{rahmattalabi2020fair}, 
\new{assure that a certain group is influenced in addition to overall influence \cite{gershtein2018balanced},} a new multi-objective model that considers a fair influence spread among different groups \cite{tsang2019group, yu2017fair}\new{, or even a Mixed Integer linear Programming model which can satisfy a variety of fairness constraints \cite{FarnadiFIM}.}
\balance
Also, ML community has combined the adversarial networks with fairness concepts in order to design fairness aware methods in both link prediction \cite{Masrour_Wilson_Yan_Tan_Esfahanian_2020} and node selection \cite{khajehnejad2020adversarial}.
Still, to the best of our knowledge, this paper is the first to consider fairness in the Content Spread approach to propagation optimization.



\section{Conclusion and Future Work}\label{sec:conclusion}
We proposed the problem of edge (friendship) suggestion for maximizing the fair content spread.
We showed the problem is $\NP$-hard and proved its inapproximability in polynomial time.
Then allowing approximation both on fairness and on the objective value, we designed a randomized $\LP$-relaxation algorithm with fixed approximation ratios.
We proposed practical optimizations and designed a scalable algorithm that \new{dynamically adds the solutions of sub-problems with reasonably small sizes to the graph, updating the solution incrementally}.
We conducted comprehensive experiments on real and synthetic data sets to evaluate our algorithms.
Our results confirm the effectiveness of our algorithms, being able to find solutions with near-zero unfairness while significantly increasing the content spread.
Our scalable algorithm could scale up to {\em half a million nodes} in a reasonable time, reducing unfairness down to around 0.0004 while increasing the content spread by 43\%. 

\new{
We study the problem of edge suggestion without considering the recommendation acceptance probabilities. While a simple resolution may be to recommend a higher number of edges, addressing this problem requires new techniques we consider part of our future work. 
Additionally, in future works, we will explore alternative scenarios, such as determining fair content spread when all nodes produce content and the content is of a variety of topics.}
\section{Ethics Statement}

While this paper studies friendship suggestions in social networks that maximizes content spread while achieving fairness, all experiments performed herein are based on publicly available datasets in academic settings. We would like to note that this approach has not been integrated in any real world product, service, or platform. As a practical constraint, the assumption of the availability of user profile information may not be realistic due to the fact that by design none of the social networks operating today collects user demographic data. This paper is based on the fact that unfair spread of data still exists and it aims to address balanced data flow among different sets of nodes/users. Moreover, we highlight that adding such features to social networks requires rigorous regulation to avoid exploiting such features and cause ethical or societal implications. For example, even though this method is designed to bring benefits to users and maximize fairness, one may change the system setup to do the opposite. We recognize that such ethical concerns are far more complex and need policymakers to look into how platforms can detect such misusage. We hope our work could help effect a positive change in this direction.

\vspace{-2mm}
\begin{acks}
\vspace{-1mm}
This work was supported in part by the National Science Foundation 
(NSF 2107290) and the Google research scholar award.
We also thank Dr. Golnoosh Farnadi for sharing their dataset with us.
\end{acks}

\bibliographystyle{ACM-Reference-Format}
\balance
\bibliography{ref,abol,tutorialref}

\newpage
\section*{Appendix}
\appendix
\section{Proofs}
{\sc Lemma}\ref{lem:complexity}.
{\it The Fair Content Spread Maximization (\problem) Problem is $\NP$-hard.}

\begin{proof} 
To prove that \problem is $\NP$-hard, we use the reduction from the content spread problem (\cs) as in Definition~\ref{def:spread}, which is known to be $\NP$-complete~\cite{chaoji-cs}.
Consider an instance of the \cs problem (Definition~\ref{def:spread}), on the graph $G'(V',E')$, the set of candidate edges $E'_c$, the content nodes $V'_s$, and the value $k'$ on the number of incident edges selected for each node $v'_i\in V'$. Also, let the decision value for content spread be $T'$.
We generate an instance of \problem problem from the \cs instance as follows:
\begin{itemize}[leftmargin=*]
    \item Let $G''(V'',E'')$ be a duplicate of $G'(V',E')$. That is, for each node $v'\in V'$, there exists $v''\in V''$, and for each edge $(v'_i,v'_j)\in E'$, there \new{exists} an edge $(v''_i,v''_j)\in E''$.
    We define the graph $G$ as the union of $G'$ and $G''$. That is $G(V'\cup V'', E'\cup E'')$.
    \item Let $\Gee=\{\gee_1,\gee_2\}$, where every node $v'\in V'$ belongs to $\gee_1$ and every node $v''\in V''$ belongs to $\gee_2$.
    \item Set $V_s = V'_s\cup \{v''_i~|~v'_i\in V'_s\}$. Set $k=k'$ and $T=2T'$.
\end{itemize}
Now the answer to the \cs problem is yes (i.e., there \new{exists} a selection of $E'_p$ such that satisfies the content spread of $T'$) iff the answer to the \problem problem with input $G$, $V_s$, $\Gee$, $k$, and the decision value $T$ is yes. 
\end{proof}

{\sc Lemma}\ref{lem:3}.
{\it The expected number of selected edges by Algorithm~\ref{algo:LP} per each node $v_i\in V$ is $k$.}

\begin{proof}
Following the rounding step, the probability of selecting an edge $e_j\in E_c$ is $\pr(y^+_j)=y^*_j$.
The expected number of edges incident to a node $v_i$ is
\begin{align*}
    \EX\big[ \sum_{e_j \in \mathcal{N}(v_i)}y^+_j \big] = 
    \sum_{e_j \in \mathcal{N}(v_i)}\pr(y^+_j)
    = \sum_{e_j \in \mathcal{N}(v_i)}y^*_j = k
\end{align*}
\end{proof}
\section{Practical Optimizations}

\stitle{Variable Reduction} 
The number of variables in the $\LP$ formulation directly impacts its run time. 
To make our algorithm efficient we prune variables which would have no impact on the output.
%
%
Let $r_i$ be the shortest distance from content nodes to a node $v_i$ without including any edges from $E_c$
. In our $\LP$ formulation in Figure~\ref{fig:LPRelax}, for every length of path of $r\geq r_i$, there exists a path which satisfies variables $x_{ir}$. That is, $\forall r\geq r_i,~x_{ir}=1$.
All such variables can be replaced with the constant $1$, both in the objective function, as well as the fairness enforcing constraints.
Additionally, for cases where the set of edges which reduce the distance to a node $v_i$ to at most $r$ is empty
, there is no edge selection for which $x_{ir}$ is > 0.
In such cases, we replace $x_{ir}$ by a constant zero, removing the corresponding unnecessary variables.
In the friend of friend edge suggestion model~\cite{yu-contentspread}, in particular, this reduces the order of the variables to consider. Since an edge in this model can only improve the distance by one, there is only one $x_{ir}$ variable for each node for which we are unsure of the final value. Consequently the number of variables in $\LP$ reduces from $O(r_m\,|V| + |E_c|)$ down to $O(|V|+|E_c|)$.

\stitle{Computing the shortest paths}
Computing the shortest paths of an entire graph is relatively inefficient, taking $\Theta(|V| + |E|)$ time. Our algorithm needs to compute the effect of each edge $e_j$ in $E_c$ on the graph when added to the graph independently. However, when only one or more edges are added to a graph, there is a deterministic subset of nodes $V' \subset V$ for which the shortest distances improve. Algorithm~\ref{algo:shortestpaths} calculates only the changes in the distance for the nodes in $V'$. The basic idea is that the search starts at the new edge or edges, and performs a search, not following an edge if traversing over it does not improve the distance to the destination. If the union of all the outbound edges of all nodes in $V'$ is called $E_{V'}$, then this algorithm runs in $\Theta(|V'| + |E_{V'}|)$. For the clear theoretical improvement, we used this algorithm in our implementation.

\begin{algorithm}[!t]
\small
    \caption{{\sc ShortestDistances}}\label{algo:shortestpaths}
    \hspace*{\algorithmicindent} \textbf{Input}: $G(V, E)$, $V_s$ $E_c$\\
    \hspace*{\algorithmicindent} \textbf{Output}: Collection of shortest distances $\mathcal{S}'$
    \begin{algorithmic}[1]
        \State $\mathbf{d}\gets$ {\sc multi-src-shortest-distances} $(G(V,E),V_s)$
        \State {\bf for} $v_i\in V\backslash V_s$ {\bf do} $S'_i\gets \{d_i\}$
        \For{$e_j\in E_c$}
            \State let $e_j=(v_\ell,v_i)$ where $d_{\ell}\leq d_i$
            \State {\bf if} $d_i\leq d_{\ell}+1$ {\bf then continue}
            \State $d_{ij}\gets d_{\ell}+1$; add $\langle v_i, d_{ij}\rangle$ to queue
            \While{queue is not empty}
                \State remove $\langle v_i, d_{ij}\rangle$ from queue
                \State add $(e_j,d_{ij})$ to $S'_i$
                \For{$v_{i'}\in edges(v_i)$}
                    \If {$d_{ij}+1<d_{i'}$}
                        \State $d_{i'j}\gets d_{ij}+1$
                        \State add $\langle v_{i'}, d_{i'j}\rangle$ to queue
                    \EndIf
                \EndFor
            \EndWhile
        \EndFor
        \Return $\mathcal{S}' = \{S'_i~|~v_i\in V\backslash V_s\}$
    \end{algorithmic}
\end{algorithm}

\section{Extensive Experiment Results}

\begin{table*}[]
\caption{(Best results bold) Comparing Lift (in \%), Disparity (in \%), and Runtime (in seconds), on 500 node Antelope Valley data sets, of LP-Advanced and all baselines, varying $k$. 20 trials are used at each setting. Both Friend of Friend and Intersecting Group Count are reported. $p = 0.5$, $|V_s|=3$}
\label{table:varying-k-500}
\vspace{-4mm}
\resizebox{\textwidth}{!}{ %
\begin{tabular}{ | c | c | c | ccc | ccc | ccc | ccc | ccc |}\hline
\multirow{2}{*}
{\begin{tabular}[c]{ @ {}c @ {}}Candidate\\ Edges\end{tabular}} &
\multirow{2}{*}{k}
&  \multirow{2}{*}{\begin{tabular}[c]{ @ {} c @ {}}Initial\\ Disparity\end{tabular}}
&  \multicolumn{3}{c |}{LP - Advanced} 
& \multicolumn{3}{c |}{Continuous Greedy}
&  \multicolumn{3}{c |}{IRFA}
& \multicolumn{3}{c |}{SpGreedy}
& \multicolumn{3}{c |}{ACR - FoF}
\\ \cline{4 - 18}
& & & \multicolumn{1}{c |}{Lift} & \multicolumn{1}{c |}{Disparity  } & Time
& \multicolumn{1}{c |}{Lift} & \multicolumn{1}{c |}{Disparity} & Time
& \multicolumn{1}{c |}{Lift} & \multicolumn{1}{c |}{Disparity} & Time
& \multicolumn{1}{c |}{Lift} & \multicolumn{1}{c |}{Disparity} & Time
& \multicolumn{1}{c |}{Lift} & \multicolumn{1}{c |}{Disparity} & Time
\\ \hline
\multirow{4}{*}{FoF}
& 2 & 34.2$\pm$0.34 & \multicolumn{1}{c |}{\textbf{66.08$\pm$10.76}} & \multicolumn{1}{c |}{\textbf{0.14$\pm$0.51}} & 9.9$\pm$3.09  & \multicolumn{1}{c |}{45.71$\pm$9.04} & \multicolumn{1}{c |}{25.71$\pm$9.34} & 28.83$\pm$8.47  & \multicolumn{1}{c |}{
15.15$\pm$6.4} & \multicolumn{1}{c |}{28.9$\pm$6.81} & 2.62$\pm$1.07  & \multicolumn{1}{c |}{11.73$\pm$3.29} & \multicolumn{1}{c |}{24.11$\pm$5.15} & 29.88$\pm$8.38  & \multicolumn{1}{c |}{19.65$\pm$8.78} & \multicolumn{1}{c |}{28.14$\pm$6.53} & \textbf{1.57$\pm$0.81}
\\ \cline{2 - 18}
& 3 & 34.2$\pm$0.34 & \multicolumn{1}{c |}{\textbf{84.04$\pm$13.64}} & \multicolumn{1}{c |}{\textbf{0.08$\pm$0.21}} & 13.94$\pm$15.24  & \multicolumn{1}{c |}{51.68$\pm$8.54} & \multicolumn{1}{c |}{25.2$\pm$6.36} & 29.01$\pm$8.13  & \multicolumn{1}{c |
}{23.03$\pm$7.4} & \multicolumn{1}{c |}{29.92$\pm$9.21} & 3.88$\pm$1.3  & \multicolumn{1}{c |}{20.11$\pm$9.03} & \multicolumn{1}{c |}{21.08$\pm$6.1} & 40.73$\pm$9.68  & \multicolumn{1}{c |}{27.06$\pm$9.0} & \multicolumn{1}{c |}{24.98$\pm$7.13} & \textbf{1.77$\pm$0.91}
\\ \cline{2 - 18}

& 4 & 34.2$\pm$0.34 & \multicolumn{1}{c |}{\textbf{94.23$\pm$15.54}} & \multicolumn{1}{c |}{\textbf{0.33$\pm$0.9}} & 11.36$\pm$4.08  & \multicolumn{1}{c |}{56.41$\pm$8.87} & \multicolumn{1}{c |}{22.81$\pm$5.83} & 32.37$\pm$8.15  & \multicolumn{1}{c |}
{32.2$\pm$10.31} & \multicolumn{1}{c |}{27.21$\pm$11.12} & 4.9$\pm$2.55  & \multicolumn{1}{c |}{25.82$\pm$9.66} & \multicolumn{1}{c |}{19.41$\pm$6.91} & 53.07$\pm$9.45  & \multicolumn{1}{c |}{33.59$\pm$9.47} & \multicolumn{1}{c |}{23.5
2$\pm$8.13} & \textbf{1.86$\pm$1.02}
\\ \cline{2 - 18}
& 5 & 34.2$\pm$0.34 & \multicolumn{1}{c |}{\textbf{105.28$\pm$18.91}} & \multicolumn{1}{c |}{\textbf{1.68$\pm$3.66}} & 13.28$\pm$7.07  & \multicolumn{1}{c |}{60.12$\pm$7.92} & \multicolumn{1}{c |}{24.26$\pm$7.27} & 34.2$\pm$10.36  & \multicolumn{1}{c 
|}{37.96$\pm$10.26} & \multicolumn{1}{c |}{25.07$\pm$11.33} & 6.81$\pm$2.65  & \multicolumn{1}{c |}{32.17$\pm$11.14} & \multicolumn{1}{c |}{17.13$\pm$7.8} & 56.92$\pm$13.45  & \multicolumn{1}{c |}{40.53$\pm$10.65} & \multicolumn{1}{c |
}{22.54$\pm$7.69} & \textbf{2.05$\pm$1.16}
\\ \hline
\multirow{4}{*}{IGC}
& 2 & 34.2$\pm$0.34 & \multicolumn{1}{c |}{\textbf{78.69$\pm$17.11}} & \multicolumn{1}{c |}{\textbf{0.77$\pm$2.04}} & 14.23$\pm$4.64  & \multicolumn{1}{c |}{54.1$\pm$13.78} & \multicolumn{1}{c |}{22.17$\pm$10.03} & 31.4$\pm$11.12  & \multicolumn{1}{c 
|}{19.51$\pm$7.3} & \multicolumn{1}{c |}{30.1$\pm$6.8} & 2.1$\pm$0.93  & \multicolumn{1}{c |}{16.95$\pm$10.81} & \multicolumn{1}{c |}{23.11$\pm$7.53} & 35.37$\pm$9.83  & \multicolumn{1}{c |}{13.91$\pm$4.66} & \multicolumn{1}{c |}{29.54
$\pm$5.4} & \textbf{2.04$\pm$0.98}
\\ \cline{2-18}
& 3 & 34.2$\pm$0.34 & \multicolumn{1}{c |}{\textbf{94.26$\pm$21.55}} & \multicolumn{1}{c |}{\textbf{0.55$\pm$1.46}} & 15.65$\pm$5.94  & \multicolumn{1}{c |}{60.92$\pm$13.1} & \multicolumn{1}{c |}{23.51$\pm$11.15} & 32.7$\pm$11.51  & \multicolumn{1}{c 
|}{27.23$\pm$10.76} & \multicolumn{1}{c |}{28.21$\pm$8.44} & 2.83$\pm$1.07  & \multicolumn{1}{c |}{23.62$\pm$11.58} & \multicolumn{1}{c |}{19.89$\pm$7.09} & 50.76$\pm$13.59  & \multicolumn{1}{c |}{20.04$\pm$6.29} & \multicolumn{1}{c |}
{26.28$\pm$7.18} & \textbf{2.5$\pm$1.35}
\\ \cline{2-18}
& 4 & 34.2$\pm$0.34 & \multicolumn{1}{c |}{\textbf{108.58$\pm$22.96}} & \multicolumn{1}{c |}{\textbf{0.52$\pm$1.04}} & 18.23$\pm$6.74  & \multicolumn{1}{c |}{67.73$\pm$14.55} & \multicolumn{1}{c |}{21.2$\pm$9.8} & 35.37$\pm$10.47  & \multicolumn{1}{c 
|}{27.07$\pm$8.12} & \multicolumn{1}{c |}{26.34$\pm$9.41} & 3.91$\pm$1.11  & \multicolumn{1}{c |}{31.76$\pm$15.67} & \multicolumn{1}{c |}{17.71$\pm$7.79} & 62.24$\pm$8.7  & \multicolumn{1}{c |}{27.55$\pm$9.3} & \multicolumn{1}{c |}{25.
27$\pm$8.49} & \textbf{2.17$\pm$0.62}
\\ \cline{2-18}
& 5 & 34.2$\pm$0.34 & \multicolumn{1}{c |}{\textbf{116.99$\pm$23.29}} & \multicolumn{1}{c |}{\textbf{0.81$\pm$1.75}} & 16.06$\pm$3.83  & \multicolumn{1}{c |}{72.8$\pm$14.04} & \multicolumn{1}{c |}{20.84$\pm$12.02} & 33.46$\pm$9.66  & \multicolumn{1}{c
 |}{34.83$\pm$12.1} & \multicolumn{1}{c |}{24.85$\pm$8.42} & 5.73$\pm$1.27  & \multicolumn{1}{c |}{37.0$\pm$13.79} & \multicolumn{1}{c |}{16.19$\pm$6.47} & 87.31$\pm$17.68  & \multicolumn{1}{c |}{33.3$\pm$10.32} & \multicolumn{1}{c |}{
22.91$\pm$7.73} & \textbf{2.3$\pm$0.95}
\\ \hline
\end{tabular}}
\end{table*}
\begin{table*}[]
\caption{Comparing Lift (in \%), Disparity (in \%), and Runtime (in seconds), on 1000 node Pokec samples, of LP-Advanced and all baselines, varying $k$. 20 trials are used at each setting. Both Friend of Friend and Intersecting Group Count are reported. $p = 0.5$, $|V_s|=3$}
\label{table:varying-k-1000}
\vspace{-4mm}
\resizebox{\textwidth}{!}{ %
\begin{tabular}{ | c | c | c | ccc | ccc | ccc | ccc | ccc |}\hline
\multirow{2}{*}
{\begin{tabular}[c]{ @ {}c @ {}}Candidate\\ Edges\end{tabular}} &
\multirow{2}{*}{k}
&  \multirow{2}{*}{\begin{tabular}[c]{ @ {} c @ {}}Initial\\ Disparity\end{tabular}}
&  \multicolumn{3}{c |}{LP - Advanced}
& \multicolumn{3}{c |}{Continuous Greedy}
&  \multicolumn{3}{c |}{IRFA}
& \multicolumn{3}{c |}{SpGreedy}
& \multicolumn{3}{c |}{ACR - FoF}
\\ \cline{4 - 18}
& & & \multicolumn{1}{c |}{Lift} & \multicolumn{1}{c |}{Disparity  } & Time
& \multicolumn{1}{c |}{Lift} & \multicolumn{1}{c |}{Disparity} & Time
& \multicolumn{1}{c |}{Lift} & \multicolumn{1}{c |}{Disparity} & Time
& \multicolumn{1}{c |}{Lift} & \multicolumn{1}{c |}{Disparity} & Time
& \multicolumn{1}{c |}{Lift} & \multicolumn{1}{c |}{Disparity} & Time
\\ \hline
\multirow{4}{*}{FoF}
& 2 & 21.52$\pm$1.2 & \multicolumn{1}{c |}{\textbf{84.6$\pm$31.66}} & \multicolumn{1}{c |}{\textbf{0.4$\pm$1.3}} & 71.0$\pm$15.51  & \multicolumn{1}{c |}{46.81$\pm$16.56} & \multicolumn{1}{c |}{14.81$\pm$3.8} & 56.53$\pm$16.14  & \multicolumn{1}{c |}{
23.4$\pm$13.69} & \multicolumn{1}{c |}{17.16$\pm$3.66} & \textbf{6.02$\pm$1.95}  & \multicolumn{1}{c |}{21.05$\pm$10.86} & \multicolumn{1}{c |}{12.81$\pm$3.17} & 210.27$\pm$63.42  & \multicolumn{1}{c |}{27.41$\pm$23.96} & \multicolumn{1}{c |}{1
7.31$\pm$4.17} & 21.09$\pm$18.5
\\ \cline{2 - 18}

& 3 & 21.52$\pm$1.2 & \multicolumn{1}{c |}{\textbf{100.4$\pm$40.76}} & \multicolumn{1}{c |}{\textbf{0.23$\pm$0.65}} & 74.85$\pm$22.83  & \multicolumn{1}{c |}{51.97$\pm$16.02} & \multicolumn{1}{c |}{14.7$\pm$3.18} & 52.77$\pm$12.37  & \multicolumn{1}{c
 |}{33.18$\pm$20.1} & \multicolumn{1}{c |}{17.05$\pm$4.31} & \textbf{8.08$\pm$3.0}  & \multicolumn{1}{c |}{30.78$\pm$12.77} & \multicolumn{1}{c |}{11.03$\pm$2.93} & 311.1$\pm$90.71  & \multicolumn{1}{c |}{36.79$\pm$29.53} & \multicolumn{1}{c |}
{16.19$\pm$3.87} & 20.97$\pm$17.07
\\ \cline{2 - 18}

& 4 & 21.52$\pm$1.2 & \multicolumn{1}{c |}{\textbf{109.93$\pm$38.03}} & \multicolumn{1}{c |}{\textbf{0.54$\pm$1.24}} & 72.04$\pm$21.03  & \multicolumn{1}{c |}{56.23$\pm$15.54} & \multicolumn{1}{c |}{15.37$\pm$3.5} & 56.59$\pm$16.73  & \multicolumn{1}{
c |}{39.83$\pm$21.99} & \multicolumn{1}{c |}{16.29$\pm$4.23} & \textbf{13.31$\pm$6.23}  & \multicolumn{1}{c |}{37.67$\pm$20.37} & \multicolumn{1}{c |}{10.28$\pm$2.78} & 400.14$\pm$121.0  & \multicolumn{1}{c |}{43.93$\pm$32.43} & \multicolumn{1}
{c |}{15.29$\pm$3.65} & 21.41$\pm$14.44
\\ \cline{2 - 18}

& 5 & 21.52$\pm$1.2 & \multicolumn{1}{c |}{\textbf{123.06$\pm$44.94}} & \multicolumn{1}{c |}{\textbf{0.49$\pm$1.07}} & 66.04$\pm$17.58  & \multicolumn{1}{c |}{58.71$\pm$15.22} & \multicolumn{1}{c |}{14.53$\pm$3.18} & 52.47$\pm$11.13  & \multicolumn{1}
{c |}{47.65$\pm$23.39} & \multicolumn{1}{c |}{15.26$\pm$2.88} & \textbf{14.09$\pm$5.63}  & \multicolumn{1}{c |}{49.91$\pm$28.26} & \multicolumn{1}{c |}{9.51$\pm$2.34} & 489.08$\pm$166.75  & \multicolumn{1}{c |}{53.84$\pm$41.19} & \multicolumn{1
}{c |}{14.6$\pm$3.88} & 21.64$\pm$19.15
\\ \hline
\multirow{4}{*}{IGC}
& 2 & 21.52$\pm$1.2 & \multicolumn{1}{c |}{\textbf{94.23$\pm$57.62}} & \multicolumn{1}{c |}{\textbf{2.52$\pm$5.01}} & 95.87$\pm$41.96  & \multicolumn{1}{c |}{61.88$\pm$36.96} & \multicolumn{1}{c |}{12.93$\pm$5.29} & 49.06$\pm$18.41  & \multicolumn{1}{
c |}{24.96$\pm$16.3} & \multicolumn{1}{c |}{17.98$\pm$3.39} & \textbf{5.28$\pm$2.33}  & \multicolumn{1}{c |}{22.6$\pm$19.52} & \multicolumn{1}{c |}{13.74$\pm$4.73} & 198.21$\pm$91.44  & \multicolumn{1}{c |}{19.69$\pm$18.18} & \multicolumn{1}{c 
|}{18.2$\pm$4.6} & 33.63$\pm$31.67
\\ \cline{2-18}
& 3 & 21.52$\pm$1.2 & \multicolumn{1}{c |}{\textbf{109.88$\pm$60.69}} & \multicolumn{1}{c |}{\textbf{1.14$\pm$2.19}} & 141.6$\pm$61.67  & \multicolumn{1}{c |}{66.32$\pm$41.25} & \multicolumn{1}{c |}{13.16$\pm$3.58} & 67.38$\pm$23.32  & \multicolumn{1}
{c |}{30.32$\pm$17.49} & \multicolumn{1}{c |}{17.04$\pm$4.23} & \textbf{8.64$\pm$3.57}  & \multicolumn{1}{c |}{38.56$\pm$37.75} & \multicolumn{1}{c |}{11.75$\pm$4.9} & 377.39$\pm$113.57  & \multicolumn{1}{c |}{23.65$\pm$16.25} & \multicolumn{1}
{c |}{17.55$\pm$3.44} & 40.32$\pm$33.28
\\ \cline{2-18}
& 4 & 21.52$\pm$1.2 & \multicolumn{1}{c |}{\textbf{121.99$\pm$64.27}} & \multicolumn{1}{c |}{\textbf{1.35$\pm$2.35}} & 137.22$\pm$76.26  & \multicolumn{1}{c |}{73.2$\pm$45.57} & \multicolumn{1}{c |}{13.01$\pm$3.68} & 66.66$\pm$20.64  & \multicolumn{1}
{c |}{36.55$\pm$19.95} & \multicolumn{1}{c |}{14.92$\pm$4.3} & \textbf{14.37$\pm$5.74}  & \multicolumn{1}{c |}{48.81$\pm$46.7} & \multicolumn{1}{c |}{11.18$\pm$4.59} & 490.52$\pm$134.02  & \multicolumn{1}{c |}{32.0$\pm$19.68} & \multicolumn{1}{
c |}{15.81$\pm$4.2} & 37.7$\pm$32.0
\\ \cline{2-18}
& 5 & 21.52$\pm$1.2 & \multicolumn{1}{c |}{\textbf{133.82$\pm$72.74}} & \multicolumn{1}{c |}{\textbf{0.88$\pm$1.91}} & 133.89$\pm$71.11  & \multicolumn{1}{c |}{72.9$\pm$44.3} & \multicolumn{1}{c |}{13.21$\pm$3.77} & 66.84$\pm$20.57  & \multicolumn{1}{
c |}{43.76$\pm$28.86} & \multicolumn{1}{c |}{16.17$\pm$4.68} & \textbf{15.43$\pm$4.31}  & \multicolumn{1}{c |}{56.26$\pm$47.34} & \multicolumn{1}{c |}{10.09$\pm$5.01} & 589.39$\pm$175.17  & \multicolumn{1}{c |}{40.94$\pm$29.59} & \multicolumn{1
}{c |}{15.57$\pm$3.91} & 44.88$\pm$40.27
\\ \hline
\end{tabular}}
\end{table*}
\begin{table*}[]
\caption{Comparing Lift (in \%), Disparity (in \%), and Runtime (in seconds), on 500 node Antelope Valley data sets, of LP-Advanced and all baselines, varying $p$. 20 trials are used at each setting. Both Friend of Friend and Intersecting Group Count are reported. $k = 3$, $|V_s|=3$}
\vspace{-4mm}
\label{table:varying-p-500}
\resizebox{\textwidth}{!}{ %
\begin{tabular}{ | c | c | c | ccc | ccc | ccc | ccc | ccc |}\hline
\multirow{2}{*}
{\begin{tabular}[c]{ @ {}c @ {}}Candidate\\ Edges\end{tabular}} &
\multirow{2}{*}{p} 
&  \multirow{2}{*}{\begin{tabular}[c]{ @ {} c @ {}}Initial\\ Disparity\end{tabular}}
&  \multicolumn{3}{c |}{LP - Advanced}
& \multicolumn{3}{c |}{Continuous Greedy}
&  \multicolumn{3}{c |}{IRFA}
& \multicolumn{3}{c |}{SpGreedy}
& \multicolumn{3}{c |}{ACR - FoF}
\\ \cline{4 - 18}
& & & \multicolumn{1}{c |}{Lift} & \multicolumn{1}{c |}{Disparity  } & Time
& \multicolumn{1}{c |}{Lift} & \multicolumn{1}{c |}{Disparity} & Time
& \multicolumn{1}{c |}{Lift} & \multicolumn{1}{c |}{Disparity} & Time
& \multicolumn{1}{c |}{Lift} & \multicolumn{1}{c |}{Disparity} & Time
& \multicolumn{1}{c |}{Lift} & \multicolumn{1}{c |}{Disparity} & Time
\\ \hline
\multirow{4}{*}{FoF}
& 0.3 & 71.83$\pm$23.64 & \multicolumn{1}{c |}{\textbf{87.46$\pm$14.25}} & \multicolumn{1}{c |}{\textbf{1.81$\pm$4.97}} & 12.71$\pm$5.56  & \multicolumn{1}{c |}{60.41$\pm$10.65} & \multicolumn{1}{c |}{46.4$\pm$16.75} & 31.68$\pm$9.88  & \multicolumn{1}{
c |}{28.87$\pm$7.05} & \multicolumn{1}{c |}{56.29$\pm$21.74} & 3.96$\pm$1.35  & \multicolumn{1}{c |}{21.47$\pm$6.74} & \multicolumn{1}{c |}{44.72$\pm$13.83} & 43.04$\pm$10.62  & \multicolumn{1}{c |}{25.95$\pm$9.29} & \multicolumn{1}{c 
|}{54.46$\pm$22.14} & \textbf{1.88$\pm$0.91}
\\ \cline{2 - 18}
& 0.5 & 34.2$\pm$0.34 & \multicolumn{1}{c |}{\textbf{80.25$\pm$12.1}} & \multicolumn{1}{c |}{\textbf{0.42$\pm$1.0}} & 15.57$\pm$17.33  & \multicolumn{1}{c |}{51.94$\pm$8.82} & \multicolumn{1}{c |}{25.58$\pm$4.99} & 29.44$\pm$7.96  & \multicolumn{1}{c |}
{25.48$\pm$10.53} & \multicolumn{1}{c |}{26.92$\pm$8.62} & 3.55$\pm$1.59  & \multicolumn{1}{c |}{20.11$\pm$9.03} & \multicolumn{1}{c |}{21.08$\pm$6.1} & 42.01$\pm$9.7  & \multicolumn{1}{c |}{27.06$\pm$9.0} & \multicolumn{1}{c |}{24.98$\pm$7.13} & \textbf{2.18$\pm$0.95}
\\ \cline{2 - 18}
& 0.7 & 14.62$\pm$3.93 & \multicolumn{1}{c |}{\textbf{50.37$\pm$9.14}} & \multicolumn{1}{c |}{\textbf{0.08$\pm$0.21}} & 9.95$\pm$4.9  & \multicolumn{1}{c |}{29.13$\pm$4.37} & \multicolumn{1}{c |}{11.52$\pm$4.13} & 32.32$\pm$9.24  & \multicolumn{1}{c |}{
16.27$\pm$5.24} & \multicolumn{1}{c |}{13.11$\pm$6.05} & 3.84$\pm$1.53  & \multicolumn{1}{c |}{12.26$\pm$5.87} & \multicolumn{1}{c |}{9.26$\pm$4.06} & 36.4$\pm$8.88  & \multicolumn{1}{c |}{19.64$\pm$5.85} & \multicolumn{1}{c |}{10.45$\pm$4.09} & \textbf{1.51$\pm$0.69}
\\ \cline{2 - 18}
& 0.9 & 4.07$\pm$2.09 & \multicolumn{1}{c |}{\textbf{15.1$\pm$2.76}} & \multicolumn{1}{c |}{\textbf{0.48$\pm$0.85}} & 14.84$\pm$16.96  & \multicolumn{1}{c |}{8.7$\pm$1.08} & \multicolumn{1}{c |}{3.66$\pm$2.1} & 25.53$\pm$11.39  & \multicolumn{1}{c |}{6.
11$\pm$2.13} & \multicolumn{1}{c |}{3.41$\pm$2.18} & 3.39$\pm$1.47  & \multicolumn{1}{c |}{4.01$\pm$1.68} & \multicolumn{1}{c |}{2.84$\pm$1.84} & 28.07$\pm$9.99  & \multicolumn{1}{c |}{8.34$\pm$2.21} & \multicolumn{1}{c |}{2.82$\pm$1.1
9} & \textbf{1.09$\pm$0.82}
\\ \hline
\multirow{4}{*}{IGC}
& 0.3 & 71.83$\pm$23.64 & \multicolumn{1}{c |}{\textbf{86.45$\pm$17.63}} & \multicolumn{1}{c |}{\textbf{2.18$\pm$4.88}} & 14.17$\pm$3.49  & \multicolumn{1}{c |}{68.84$\pm$13.63} & \multicolumn{1}{c |}{41.86$\pm$16.89} & 37.06$\pm$10.2  & \multicolumn{1}
{c |}{31.18$\pm$13.43} & \multicolumn{1}{c |}{54.57$\pm$24.55} & 2.99$\pm$1.01  & \multicolumn{1}{c |}{26.65$\pm$11.2} & \multicolumn{1}{c |}{41.46$\pm$14.73} & 54.23$\pm$11.65  & \multicolumn{1}{c |}{20.83$\pm$7.33} & \multicolumn{1}{
c |}{57.12$\pm$21.12} & \textbf{2.57$\pm$0.99}
\\ \cline{2-18}
& 0.5 & 34.2$\pm$0.34 & \multicolumn{1}{c |}{\textbf{83.69$\pm$13.9}} & \multicolumn{1}{c |}{\textbf{0.39$\pm$1.0}} & 16.69$\pm$20.23  & \multicolumn{1}{c |}{60.19$\pm$10.61} & \multicolumn{1}{c |}{20.44$\pm$8.15} & 29.03$\pm$7.98  & \multicolumn{1}{c |
}{26.15$\pm$10.93} & \multicolumn{1}{c |}{30.29$\pm$8.95} & 2.91$\pm$0.7  & \multicolumn{1}{c |}{23.62$\pm$11.58} & \multicolumn{1}{c |}{19.89$\pm$7.09} & 52.18$\pm$10.99  & \multicolumn{1}{c |}{20.04$\pm$6.29} & \multicolumn{1}{c |}{2
6.28$\pm$7.18} & \textbf{2.25$\pm$0.67}
\\ \cline{2-18}
& 0.7 & 14.62$\pm$3.93 & \multicolumn{1}{c |}{\textbf{49.24$\pm$7.84}} & \multicolumn{1}{c |}{\textbf{0.56$\pm$1.5}} & 10.77$\pm$2.95  & \multicolumn{1}{c |}{35.79$\pm$7.26} & \multicolumn{1}{c |}{10.4$\pm$5.15} & 35.62$\pm$9.39  & \multicolumn{1}{c |}{
13.58$\pm$3.14} & \multicolumn{1}{c |}{12.53$\pm$4.43} & 3.91$\pm$1.01  & \multicolumn{1}{c |}{15.33$\pm$8.89} & \multicolumn{1}{c |}{8.38$\pm$3.9} & 51.63$\pm$9.15  & \multicolumn{1}{c |}{13.84$\pm$4.33} & \multicolumn{1}{c |}{10.95$\pm$4.06} & \textbf{1.98$\pm$0.69}
\\ \cline{2-18}
& 0.9 & 4.07$\pm$2.09 & \multicolumn{1}{c |}{\textbf{15.21$\pm$2.56}} & \multicolumn{1}{c |}{\textbf{0.34$\pm$0.77}} & 24.08$\pm$29.55  & \multicolumn{1}{c |}{11.07$\pm$2.1} & \multicolumn{1}{c |}{3.33$\pm$2.05} & 29.7$\pm$8.09  & \multicolumn{1}{c |}{4
.76$\pm$2.15} & \multicolumn{1}{c |}{3.48$\pm$2.06} & 2.81$\pm$1.01  & \multicolumn{1}{c |}{4.57$\pm$2.0} & \multicolumn{1}{c |}{3.01$\pm$2.03} & 47.51$\pm$9.84  & \multicolumn{1}{c |}{5.72$\pm$2.04} & \multicolumn{1}{c |}{3.01$\pm$1.6
6} & \textbf{2.38$\pm$0.99}
\\ \hline
\end{tabular}}
\end{table*}
\begin{table*}[]
\caption{Comparing Lift (in \%), Disparity (in \%), and Runtime (in seconds), on 1000 node Pokec samples, of LP-Advanced and all baselines, varying $p$. 20 trials are used at each setting. Both Friend of Friend and Intersecting Group Count are reported. $k = 3$, $|V_s|=3$}
\label{table:varying-p-1000}
\vspace{-4mm}
\resizebox{\textwidth}{!}{ %
\begin{tabular}{ | c | c | c | ccc | ccc | ccc | ccc | ccc |}\hline
\multirow{2}{*}
{\begin{tabular}[c]{ @ {}c @ {}}Candidate\\ Edges\end{tabular}} &
\multirow{2}{*}{p} & 
\multirow{2}{*}{\begin{tabular}[c]{ @ {} c @ {}}Initial\\ Disparity\end{tabular}}
&  \multicolumn{3}{c |}{LP - Advanced}
& \multicolumn{3}{c |}{Continuous Greedy}
&  \multicolumn{3}{c |}{IRFA}
& \multicolumn{3}{c |}{SpGreedy}
& \multicolumn{3}{c |}{ACR - FoF}
\\ \cline{4 - 18}
& & & \multicolumn{1}{c |}{Lift} & \multicolumn{1}{c |}{Disparity  } & Time
& \multicolumn{1}{c |}{Lift} & \multicolumn{1}{c |}{Disparity} & Time
& \multicolumn{1}{c |}{Lift} & \multicolumn{1}{c |}{Disparity} & Time
& \multicolumn{1}{c |}{Lift} & \multicolumn{1}{c |}{Disparity} & Time
& \multicolumn{1}{c |}{Lift} & \multicolumn{1}{c |}{Disparity} & Time
\\ \hline
\multirow{4}{*}{FoF}
& 0.3 & 62.92$\pm$25.82 & \multicolumn{1}{c |}{\textbf{93.55$\pm$36.44}} & \multicolumn{1}{c |}{\textbf{9.62$\pm$23.26}} & 70.64$\pm$22.78  & \multicolumn{1}{c |}{61.65$\pm$17.82} & \multicolumn{1}{c |}{37.91$\pm$17.06} & 48.45$\pm$9.44  & \multicolumn{
1}{c |}{40.53$\pm$18.3} & \multicolumn{1}{c |}{41.84$\pm$17.59} & \textbf{8.44$\pm$3.57}  & \multicolumn{1}{c |}{31.88$\pm$13.2} & \multicolumn{1}{c |}{35.46$\pm$20.56} & 283.91$\pm$112.87  & \multicolumn{1}{c |}{33.8$\pm$25.25} & \multicolumn{
1}{c |}{46.94$\pm$15.32} & 21.58$\pm$19.6
\\ \cline{2 - 18}

& 0.5 & 21.52$\pm$1.2 & \multicolumn{1}{c |}{\textbf{97.67$\pm$34.85}} & \multicolumn{1}{c |}{\textbf{0.13$\pm$0.53}} & 68.58$\pm$20.51  & \multicolumn{1}{c |}{53.58$\pm$16.69} & \multicolumn{1}{c |}{14.86$\pm$3.84} & 54.17$\pm$9.07  & \multicolumn{1}{c
 |}{33.16$\pm$17.94} & \multicolumn{1}{c |}{18.09$\pm$4.0} & \textbf{9.46$\pm$4.31}  & \multicolumn{1}{c |}{30.78$\pm$12.77} & \multicolumn{1}{c |}{11.03$\pm$2.93} & 293.93$\pm$103.49  & \multicolumn{1}{c |}{36.79$\pm$29.53} & \multicolumn{1}{c
 |}{16.19$\pm$3.87} & 23.0$\pm$20.99
\\ \cline{2 - 18}

& 0.7 & 8.43$\pm$2.46 & \multicolumn{1}{c |}{\textbf{58.03$\pm$20.8}} & \multicolumn{1}{c |}{\textbf{0.08$\pm$0.34}} & 74.43$\pm$32.97  & \multicolumn{1}{c |}{29.24$\pm$7.95} & \multicolumn{1}{c |}{6.17$\pm$2.21} & 49.63$\pm$8.95  & \multicolumn{1}{c |}
{20.31$\pm$11.63} & \multicolumn{1}{c |}{6.59$\pm$2.21} & \textbf{9.55$\pm$3.46}  & \multicolumn{1}{c |}{21.07$\pm$13.59} & \multicolumn{1}{c |}{4.58$\pm$1.37} & 283.88$\pm$98.09  & \multicolumn{1}{c |}{23.08$\pm$17.19} & \multicolumn{1}{c |}{6
.46$\pm$2.79} & 18.93$\pm$15.25
\\ \cline{2 - 18}

& 0.9 & 2.16$\pm$1.03 & \multicolumn{1}{c |}{\textbf{16.51$\pm$4.75}} & \multicolumn{1}{c |}{\textbf{0.06$\pm$0.18}} & 47.52$\pm$15.84  & \multicolumn{1}{c |}{8.5$\pm$1.8} & \multicolumn{1}{c |}{1.71$\pm$0.85} & 35.08$\pm$10.8  & \multicolumn{1}{c |}{5.
25$\pm$2.31} & \multicolumn{1}{c |}{1.75$\pm$0.91} & \textbf{5.65$\pm$2.54}  & \multicolumn{1}{c |}{6.03$\pm$2.38} & \multicolumn{1}{c |}{1.26$\pm$0.56} & 185.19$\pm$59.9  & \multicolumn{1}{c |}{7.36$\pm$4.86} & \multicolumn{1}{c |}{1.69$\pm$0.
94} & 13.16$\pm$10.99
\\ \hline
\multirow{4}{*}{IGC}
& 0.3 & 62.92$\pm$25.82 & \multicolumn{1}{c |}{\textbf{96.05$\pm$37.42}} & \multicolumn{1}{c |}{\textbf{9.85$\pm$23.4}} & 70.01$\pm$19.33  & \multicolumn{1}{c |}{72.25$\pm$32.56} & \multicolumn{1}{c |}{33.4$\pm$12.95} & 62.88$\pm$18.65  & \multicolumn{1
}{c |}{38.81$\pm$19.8} & \multicolumn{1}{c |}{43.53$\pm$19.17} & \textbf{8.68$\pm$3.06}  & \multicolumn{1}{c |}{36.62$\pm$27.76} & \multicolumn{1}{c |}{32.47$\pm$16.73} & 351.33$\pm$132.05  & \multicolumn{1}{c |}{24.82$\pm$17.68} & \multicolumn
{1}{c |}{48.78$\pm$20.72} & 41.8$\pm$40.85
\\ \cline{2-18}
& 0.5 & 21.52$\pm$1.2 & \multicolumn{1}{c |}{\textbf{100.23$\pm$40.08}} & \multicolumn{1}{c |}{\textbf{0.13$\pm$0.55}} & 73.25$\pm$17.7  & \multicolumn{1}{c |}{65.76$\pm$41.34} & \multicolumn{1}{c |}{13.62$\pm$4.49} & 64.79$\pm$16.75  & \multicolumn{1}{
c |}{31.56$\pm$21.74} & \multicolumn{1}{c |}{16.58$\pm$3.62} & \textbf{9.89$\pm$3.73}  & \multicolumn{1}{c |}{38.56$\pm$37.75} & \multicolumn{1}{c |}{11.75$\pm$4.9} & 359.06$\pm$120.53  & \multicolumn{1}{c |}{23.65$\pm$16.25} & \multicolumn{1}{
c |}{17.55$\pm$3.44} & 39.87$\pm$35.98
\\ \cline{2-18}
& 0.7 & 8.43$\pm$2.46 & \multicolumn{1}{c |}{\textbf{57.62$\pm$20.63}} & \multicolumn{1}{c |}{\textbf{0.06$\pm$0.21}} & 74.58$\pm$28.19  & \multicolumn{1}{c |}{37.52$\pm$20.75} & \multicolumn{1}{c |}{5.57$\pm$2.37} & 58.01$\pm$17.58  & \multicolumn{1}{c
 |}{19.43$\pm$11.0} & \multicolumn{1}{c |}{6.69$\pm$2.64} & \textbf{9.3$\pm$3.25}  & \multicolumn{1}{c |}{22.75$\pm$21.01} & \multicolumn{1}{c |}{5.07$\pm$2.85} & 332.06$\pm$130.84  & \multicolumn{1}{c |}{14.55$\pm$9.57} & \multicolumn{1}{c |}{
7.17$\pm$2.63} & 39.41$\pm$32.5
\\ \cline{2-18}
& 0.9 & 2.16$\pm$1.03 & \multicolumn{1}{c |}{\textbf{16.59$\pm$4.79}} & \multicolumn{1}{c |}{\textbf{0.09$\pm$0.2}} & 42.45$\pm$12.17  & \multicolumn{1}{c |}{10.52$\pm$4.75} & \multicolumn{1}{c |}{1.66$\pm$0.84} & 37.92$\pm$14.07  & \multicolumn{1}{c |}
{6.03$\pm$2.53} & \multicolumn{1}{c |}{1.7$\pm$0.89} & \textbf{5.65}$\pm$2.62  & \multicolumn{1}{c |}{6.9$\pm$4.65} & \multicolumn{1}{c |}{1.45$\pm$0.88} & 197.04$\pm$79.45  & \multicolumn{1}{c |}{4.66$\pm$2.68} & \multicolumn{1}{c |}{1.89$\pm$
0.96} & 21.83$\pm$19.2
\\ \hline
\end{tabular}}
\end{table*}

\stitle{\scalable, choice of hyper-parameter} 
\new{In this experiment we examine the choice of Nodes per Iteration (NpI) on the \scalable algorithm.} In each experiment, we ran \scalable on 10,000 nodes, with 5 content nodes, \new{$p=0.5$, and observed over different values of $k$ from 2-5}. We vary NpI between 200, 400, 600, and 800. For each experiment we run 5 trials and report the average. The initial average disparity across these 5 trials was $\sim7.84\%$.

\new{As demonstrated in Figure \ref{fig:ff-disparity}, as the NpI increases, the trend is that \scalable achieves a lower disparity. Equivalently, as demonstrated in Figure \ref{fig:ff-lift}, when the NpI increases, \scalable also produces a higher lift. It is thus obvious that higher NPI results in a more correct output. To balance this, as shown in \ref{fig:ff-time}, as the NpI increases, so does the time required to produce a selection. This demonstrates that the increased lift and fairness the edge selection for higher NpI comes at a cost of increased runtime.}

\begin{figure*}[!tb] 
    \begin{minipage}[t]{0.23\linewidth}
        \centering
        \includegraphics[width=\textwidth]{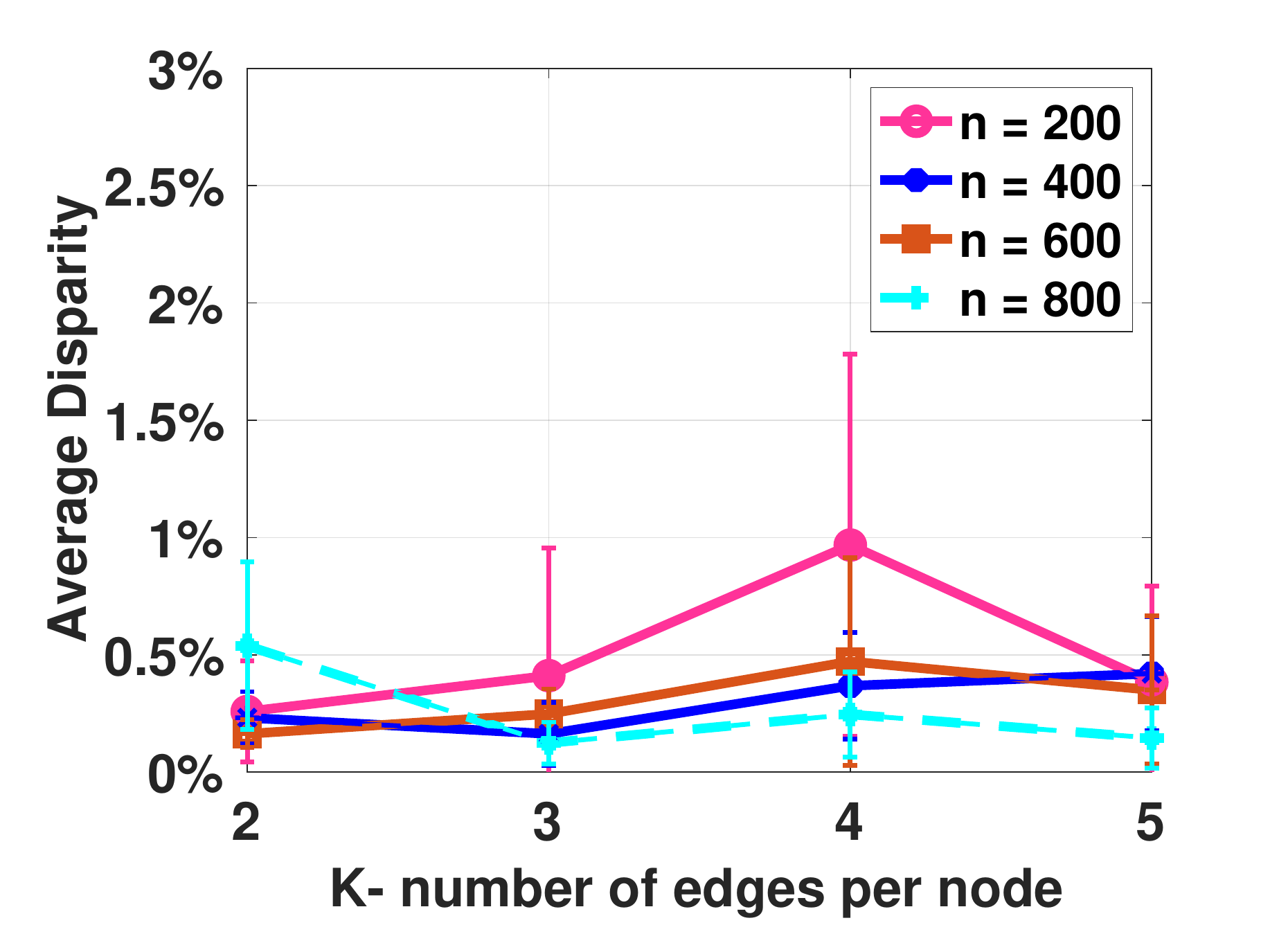}
        \caption{\small Disparity of \scalable for diff. \# nodes per iteration}
        \label{fig:ff-disparity}
    \end{minipage}
    \hfill
    \begin{minipage}[t]{0.23\linewidth}
        \centering
        \includegraphics[width=\textwidth]{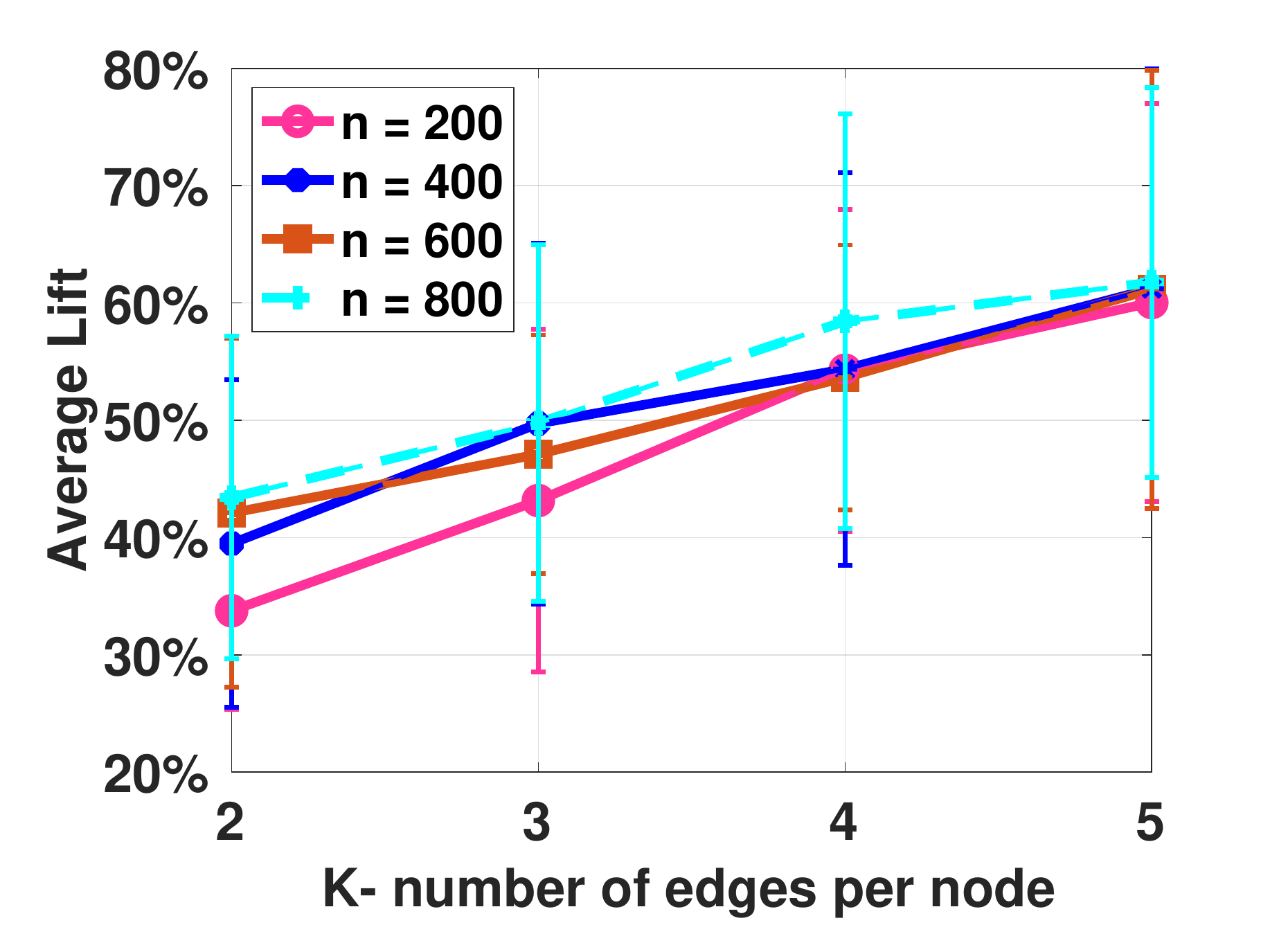}
        \caption{\small The lift of the \scalable for diff. \# nodes per iteration}
        \label{fig:ff-lift}
    \end{minipage}
    \hfill
    \begin{minipage}[t]{0.23\linewidth}
        \centering
        \includegraphics[width=\textwidth]{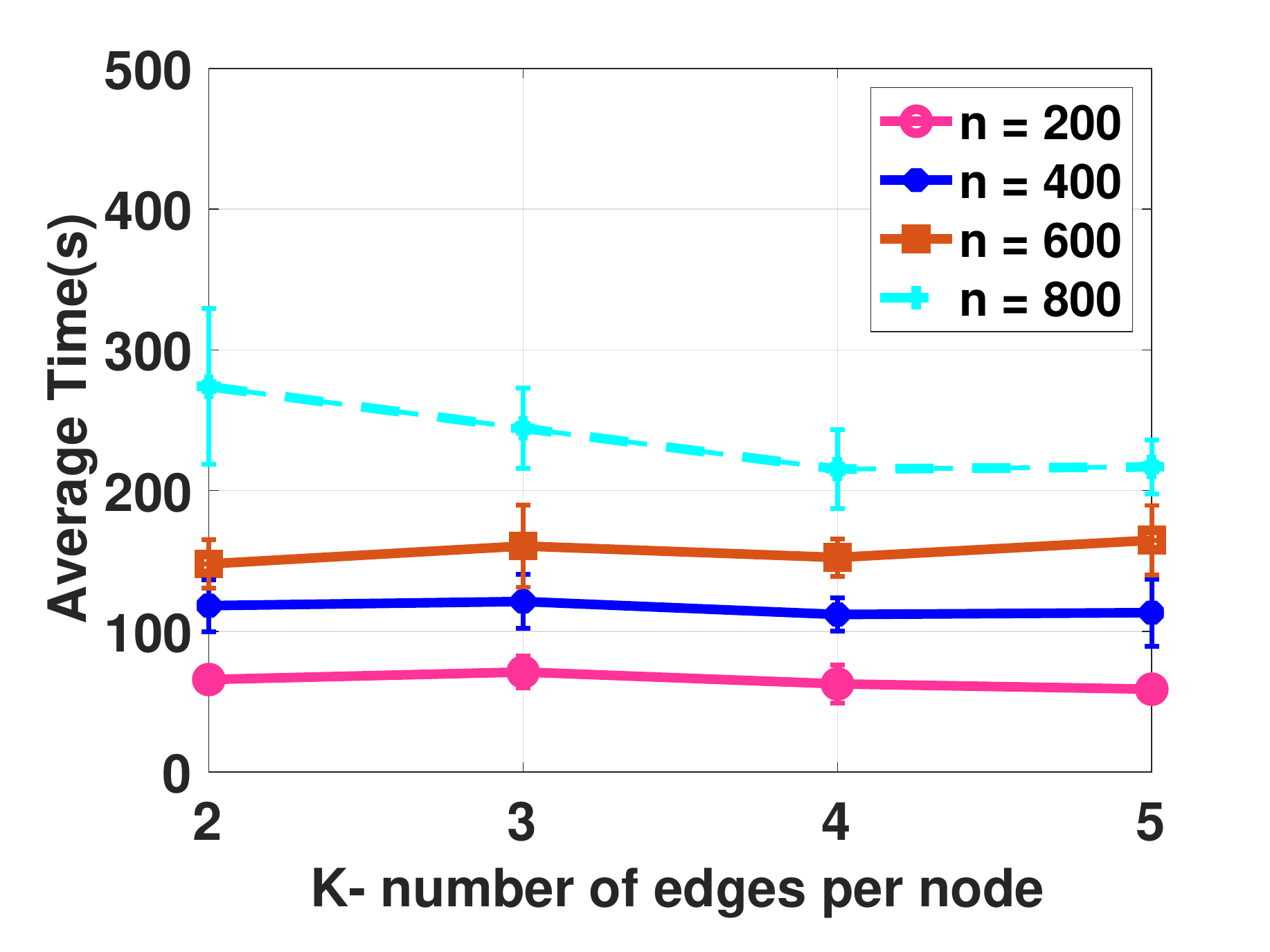}
        \caption{\small The runtime of the \scalable for diff. \# nodes per iteration}
        \label{fig:ff-time}
    \end{minipage}
\end{figure*}


\end{document}